\title{How Reversibility Can Solve Traditional Questions: The Example of %
Hereditary History-Preserving Bisimulation}%
\titlerunning{How Reversibility can Solve Traditional Questions}
\author{Clément Aubert}{School of Computer and Cyber Sciences, Augusta University, USA}{caubert@augusta.edu}{https://orcid.org/0000-0001-6346-3043}{}
\author{Ioana Cristescu}{Tarides, Paris}{ioana@tarides.com}{}{}
\authorrunning{C. Aubert and I. Cristescu}
\keywords{%
	Formal semantics,
	Process algebras and calculi,
	Distributed and reversible computation,
	Configuration structures,
 Reversible CCS, %
	Bisimulation
}
\savestack\zigzagtextstyle{\vphantom{()}\scalebox{.86666}[1]{\kern-.5pt\AC\kern-.5pt}}
\savestack\onelinetextstyle{\vphantom{()}\scalebox{1}[1]{\kern-1pt{$-$}\kern-1pt}}
\savestack\twolinetextstyle{\vphantom{()}\scalebox{1}[1]{\kern-1pt{$=$}\kern-1pt}}
\savestack\ZigZagtextstyle{%
 $\vphantom{()}\smash{\raisebox{-.002em}{$\vcenter{\hbox{%
 \stackengine{-.805em}{\zigzagtextstyle}{\zigzagtextstyle}{O}{c}{F}{F}{S}}}$}}$}
\newlength\repwidth
\newcommand\rightarrowhead{\clipbox{4pt -2pt 0pt -2pt}{$\rightarrow$}}
\newcommand\Rightarrowhead{\clipbox{4pt -2pt 0pt -2pt}{$\Rightarrow$}}
\newcommand\rrightarrowhead{\clipbox{5.5pt -2pt 0pt -2pt}{$\rightarrow\kern-8.5pt\rightarrow$}}
\newcommand\RRightarrowhead{\clipbox{5.5pt -2pt 0pt -2pt}{$\Rightarrow\kern-8.5pt\Rightarrow$}}
\newcommand\fXarrowtextstyle[2]{$\vphantom{()}\smash{\vcenter{\hbox{\kern-.03\repwidth%
 \clipbox{-.03\repwidth{} 0pt .527\repwidth{} 0pt}{#1}}}}#2$}
\savestack\fzigarrowtextstyle{\fXarrowtextstyle{\zigzagtextstyle}{\rightarrowhead}}
\savestack\fZigarrowtextstyle{\fXarrowtextstyle{\ZigZagtextstyle}{\Rightarrowhead}}
\savestack\flinearrowtextstyle{\fXarrowtextstyle{\onelinetextstyle}{\rightarrowhead}}
\savestack\fLinearrowtextstyle{\fXarrowtextstyle{\twolinetextstyle}{\Rightarrowhead}}
\savestack\fzzigarrowtextstyle{\fXarrowtextstyle{\zigzagtextstyle}{\rrightarrowhead}}
\savestack\fZZigarrowtextstyle{\fXarrowtextstyle{\ZigZagtextstyle}{\RRightarrowhead}}
\savestack\fllinearrowtextstyle{\fXarrowtextstyle{\onelinetextstyle}{\rrightarrowhead}}
\savestack\fLLinearrowtextstyle{\fXarrowtextstyle{\twolinetextstyle}{\RRightarrowhead}}
\newcommand\zigzag{\scalerel*{\zigzagtextstyle}{()}}
\newcommand\oneline{\scalerel*{\onelinetextstyle}{()}}
\newcommand\twoline{\scalerel*{\twolinetextstyle}{()}}
\newcommand\fzigarrow{\scalerel*{\fzigarrowtextstyle}{()}}
\newcommand\flinearrow{\scalerel*{\flinearrowtextstyle}{()}}
\newcommand\fLinearrow{\scalerel*{\fLinearrowtextstyle}{()}}
\newcommand\fLLinearrow{\scalerel*{\fLLinearrowtextstyle}{()}}
\newcommand\zigzagarrow[1][]{\Xarrow{\zigzag}{\fzigarrow}{#1}{-.65}}
\newcommand\linearrow[1][]{\Xarrow{\oneline}{\flinearrow}{#1}{-.65}}
\newcommand\Linearrow[1][]{\Xarrow{\twoline}{\fLinearrow}{#1}{-.40}}
\newcommand\LLinearrow[1][]{\Xarrow{\twoline}{\fLLinearrow}{#1}{-.40}}
\newcommand\Xarrow[4]{\ThisStyle{\mathrel{\Xarrowhelp{#1#2}{#3}{#1}{#4}}}}
\newcommand\Xarrowhelp[4]{%
 \setbox0=\hbox{$\SavedStyle#1$}%
 \setbox2=\hbox{$\SavedStyle_{\,\,#2\,\,}$}%
 \ifdim\wd0<\wd2\relax\Xarrowhelp{#3#1}{#2}{#3}{#4}%
 \else\stackengine{#4\LMex}{\copy0}{\copy2\,}{O}{c}{F}{T}{S}\fi%
}%
\newcommand{\efs}[1][]{\linearrow[#1]}
\newcommand{\ebs}[1][]{\zigzagarrow[#1]}
\newcommand{\efbs}[1][]{\Linearrow[#1]}
\newcommand{\redl}[1]{\efs[#1]}
\newcommand{\revredl}[1]{\ebs[#1]}
\newcommand{\fwlts}[2]{\efs[#1:#2]}
\newcommand{\bwlts}[2]{\ebs[#1:#2]}
\newcommand{\fbwlts}[2]{\efbs[#1:#2]}
\newcommand{\tfbwlts}{\LLinearrow}
\newcommand{\congru}{\equiv}
\DeclareMathOperator{\co}{co}
\newcommand{\HPB}{\textnormal{HPB}\xspace}
\newcommand{\HHPB}{\textnormal{HHPB}\xspace}
\newcommand{\HorHPB}{\textnormal{(H)HPB}\xspace}
\newcommand{\BF}{\textnormal{B\&F%
}\xspace}
\newcommand{\SBF}{\textnormal{SB\&F%
}\xspace}
\newcommand{\FR}{\textnormal{FR%
}\xspace}
\newcommand{\emptymem}{\emptyset}
\newcommand{\orig}[1]{O_{#1}} \newcommand{\fork}{\curlyvee}
\newcommand{\bs}{\backslash}
\newcommand{\cat}[1]{\mathbb{#1}}
\newcommand{\catcs}{\cat{C}}
\newcommand{\catics}{\cat{D}}
\DeclareMathOperator{\ids}{\mathsf{I}}
\newcommand{\card}[1]{|#1|}
\DeclareMathOperator{\id}{id}
\newcommand{\iden}{\mathsfit{m}}
\newcommand{\iconf}{\ensuremath{\mathcal{I}}}
\DeclarePairedDelimiter{\encc}{\llbracket}{\rrbracket}
\newcommand{\enc}{\encc}
\newcommand{\nm}[1]{\ensuremath{\mathrm{nm}(#1)}}
\newcommand{\names}{\ensuremath{\mathsf{N}}}
\newcommand{\rproc}{\ensuremath{\mathsf{R}}}
\newcommand{\labels}{\ensuremath{\mathsf{L}}}
\newcommand{\iso}{\ensuremath{\cong}}
\newcommand{\out}[1]{\overline{#1}}
\newcommand{\power}{\ensuremath{\mathcal{P}}}
\newcommand{\conf}{\ensuremath{\mathcal{C}}}
\newcommand{\rel}{\ensuremath{\mathcal{R}}}
\DeclarePairedDelimiter{\mem}{\langle}{\rangle}
\newcommand{\labl}{\ell}
\newcommand{\frestr}[1]{\mathord{\upharpoonright_{#1}}}
\newcommand{\crestr}[1]{\mathord{\upharpoonleft_{#1}}}
\newcommand{\confzero}{\mathbf{0}}
\newcommand{\iconfzero}{\mathbf{0}}
\newcommand{\nat}{\mathbb{N}}
\newcommand{\integer}{\nat}
\newcommand{\setst}{ \mid }
\newcommand{\BNFsepa}{\enspace \Arrowvert \enspace}
\newcommand{\functoric}{\ensuremath{\mathcal{F}}}
\newcommand{\functorci}{\ensuremath{\mathcal{S}}}
\DeclarePairedDelimiter{\encm}{\lfloor}{\rfloor}
\DeclarePairedDelimiter{\encmp}{\lceil}{\rceil}
\newcommand{\namelist}[1]{\overrightarrow{#1}}
	\def\itemautorefname~#1\null{\textcolor{darkgray}{\sffamily\bfseries\upshape\mathversion{bold}#1.}\null}
\DeclareMathOperator{\post}{{::}} %
\newcommand*{\eg}{e.g.\@\xspace}
\newcommand*{\cf}{cf.\@\xspace}
\newcommand*{\ie}{i.e.\@\xspace}
\newcommand*{\etc}{%
 \@ifnextchar{.}%
 {etc}%
 {etc.\@\xspace}%
}
\renewcommand{\st}{s.t.\@\xspace}
\newcommand{\withoutlog}{without loss of generality\xspace}
\newcommand*{\resp}{resp.\@\xspace}
\newcommand*{\wrt}{w.r.t.\@\xspace}
\newcommand{\lop}{l\&o-p\xspace}
\newcommand\hypo{\Hypo}
\newcommand\infer{\Infer}
\begin{document}

\hideLIPIcs
\maketitle
\begin{abstract}
	Reversible computation opens up the possibility of overcoming some of the hardware's current physical limitations.
	It also offers theoretical insights, as it enriches multiple paradigms and models of computation, and sometimes retrospectively enlightens them.
	Concurrent reversible computation, for instance, offered interesting extensions to the Calculus of Communicating Systems, but was still lacking a natural and pertinent bisimulation to study processes equivalences.
	Our paper formulates an equivalence exploiting the two aspects of reversibility: backward moves and memory mechanisms.
	This bisimulation captures classical equivalences relations for denotational models of concurrency (History- and hereditary history-preserving bisimulation, \HorHPB), that were up to now only partially characterized by process algebras.
	This result gives an insight on the expressiveness of reversibility, as both backward moves and a memory mechanism---providing \enquote{backward determinism}---are needed to capture \HHPB.

\end{abstract}

\section{Introduction}
\label{sec:intro}

\subparagraph*{The Benefits of Reversible Computation}
Future progresses in computing may heavily rely on reversibility~\cite{Frank2017}.
The foreseeable limitations of conventional semiconductor technology, Lauder's principle~\cite{PhysRevLett.113.190601}---promising low-energy consumption for reversible computers---and quantum computing~\cite{Nielsen2010}---intrinsically reversible and now within reach~\cite{Arute2019}---motivated a colossal push toward a better understanding of reversible computation.
Those efforts have given birth to new paradigms~\cite{Frank17}, richer models of computation (\eg for automata~\cite{Holzer2017}, Petri nets~\cite{Frutos-EscrigK2019, Philippou2018}, Turing machines~\cite{Axelsen2016}) and richer semantics, sometimes for preexisting calculus like the Calculus of Communicating Systems (CCS) and the \(\pi\)-calculus~\cite{Cristescu2013,Cristescu2015b}.
Those new perspectives sometimes additionally give in retrospect a better understanding of \enquote{traditional} (\ie, irreversible) computation, %
and our contribution illustrates this latter aspect.

\subparagraph*{Summary}
In brief terms, we offer a solution to an open problem on classes of equivalence of (non-reversible) concurrent processes thanks to reversibility.
\emph{Hereditary history-preserving bisimulation} (\HHPB) is considered \enquote{the gold standard} for establishing equivalence classes on \enquote{true} models of concurrency~\cite{Joyal1996b, Glabeek1989}.
However, no relation expressed in syntactical terms (\eg on CCS) was known to capture them, despite intensive efforts: previous results~\cite{Aubert2016jlamp, Phillips2007} characterized \HHPB on limited classes of processes, that excluded CCS terms as simple as \(a.a\), \(a + a\), \(a\mid a\) or containing similar patterns.
We prove in this paper a somewhat expected result, namely that adding mechanisms to reverse the computation and keep track of the past, thus making CCS \enquote{backwards deterministic}~\cite{Bednarczyk1991}, enables syntactical characterizations of \HHPB. %
This result uses natural reformulations of canonical CCS bisimulations, provides a \enquote{meaningful} bisimulation for reversible calculi, and furthermore validates the mechanism we use to reverse the computation in a concurrent set-up.
It also connects with previous semantics of reversible processes~\cite{Aubert2016jlamp,Graversen2018} and supports categorical treatment.

\subparagraph*{Reversing Concurrent Computation}
Reversible systems can backtrack %
and return
to a %
previous state.
Implementing reversibility %
often requires a mechanism to record the history of the execution.
Ideally, this history should be \emph{complete}, so that every forward step can be backtracked, and \emph{minimal}, so that only the relevant information is saved.
Concurrent programming languages have additional requirements: the history should be \emph{distributed}, to avoid centralization, and should prevent steps that required a synchronization with other parts of the program to backtrack without undoing this synchronization.
To fulfill those requirements, Reversible CCS (RCCS)~\cite{Danos2004,Danos2005} uses \emph{memories} attached to the threads of a process, and CCS with Communication Keys (CCSK)~\cite{Phillips2007b} \enquote{marks} each occurrence of an action.
The two calculi are equi-expressive~\cite{Lanese2019} and are conservative extensions over CCS, and in this paper we will use RCCS, to benefit from its previous operational semantics~\cite{Aubert2016jlamp}.

\subparagraph*{Equivalences for Denotational Models of Concurrency}
A theory of concurrent computation, be it reversible or irreversible, relies not only on a syntax and a model, but also on \enquote{meaningful} behavioral equivalences.
This paper
defines new bisimulations on RCCS, and proves their relevance by connecting them to
bisimulations on configuration structures~\cite{Winskel1986}, a classical denotational model for concurrency.
In configuration structures, an \emph{event} represents an execution step, and a \emph{configuration}---a set of events that occurred---represents a state.
A forward transition is then represented as moving from a configuration to one of its supersets.
Backward transitions have a \enquote{built-in} representation: it suffices to move from a configuration to one of its subsets.
Among the behavioral equivalences on configuration structures,
some of them, like \emph{history-} and \emph{hereditary history-preserving bisimulation} (\HPB and \HHPB)~\cite{Baldan2014, Bednarczyk1991, Froschle1999, Phillips2012, Phillips2014, Glabbek1996}, use that \enquote{built-in} notion of reversibility.
\HPB and \HHPB are usually regarded as the \enquote{canonical}~\cite[p.~94]{Phillips2007} or \enquote{strongest desirable true}~\cite[p.~2]{Phillips2014} concurrency equivalences
because they preserve causality, branching, their interplay, and are the coarser (or finer, for \HPB) reasonable equivalences with these properties~\cite[p.~309]{Glabbeek2001}.
\HHPB also naturally equal path bisimulations~\cite{Joyal1996b}, an elegant notion of equivalence relying on category theory, and can be captured concisely using event identifier logic~\cite{Phillips2014}.
However, no relation on labeled transition systems (LTS), being CCS, RCCS, or CCSK, was known to capture it on all processes.

\subparagraph*{Encoding Reversible Processes in Configuration Structures}

There have been multiple attempts%
~\cite{Aubert2016jlamp,Phillips2007b} to transfer equivalences defined on denotational models, by construction adapted for reversibility, into reversible process algebras.
Showing that an equivalence on configuration structures corresponds to one on processes starts by defining an encoding of the latter in the former. %
Encodings, for RCCS~\cite{Aubert2016jlamp} and CCSK~\cite{Graversen2018} alike, generally consider only \emph{reachable} reversible processes, and map them to one particular configuration in the encoding of its \emph{origin}, the CCS \enquote{memory-less} process to which it can backtrack.
We come back to the relation between this encoding and our current work in \hyperref[sec:conclusion]{conclusion}.

\subparagraph*{Contribution}
This paper improves on previous results~\cite{Aubert2016jlamp,Phillips2007} %
by defining relations on syntactical terms that correspond to \HHPB on \emph{all} processes, and hence are proven to be \enquote{the right} bisimulation to study reversible computation.
This result relies on encoding the processes' memories into \emph{identified configuration structures}, an extension to configuration structures that constitutes our second contribution.%

RCCS is exploited as a syntactic tool to decide \HHPB for CCS processes, by endowing them with
\begin{enumerate*}
	\item backtracking capabilities and
	\item a memory mechanism%
\end{enumerate*}%
.
This gives a precious insight on the expressiveness of reversibility, as we show that having only one of those tools %
is not enough to define relations that capture \HHPB.
The memories attached to a process are no longer only a syntactic layer to implement reversibility, but become essential to define and capture equivalences, thanks to the \enquote{backward determinism} they provide.

Recursion is not treated in this paper: its treatment amounts to unfolding processes and structures up to a certain level, and it strongly suggests that there are not much insight to gain from its development, that we reserve as a technical appendix for future work.

\subparagraph*{Related Work}
The correspondences between \HHPB and back-and-forth bisimulations for restricted classes of processes~\cite{Aubert2016jlamp,Phillips2007} inspired some of the work presented here.
This correspondence has been studied in the denotational world, on structures without auto-concurrency~\cite{Bednarczyk1991}.

The insight that \enquote{backward determinism} is essential to capture \HHPB is also used in the event identifier logic~\cite{Phillips2014} and in one if its sub-system~\cite{Baldan2014a}, that both
characterize %
\HHPB %
without restricting the structures considered.
As in our case, the authors exploits %
identifiers to eliminate constraints due to label \enquote{duplication}. %
Our work is similar, taking place in the operational world instead of the logical one.
Note that we do not introduce extra syntax constructions on the LTS, but simply use the ones provided by RCCS, that we use \enquote{as is}.

In the operational semantics, the previous attempts to characterize \HHPB wrongly focused on the backward capabilities of processes and considered the memory mechanisms only as a tool to achieve it.
As a result, those bisimulations %
could not decide that the encoding of %
\((a.a) \mid b\) and \(a \mid a \mid b\) were not \HHPB.
Instead, their characterizations of \HHPB were applicable only on \enquote{non-repeating}~\cite{Phillips2007} or \enquote{singly-labeled}~\cite{Aubert2016jlamp} processes. %
By integrating the memory mechanism, the equivalence relation we introduce and consider can correctly determine that these two processes are not \HHPB, as we detail in \autoref{ex:not_hhpb}.

Finally, our approach shares similarity with causal trees---in the sense that only \emph{part of the execution}, its \enquote{past}, is encoded in a denotational representation---which were used to characterize \HPB for CCS terms~\cite{Darondeau1990a}.
Capturing \HorHPB with novel techniques can also impact model-checking and decidability issues~\cite{Baldan2014a}, but we leave this as future work.

\section{Denotational and Operational Models for the Reversible CCS}
\label{sec:prelim}

We write \(\subseteq\) the set inclusion, \(\mathcal{P}\) the power set, \(\bs\) the set difference, %
\(f: A \to B\) (\resp \(f: A \rightharpoonup B\)) for the (\resp partial) function from \(A\) to \(B\),
\(f \frestr{C}\) the restriction of \(f\) to \(C \subseteq A\), and \(f \cup \{a \mapsto b\}\) the function defined as \(f\) on \(A\) that additionally maps \(a \notin A\) to \(b\).

\subsection{Identified Configuration Structures}
\label{sec:cs}

Labeled configuration structures~\cite{Glabbeek2009,Winskel1982} are a classical non-interleaving model of concurrent computation, that we enrich here with identifiers.
We then show that they support categorical understanding and the usual operations just as well as their \enquote{un-identified} variations.

\begin{definition}[(Identified) Configuration structure]
	\label{def:conf_str}
	A \emph{configuration structure} \(\conf\) is a tuple \((E, C, L, \labl)\) where \(E\) is a set of events, \(L\) is a set of \emph{labels}, \(\labl : E \to L \) is a labeling function and \(C \subseteq \power(E)\) is a set of subsets satisfying:
	\begin{align*}
		 & \forall x \in C, \forall e \in x, \exists z \in C\text{ finite}, e\in z, z \subseteq x \tag{Finiteness}\label{Finiteness} \\
		 & \forall x \in C, \forall d, e \in x, d \neq e\Rightarrow
		\exists z \in C, z \subseteq x, d\in z \iff e\notin z\tag{Coincidence Freeness}\label{CoincidenceFreeness}                   \\
		 & \forall X \subseteq C, \exists y\in C \text{ finite}, \forall x \in X, x\subseteq y \Rightarrow {\textstyle \bigcup }     %
		X \in C \tag{Finite Completeness}\label{FiniteCompleteness}                                                                  \\
		 & \forall x,y\in C, x\cup y\in C \Rightarrow x\cap y\in C \tag{Stability}\label{Stability}
		\shortintertext{If \(\conf\) also has %
			a set of \emph{identifiers} \(I\) and an \emph{identifying function} \(\iden : E \to I\) satisfying:}
		 & \forall x \in C, \forall e_1, e_2 \in x, \iden(e_1) \neq \iden(e_2) \tag{Collision Freeness}\label{eq:collision}
	\end{align*}
	then we write \(\conf \oplus \iden\), %
	and say that \(\iconf = (E, C, L, \labl, I, \iden)\) is an \emph{identified configuration structure}, or \emph{\(\iconf\)-structure}, and call \(\conf\) \emph{the underlying configuration structure} of \(\iconf\).

	We denote with \(\confzero\) both the \(\iconf\)-structure and its underlying configuration structure with \(C = \{\emptyset\}\), and, for \(x, y \in C\), we write \(x \redl{e} y\) and \(y \revredl{e}x\) if \(x=y\cup\{e\}\).
\end{definition}

We omit the identifiers when representing \(\iconf\)-structures %
and write the label for the event (with a subscript if multiple events shares a label).

\begin{definition}[Causality, concurrency, and maximality]
	\label{def:causality}
	For all \(\iconf\), \(x \in C\) and \(d, e \in x\), the \emph{causality relation on \(x\)} is given by \(d <_{x} e\) iff \(d \leqslant_x e\) and \(d \neq e\), where \(d \leqslant_x e\) iff for all \(y \in C\) with \(y \subseteq x\), we have \(e \in y \Rightarrow d \in y\).
	The \emph{concurrency relation on \(x\)} %
	is given by \(d \co_x e\) iff \(\neg (d <_x e \vee e <_x d) \).
	Finally, \(x\) is \emph{a maximal configuration in \(\iconf\)} if \(\forall y \in C\), \(x=y\) or \(x\not\subseteq y\).
\end{definition}

\begin{figure}
	{\centering
		\begin{minipage}[b]{.29\linewidth}
			\begin{tikzpicture}\node (emptyset) at (0,-.9) {\(\emptyset\)};
				\node (a1) at (-1,0) {\(\{a_1\}\)};
				\node (a2) at (1,0) {\(\{a_2\}\)};
				\draw [->] (emptyset) -- (a1);
				\draw [->] (emptyset) -- (a2);
			\end{tikzpicture}
			\subcaption{%
			}\label{fig:ex:3}
		\end{minipage}
		\hfill
		\begin{minipage}[b]{.29\linewidth}
			\begin{tikzpicture}\node (emptyset) at (0,-.9) {\(\emptyset\)};
				\node (a1) at (-1,0) {\(\{a_1\}\)};
				\node (a2) at (1,0) {\(\{a_2\}\)};
				\node (a1a2) at (0, 1) {\(\{a_1, a_2\}\)};
				\draw [->] (emptyset) -- (a1);
				\draw [->] (emptyset) -- (a2);
				\draw [->] (a1) -- (a1a2);
				\draw [->] (a2) -- (a1a2);
			\end{tikzpicture}
			\subcaption{%
			}\label{fig:ex:2}
		\end{minipage}
		\hfill
		\begin{minipage}[b]{.4\linewidth}
			\begin{tikzpicture}
				\node (emptyset) at (1,-.9) {\(\emptyset\)};
				\node (b) at (-1,0) {\(\{a\}\)};
				\node (c) at (1,0) {\(\{\out{a}\}\)};
				\node (bcp) at (0, .9) {\(\{a, \out{a}\}\)};
				\node (ba) at (-1.5,.9) {\(\{a, b\}\)};
				\node[align=center] (bapc) at (-0.75,1.8) {\(\{a, \out{a}, b\}\)};
				\draw [->] (emptyset) -- (b);
				\draw [->] (emptyset) -- (c);
				\draw [->] (b) -- (bcp);
				\draw [->] (c) -- (bcp);
				\draw [->] (b) -- (ba);
				\draw [->] (ba) -- (bapc);
				\draw [->] (bcp) -- (bapc);
				\node (t) at (2, 0) {\(\{\tau\}\)};
				\node (tb) at (2, 1) {\(\{\tau, b\}\)};
				\draw [->] (emptyset) to (t);
				\draw [->] (t) to (tb);
			\end{tikzpicture}
			\subcaption{%
			}\label{fig:ex:4}
		\end{minipage}
		\caption{Examples of \(\iconf\)-structures
		}
		\label{fig:ex}
	}
\end{figure}

\begin{example}
	Consider \autoref{fig:ex}, where we let the events have distinct arbitrary identifiers: two events with complement names as labels can happen at the same time (\autoref{fig:ex:4}), in which case they are labeled with \(\tau\),
	as is usual in CCS (\autoref{sec:RCCS}). %
	In \autoref{fig:ex:4}, \(a <_{\{a,b\}} b\), \(a <_{\{a,\out a,b\}} b\) and \(\tau <_{\{\tau, b\}} b\); and in \autoref{fig:ex:2}, \(a_1 \co_{\{a_1, a_2\}} a_2\).
	An \(\iconf\)-structure can have one (\autoref{fig:ex:2}) or multiple (Fig.~\ref{fig:ex:3} and \ref{fig:ex:4}) maximal configurations.
\end{example}

\subparagraph*{Categorical point of view}

We remind in \autoref{sec:ct-app} that configuration structures and \enquote{structure-preserving} functions %
form a category. %
We also prove that a similar category %
can be defined with \(\iconf\)-structures as objects and define a forgetful functor %
that returns the underlying configuration structure.
This development supports the interest and validity of studying \(\iconf\)-structures, but
can be omitted, except for the notion of morphisms:

\begin{definition}[Morphism of \(\iconf\)-structure]
	A morphism \(f = (f_E, f_L, f_C, f_{\iden})%
	\) between \(\iconf_1\) and \(\iconf_2\) is given by \(f_E : E_1 \to E_2\) such that \(\labl_2(f_E(e)) = f_L(\labl_1(e))\), for \(f_L : L_1\to L_2\); \(f_C : C_1 \to C_2\) defined as \(f_C(x) = \{ f_E(e) \setst e\in x\}\), and \(f_{\iden} : I_1 \to I_2\) such that \(f_{\iden}(\iden_1(e)) = \iden_2(f_E(e))\).
	We often write \(f\) for all the components, and write \(\iconf_1 \iso \iconf_2\) if \(f\) is an isomorphism.
\end{definition}

\subparagraph*{Operations on \texorpdfstring{\(\iconf\)}{I}-configurations}
\label{sec:op-def}

Operations on \(\iconf\)-structures are conservative extensions over their counterparts on configuration structures---forgetting about event identifiers gives back the \enquote{un-identified} definition~\cite{Winskel1982}---, except for the parallel composition.
The intuition here is that configuration structures encode CCS processes, and \(\iconf\)-structures encode memories of RCCS processes%
, where parallel composition has a different meaning.
Examples of those operations will be given in \autoref{sec:examples}, after introducing the calculi in \autoref{sec:RCCS} and the encodings in \autoref{sec:css-enc}.
\autoref{sec:op-proof} is devoted to proving the correctness of those operations.

Given two sets \(A\), \(B\), and a symbol \(\star \notin A \cup B\) denoting \emph{undefined}, we write \(C^{\star}= C \cup \{\star\}\) if \(\star \notin C\) and define the partial product~\cite[Appendix A]{Winskel1982} of \(A\) and \(B\) to be %
\[
	A \times_{\star} B =
	\{(a,\star)\mid a \in A\}\cup\{(\star, b)\mid b \in B\}
	\cup\{(a, b)\mid a \in A, b\in B\}
\]
and the two projections to be \(\pi_1: A \times_{\star} B \to A^{\star}\) and \(\pi_2 : A \times_{\star} B \to B^{\star}\).

\begin{definition}[Operations on \(\iconf\)-structures]
	Given \(\iconf_i = (E_i, C_i, L_i, \labl_i, I_i, \iden_i)\), for \(i = 1, 2\),
	\label{icat-op-def}
	\begin{description}
		\item[The relabeling] of \(\iconf_1\) along \(\labl': E_1 \to L\) is \(\iconf_1 [\labl'/\labl_1] = (E_1, C_1, L, \labl', I_1,\iden_1)\).
		\item[The reidentifiying] of \(\iconf_1\) along \(\iden : E_1 \to I\) is \(\iconf_1 [\iden/\iden_1] = (E_1, C_1, L_1, \labl_1, I,\iden)\), provided \(\iconf_1 [\iden/\iden_1]\) respects the \ref{eq:collision} condition.
		\item[The restriction of a set of events] \(A \subseteq E_1\) in \(\iconf_1\) is
		      \(\iconf_1\crestr{A} = (E, C, L, \labl_1\frestr{E}, I, \iden_1\frestr{E})\)
		      , where
		      \(E =E_1{\setminus{A}}\),
		      \(C = \{x \setst x \in C_1 \text{ and } x \cap A = \emptyset\}\),
		      \(L = \{ a \setst \exists e \in E_1{\setminus{A}}, \labl_1(e) = a\}\) and \(I = \{ i \setst \exists e \in E_1{\setminus{A}}, \iden_1(e) = i\}\).
		\item[The restriction of a set of labels] \(L \subseteq L_1\) in \(\iconf_1\) is \(\iconf_1\crestr{L} = \iconf_1 \crestr{E_1^{L}}\) where \(E_1^{L}=\{e \in E_1 \setst \labl_1(e) \in L\}\).
		      We write \(\iconf_1\crestr{a}\), when the restricting set of labels \(L\) is the singleton \(\{a\}\).

		\item[The prefixing] of \(\iconf_1\) by the label \(a\) is \(a . \iconf_1 = (E, C, L,\labl, I,\iden)\) where
			      {
				      \begin{multicols}{2}
					      \begin{itemize}
						      \item \(E = E_1 \cup \{e\}\), for \(e \notin E_1\),
						      \item \(C = \{ x \setst x = \emptyset \text{ or } \exists x' \in C_1, x = x' \cup e\}\),
						      \item \(L = L_1\cup\{a\}\),
						      \item \(\labl = \labl_1 \cup \{e \mapsto a\}\),
						      \item \(I = I_1\cup\{i\}\), for \(i \notin I_1\)
						      \item \(\iden = \iden_1 \cup \{e \mapsto i\}\).
					      \end{itemize}
				      \end{multicols}
			      }
		\item[The postfixing] of \((a, i)\) to \(\iconf_1\) is defined if \(i \notin I_1\) as \(\iconf_1 \post (a, i) = (E, C, L,\labl, I,\iden)\) where everything is as in \(a . \iconf_1\), except that \(C = C_1 \cup \{x \cup \{e\} \setst x \in C_1 \text{ is maximal and finite}\}\).
		\item[The nondeterministic choice] of \(\iconf_1\) and \(\iconf_2\) is \(\iconf_1+\iconf_2 = (E, C, L,\labl, I,\iden)\), where
			      {
				      \setlength\multicolsep{\topsep}
				      \begin{multicols}{2}
					      \begin{itemize}
						      \item \(E=\{\{1\}\times E_1\}\cup\{\{2\}\times E_2\}\) with \(\pi_1 : E \to \{1,2\}\) and \(\pi_2: E \to E_1\cup E_2\),
						      \item \(C=\{\{i\}\times x \setst x\in C_i\}\),
						      \item \(L = L_1 \cup L_2\),
						      \item \(\labl(e)=\labl_i(\pi_2(e)\)) for \(\pi_1(e) = i\),
						      \item \(I =I_1\cup I_2 \),%
						      \item \(\iden(e) = \iden_i(\pi_2(e))\) for \(\pi_1(e) = i\).
					      \end{itemize}
				      \end{multicols}
			      }
		\item[The product] of \(\iconf_1\) and \(\iconf_2\) is \(\iconf_1\times\iconf_2 = (E, C, L, \labl, I, \iden)\), where:

		      \begin{itemize}
			      \item \(E = E_1 \times_{\star} E_2\), with \(\pi_i:E\to E_i^{\star}\) the projections of the partial product,

			      \item For \(i \in \{1, 2\}\), define the projections \(\gamma_i : \iconf_1 \times \iconf_2 \to \iconf_i\) and the configurations \(x\in C\): %
			            \begin{gather*}
				            \forall e \in E, \gamma_i(e)=\pi_i(e), \gamma_i(\labl_i(e)) = \labl_i(\pi_i(e)), \gamma_i(\iden_i(e)) = \iden_i(\pi_i(e))\\
				            \gamma_i(x)\in C_i, \text{ with } \gamma_i(x) = \{e_i \setst \pi_i(e) = e_i \neq \star \text{ and } e \in x\}\\
				            \forall e,e'\in x, \pi_1(e)=\pi_1(e')\neq\star\text{ or }\pi_2(e)=\pi_2(e')\neq\star \Rightarrow e=e' \\
				            \forall e \in x, \exists z\subseteq x \text{ finite},\gamma_i(z) \in C_i, e\in z \\
				            \forall e, e' \in x, e\neq e'\Rightarrow \exists z\subseteq x, \gamma_i(z) \in C_i,e\in z \iff e'\notin z
			            \end{gather*}
			      \item \(\labl:E \to L = L_1\times_{\star} L_2\) is \(
			            \labl(e) = \begin{dcases*}
				            (\labl_1(e_1), \star) & if \(\pi_2(e)=\star\)\\
				            (\star, \labl_2(e_2)) & if \(\pi_1(e)=\star\) \\
				            (\labl_1(e_1),\labl_2(e_2)) & otherwise
			            \end{dcases*}
			            \)

			      \item \(\iden:E \to I= I_1\times_{\star} I_2 \) is \(
			            \iden(e) = \begin{dcases*}
				            (\iden_1(\pi_1(e)),\star) & if \(\pi_2(e)=\star\)\\
				            (\star,\iden_2(\pi_2(e))) & if \(\pi_1(e)=\star\)\\
				            (\iden_1(\pi_1(e)),\iden_2(\pi_2(e))) & otherwise
			            \end{dcases*}
			            \)

		      \end{itemize}
	\end{description}
\end{definition}

We now recall the definition of parallel composition for configuration structure, and detail the definition for \(\iconf\)-structures.
Parallel composition consists of a combination of product, relabelling, reidentifiying for the \(\iconf\)-structures, and restriction.
For the relabelling operation, we use a \emph{synchronization algebra}~\cite{Winskel1995} \((S,\star, \bot)\) consisting of a commutative and associative operation \(\bullet\) on a set of labels \(S \cup \{\star,\bot\}\), where \(\{\star,\bot\} \notin S\) and such that \(a \bullet \star = a\) (\eg \(\star\) is the identity element) and \(a \bullet \bot = \bot\) (\eg \(\bot\) is the zero element), for all \(a\in S\). To avoid repetition, below we assume given \((S, \star, \bot)\), such that \(S \subseteq L_1\cup L_2\). We give examples of synchronization algebras in \autoref{sec:css-enc}. %

\begin{definition}[Parallel composition of configuration structures]
	\label{def:par-comp-ccs}

	\emph{The parallel composition} of \(\conf_1\) and \(\conf_2\) is \(\conf_1 \mid_{S} \conf_2 = \big((\conf_1 \times \conf_2) [\labl'/\labl]\big)\crestr{\bot}\) where \(\labl : E_1 \times_{\star} E_2 \to L_1 \cup L_2\) is the labeling function from the product, and \(\labl': E_1 \times_{\star} E_2 \to L_1\cup L_2\cup\{\bot\} \) is defined as \(\labl'(e) = \labl_1(e_1) \bullet \labl_2(e_2)\).

\end{definition}

\begin{definition}[Parallel composition of \(\iconf\)-structures]
	\label{def:par-ics}
	\emph{The parallel composition} of \(\iconf_1\) and \(\iconf_2\), is \(\iconf_1 \mid_{S} \iconf_2 = (\iconf_3 [\iden'/\iden_3] [\labl'/\labl_3])\crestr{\bot}\) where \(\iconf_3 = (E_3, C_3, L_3,\labl_3, I_3,\iden_3)\) is %
	\(\iconf_1\times \iconf_2\), and
	\begin{itemize}
		\item \(\iden' : E_3 \to I_1 \cup I_2 \cup \{\bot_{k} \setst k \in I_1 \times_{\star} I_2\}\) is defined as follows, for \(i \neq \star\):
		      \begin{empheq}[left={\iden'(e)=\empheqlbrace}]{align*}
			      i & \quad \text{if } \iden_3(e) = (i,i) \tag{Sync.\ or Fork}\label{valid-synch}\\
			      i & \quad \text{if }\iden_3(e)=(i, \star) \wedge \forall e_2 \in E_2, \iden_2(e_2) \neq i \tag{Extra \(\star\).\@ 1}\label{extra1}\\
			      i & \quad \text{if }\iden_3(e)=(\star, i) \wedge \forall e_1 \in E_1, \iden_1(e_1) \neq i \tag{Extra \(\star\).\@ 2}\label{extra2}\\
			      \bot_{k} & \quad \text{otherwise, with \(\iden_3(e) = k\) } \tag{Error}\label{error}
		      \end{empheq}
		\item \(\labl' : E_3 \to L_1\cup L_2 \cup \{\bot\}\) maps \(e\) to \(\bot \) if \(\iden'(e) = \bot_k\), and to \(\labl_1(e_1) \bullet \labl_2(e_2)\) otherwise.
	\end{itemize}
\end{definition}

Parallel composition removes from the product the pairs of events that represent \enquote{non-realizable} interactions: for configuration structures, pairs of events that do not represent \emph{possible and future synchronizations}; for \(\iconf\)-structures, pairs of events that do not represent \emph{past synchronizations or forks}. %
Definitions \ref{def:encoding-CCS-cf} and \ref{def:encode_m} will detail how those operations are used to encode a process or a memory, and both types of parallel compositions will be illustrated in Examples \ref{ex:proc-enc-1} and \ref{ex:proc}.

\subsection{Concurrent Communicating Calculi}
\label{sec:RCCS}

Let \(\ids = \integer\) be a set of \emph{identifiers} and \(i, j\) range over it.
Let \(\names=\{a,b,c,\dots\}\) be a set of \emph{names} and \(\out{\names}=\{\out{a},\out{b},\out{c},\dots\}\) its set of \emph{co-names}.
We define the set of labels \(\labels = \names \cup \out{\names} \cup\{\tau\}\), %
and use \(\alpha\) (\resp \(\lambda, \mu, \nu\)) to range over \(\labels\) (\resp \(\labels \bs \{\tau\}\)). The \emph{complement} of a name is given by a bijection \(\out{\cdot}:\names \to \out{\names}\), whose inverse is also written \(\out{\cdot}\), and that we extend to \(\tau\), \ie \(\out{\tau} = \tau\).

\begin{definition}[RCCS Processes]
	The set of \emph{reversible processes \(\rproc\)} is built on top of the set of CCS processes %
	by adding memory stacks to the threads:
	\begin{align}
		P,Q  & \coloneqq P\mid Q \BNFsepa {\textstyle\sum}_{i \geqslant 0} \lambda_i . P_i \BNFsepa P \bs a %
		\tag{CCS Processes}                                                                                 \\
		e    & \coloneqq \mem{i, \lambda, P} \tag{Memory Events}                                            \\
		m    & \coloneqq \emptymem \BNFsepa \fork. m \BNFsepa e. m \tag{Memory Stacks}                      \\
		T    & \coloneqq m \rhd P \tag{Reversible Threads}                                                  \\
		R, S & \coloneqq T \BNFsepa R \mid S \BNFsepa R\bs a \tag{RCCS Processes}
	\end{align}
	We denote \(\ids(e)\) (resp \(\ids(m)\), \(\ids(R)\)) the set of identifiers occurring in \(e\) (\resp \(m\), \(R\)), and let \(\nm{m} = \{ \alpha \setst \alpha \in \names \text{ or } \out{\alpha} \in \out{\names} \text{ occurs in }m\}\) be the set of (co-)names occurring in \(m\).
\end{definition}

Note that the nullary case of the sum\footnote{%
	This version of sum is used for simplifying the presentation of the LTS in \autoref{fig:ltsrules}.}%
gives the inactive process, denoted \(0\), and that the unary case gives the \enquote{usual} prefixing of a process by a label, and we write \eg \(a.P\) for \({\textstyle\sum}_{1} \lambda_i.P_i\) with \(\lambda_1 = a\) and \(P_1 = P\).
We assume sum to be associative and often consider only its binary case, that we denote with the \(+\) sign. We often forget about the trailing \(\emptyset\) in the memory stack, or the inactive process \(0\) and write \eg \(e \rhd a \mid (b + \out{c})\) for \(e.\emptymem \rhd (a.0 \mid (b.0 + \out{c}.0))\).
We work up to the structural congruence \(\congru\) of CCS---that we suppose familiar to the reader---and write \eg \(\emptymem \rhd (P_1 \mid P_2 \mid P_3)\) without parenthesis since \( (P_1 \mid P_2) \mid P_3 \congru P_1 \mid (P_2 \mid P_3)\).
Finally, alpha-equivalence is written \(=_{\alpha}\) and supposed familiar as well.

In a memory event \(\mem{i, \lambda, P}\), the \(P\) component represents the process that was not chosen in a non-deterministic transition, but that can be restored if the process wants to go back.
The \enquote{fork} symbol \(\fork\) tracks when a memory stack is split between two threads.%

\begin{definition}[Structural equivalence~{\cite%
	{Aubert2016jlamp,Danos2004}}]
	\label{def:congru_rccs}
	Structural equivalence on \(\rproc\) %
	is the smallest equivalence relation generated by the following rules:
	\begin{align}
		                  & \hspace{-2.8em} \frac{P =_{\alpha} Q}{m \rhd P \congru m \rhd Q} \tag{\(\alpha\)-Conversion} \label{eq:rcss-congru-alpha} \\
		m \rhd (P \mid Q) & \congru (\fork . m \rhd P \mid \fork . m \rhd Q)\tag{Distribution of Memory}\label{eq:rcss-congru-distrib-mem}            \\
		m \rhd P\bs a     & \congru (m \rhd P)\bs a\text{ with }a\notin \nm{m} \tag{Scope of Restriction} \label{eq:rcss-congru-scope}
	\end{align}
\end{definition}

\begin{figure}
	{
		\centering

		\begin{prooftree}
			\hypo{}
			\infer[left label=\(i \notin \ids(m)\)]1[act.]{(m \vartriangleright \lambda . P + Q) \fwlts{i}{\lambda} \mem{i, \lambda, Q} . m \vartriangleright P}
		\end{prooftree}
		\hfill
		\begin{prooftree}
			\hypo{R \fwlts{i}{\lambda} R'}
			\hypo{S \fwlts{i}{\out{\lambda}} S'}
			\infer2[syn.]{R \mid S \fwlts{i}{\tau} R' \mid S'}
		\end{prooftree}
		\\[1.8em]
		\begin{prooftree}
			\hypo{}
			\infer[left label=\(i \notin \ids(m)\)]1[act.\(_*\)]{\mem{i, \lambda, Q} . m \vartriangleright P \bwlts{i}{\lambda} m \vartriangleright (\lambda . P + Q)}
		\end{prooftree}
		\hfill
		\begin{prooftree}
			\hypo{R \bwlts{i}{\lambda} R'}
			\hypo{S \bwlts{i}{\out{\lambda}} S'}
			\infer2[syn.\(_*\)]{R \mid S \bwlts{i}{\tau} R' \mid S'}
		\end{prooftree}
		\\[1.8em]
		\begin{prooftree}
			\hypo{R \fbwlts{i}{\alpha} R'}
			\infer[left label=\(i \notin \ids(S)\)]1[par.\(_{\text{L}}\)]{R \mid S \fbwlts{i}{\alpha} R' \mid S}
		\end{prooftree}
		\hfill
		\begin{prooftree}
			\hypo{S \fbwlts{i}{\alpha} S'}
			\infer[left label=\(i \notin \ids(S)\)]1[par.\(_{\text{R}}\)]{R \mid S \fbwlts{i}{\alpha} R \mid S'}
		\end{prooftree}
		\\[1.8em]
		\hspace{3.5em}
		\begin{prooftree}
			\hypo{R \fbwlts{i}{\alpha} R'}
			\hypo{a \notin \alpha}
			\infer2[res.]{ R\bs a \fbwlts{i}{\alpha} R'\bs a}
		\end{prooftree}
		\hfill
		\begin{prooftree}
			\hypo{R_1 \congru R}
			\hypo{R\fbwlts{i}{\alpha} R'}
			\hypo{R'\congru R_1'}
			\infer3[\(\congru\)]{R_1 \fbwlts{i}{\alpha} R_1'}
		\end{prooftree}
	}
	\caption{Rules of the labeled transition system (LTS)}
	\label{fig:ltsrules}
\end{figure}

The labeled transition system for RCCS is given by the rules of \autoref{fig:ltsrules}.
We use \(\fbwlts{i}{\alpha}\) for the union of \(\redl{i:\alpha}\) (forward) and \(\revredl{i:\alpha}\) (backward transition), and if there are indices \(i_1, \hdots, i_n\) and labels \(\alpha_1, \hdots, \alpha_n\) such that \(R_1 \fbwlts{i_1}{\alpha_1} \cdots \fbwlts{i_n}{\alpha_n} R_n\), then we write \(R_1 \tfbwlts R_n\).
\autoref{sec:examples} will provide examples of executions, but it should be noted that a process \(m \rhd a. P\) is allowed to make a transition with label \(a\) and identifier \(i \notin \ids(m)\) using act.\ and add the event \(\mem{i, a, 0}\) to the memory stack \(m\) if \(i \notin \ids(m)\). %
Conversely, a process \(\mem{i, a, 0} . m \vartriangleright P\) can do a backward transition using act.\(_*\) with label \(a\) and identifier \(i\) and become \(m \rhd a. P\).
This system is a conservative extension over the LTS of CCS with prefixed sum, simply adding indices and backward transitions.

\begin{definition}[Reachable {\cite[Lemma 1]{Aubert2016jlamp}}]
	\label{def:coh}
	For all \(R\), if there is a CCS process \(P\) such that \(\emptymem\rhd P\tfbwlts R\), we say that \(R\) is \emph{reachable}, that \(P\) is the \emph{unique origin} of \(R\) and write it \(\orig{R}\).
\end{definition}

An important result~\cite[Lemma 10]{Danos2004} furthermore states that the trace \(\emptymem\rhd P\tfbwlts R\) is
forward-only\footnote{Traces and trace equivalences for RCCS are reminded in \autoref{sec:trace}: they are needed for some proofs and they are similar to their CCS's counterpart~\cite{Boudol1988}, but are not required to understand our results.}.
Also, note that multiple RCCS processes can have the same origin, but that reachable RCCS processes have one unique origin (up to structural equivalence).
We consider only reachable terms: unreachable terms are \enquote{dysfunctional} and their memory is considered \emph{not coherent}~\cite{Danos2005}, as they can not \enquote{rewind} back to an origin process.

\subsection{Processes and Memories as (Identified) Configuration Structures}
\label{sec:css-enc}
\label{sec:enc_mem}

In the definitions below, we write \(S\) for a synchronization algebra \((S, \star,\bot)\) with \(S = \names \cup \out{\names}\cup\{\tau\}\).

\begin{definition}[{Encoding CCS processes~\cite%
				{Winskel1995}}]
	\label{def:encoding-CCS-cf}
	Given a CCS process \(P\), its encoding \(\enc{P}\) as a configuration structure is built inductively:
	\begin{align*}
		\enc{\lambda .P + Q} & =\enc{\lambda .P}+\enc{Q}
		                     &                                 &  &
		\enc{P\mid Q}        & =\enc{P}\mid_{S}\enc{Q}         &  &   &
		\enc{P \bs a}        & =\enc{P}\crestr{\{a, \out{a}\}}          \\
		\enc{\lambda . P}    & =\lambda .\enc{P}               &  &   &
		\enc{0}              & = \confzero
	\end{align*}
	where \(S\) includes
	\(\alpha \bullet \out{\alpha} = \tau\) and \(\alpha \bullet \beta = \bot\), if \(\beta\neq \out{\alpha}\).
\end{definition}

\begin{definition}[Encoding RCCS memories]
	\label{def:encode_m}
	Given a RCCS process \(R\), the encoding \(\encmp{R}\) of its memory as an \(\iconf\)-structure is built by induction on the process and on the memory:
	\begin{align*}
		\encmp{m \rhd P}               & = \encm{m}                        &  &  &
		\encmp{R_1 \mid R_2}           & = \encmp{R_1} \mid_{S}\encmp{R_2} &  &  &
		\encmp{R\bs a}                 & = \encmp{R}                               \\
		\encm{\mem{i, \lambda, P} . m} & = \encm{m} \post (\lambda, i)     &  &  &
		\encm{\emptymem}               & = \iconfzero                      &  &  &
		\encm{\fork . m}               & = \encm{m}
	\end{align*}
	where \(S\) includes
	\(\alpha \bullet \out{\alpha} = \tau\); \(\alpha \bullet \alpha = \alpha\) and \(\alpha \bullet \beta = \bot\) if \(\beta\notin\{\out{\alpha}, \alpha\}\).
\end{definition}

\subsection{Examples}
\label{sec:examples}

We now illustrate the execution of RCCS processes, the encoding of CCS processes and of RCCS memories, and how they relate.

\begin{example}[Executing a RCCS process]
	\label{ex:proc-lts-1}
	An example of forward-only trace is:
	\begin{alignat}{4}
		\emptymem \rhd (a.b \mid c.\out{a}) & \congru         & (\fork . \emptymem \rhd a.b)                               & \mid (\fork . \emptymem \rhd c.\out{a}) \tag{\ref{eq:rcss-congru-distrib-mem}} \\
		                                    & \fwlts{1}{c}    & (\fork . \emptymem \rhd a.b)                               & \mid(\mem{1,c,0}. \fork . \emptymem \rhd \out{a}) \tag{act.}                   \\
		                                    & \fwlts{2}{\tau} & (\mem{2, a, 0} . \fork . \emptymem \rhd b)                 & \mid (\mem{2, \out{a}, 0} . \mem{1,c,0} . \fork .\emptymem \rhd 0)\tag{syn.}   \\
		                                    & \fwlts{3}{b}    & (\mem{3, b, 0} . \mem{2, a, 0} . \fork . \emptymem \rhd 0) & \mid (\mem{2, \out{a}, 0} . \mem{1,c,0} . \fork . \emptymem \rhd 0) \tag{act.}
	\end{alignat}
	Reading it from end to beginning and replacing \(\fwlts{\_}{\_}\) with \reflectbox{\(\bwlts{\_}{\_}\)} gives a \emph{backward-only} trace, that would rewind the process back to its origin.
	Of course, a trace can mix forward and backward transitions, as we illustrate in \autoref{ex:op-cor}.
	The memory of this process is encoded in \autoref{ex:proc}.
\end{example}

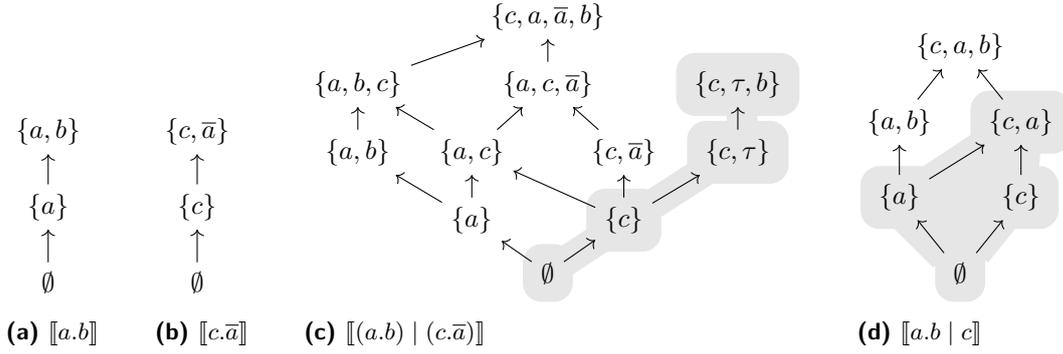
\begin{figure}
	\begin{minipage}[b]{.1\linewidth}
		\begin{tikzpicture}%
			\node (emptyset) at (0,0) {\(\emptyset\)};
			\node (a1) at (0, 1) {\(\{a\}\)};
			\node (a2) at (0, 2) {\(\{a, b\}\)};
			\draw [->] (emptyset) -- (a1);
			\draw [->] (a1) -- (a2);
		\end{tikzpicture}
		\subcaption{\(\enc{a.b}\)}
		\label{fig:ex_mem1}
	\end{minipage}
	\hfill
	\begin{minipage}[b]{.1\linewidth}
		\begin{tikzpicture}%
			\node (emptyset) at (0,0) {\(\emptyset\)};
			\node (a1) at (0, 1) {\(\{c\}\)};
			\node (a2) at (0, 2) {\(\{c, \out{a}\}\)};
			\draw [->] (emptyset) -- (a1);
			\draw [->] (a1) -- (a2);
		\end{tikzpicture}
		\subcaption{\(\enc{c.\out{a}}\)}
		\label{fig:ex_mem2}
	\end{minipage}
	\hfill
	\begin{minipage}[b]{.48\linewidth}
		\begin{tikzpicture}%
			\node (emptyset) at (0,0) {\(\emptyset\)};
			\node (a1) at (-1, 0.7) {\(\{a\}\)};
			\node (a2) at (1, 0.7) {\(\{c\}\)};
			\node (b1) at (-2.5, 1.6) {\(\{a, b\}\)};
			\node (b2) at (-1, 1.6) {\(\{a, c\}\)};
			\node (b3) at (1, 1.6) {\(\{c, \out{a}\}\)};
			\node (b4) at (2.5, 1.6) {\(\{c, \tau\}\)};
			\node (c1) at (-2.5, 2.5) {\(\{a, b, c\}\)};
			\node (c2) at (0, 2.5) {\(\{a, c, \out{a}\}\)};
			\node (c3) at (2.5, 2.5) {\(\{c, \tau, b\}\)};
			\node (d1) at (0, 3.4) {\(\{c, a, \out{a}, b\}\)};
			\draw [->] (emptyset) -- (a1);
			\draw [->] (emptyset) -- (a2);
			\draw [->] (a1) -- (b1);
			\draw [->] (a1) -- (b2);
			\draw [->] (a2) -- (b2);
			\draw [->] (a2) -- (b3);
			\draw [->] (a2) -- (b4);
			\draw [->] (b1) -- (c1);
			\draw [->] (b2) -- (c1);
			\draw [->] (b2) -- (c2);
			\draw [->] (b3) -- (c2);
			\draw [->] (b4) -- (c3);
			\draw [->] (c1) -- (d1);
			\draw [->] (c2) -- (d1);
			\begin{scope}[on background layer]
				\node[fill=gray!20, fit=(emptyset), rounded corners=.25cm] {};
				\node[fill=gray!20, fit=(a2), rounded corners=.25cm] {};
				\node[fill=gray!20, fit=(b4), rounded corners=.25cm] {};
				\node[fill=gray!20, fit=(c3), rounded corners=.25cm] {};
				\draw[gray!20, line width=4mm%
					, rounded corners
				] (emptyset) -- ($(a2)+(-0.1, -0.1)$) -- ($(b4)+(0.1, 0.1)$) -- (c3);
			\end{scope}
		\end{tikzpicture}
		\subcaption{\(\enc{(a.b)\mid (c.\out{a})}\)}\label{fig:ex_mem3}

	\end{minipage}
	\hfill
	\begin{minipage}[b]{.2\linewidth}
		\begin{tikzpicture}%
			\node (emptyset) at (0,0) {\(\emptyset\)};
			\node (a1) at (-0.8, 1) {\(\{a\}\)};
			\node (a2) at (-0.8, 2) {\(\{a, b\}\)};
			\node (a4) at (0.8, 2) {\(\{c, a\}\)};
			\node (a3) at (0.8, 1) {\(\{c\}\)};
			\node (a5) at (0, 3) {\(\{c, a, b\}\)};
			\draw [->] (emptyset) -- (a1);
			\draw [->] (emptyset) -- (a3);
			\draw [->] (a1) -- (a2);
			\draw [->] (a1) -- (a4);
			\draw [->] (a3) -- (a4);
			\draw [->] (a4) -- (a5);
			\draw [->] (a2) -- (a5);
			\begin{scope}[on background layer]
				\node[fill=gray!20, fit=(emptyset), rounded corners=.25cm] {};
				\node[fill=gray!20, fit=(a1), rounded corners=.25cm] {};
				\node[fill=gray!20, fit=(a4), rounded corners=.25cm] {};
				\node[fill=gray!20, fit=(a3), rounded corners=.25cm] {};
				\draw[gray!20, line width=4mm%
					, rounded corners
				] (emptyset) -- ($(a1)+(-0.1, -0.1)$) -- ($(a4)+(0.1, 0.1)$) -- (a3) -- (emptyset);
				\path[fill=gray!20, line width=4mm, rounded corners] (emptyset.center) to (a1.center) to (a4.center) to (a3.center) to (emptyset.center);
			\end{scope}
		\end{tikzpicture}
		\subcaption{\(\enc{a.b \mid c}\)}\label{fig:ex_mem4}
	\end{minipage}

	\caption{\(\iconf\)-structures for Examples \ref{ex:proc-enc-1}, \ref{ex:proc}, \ref{ex:proc2} and
		\ref{ex:op-cor}, with the CCS term their underlying configuration encode in caption.}
	\label{fig:cs1}
\end{figure}

\begin{example}[Encoding CCS processes]
	\label{ex:proc-enc-1}
	We can see the \(\iconf\)-structures from \autoref{fig:ex} as configuration structures obtained by encoding the CCS processes \(a + a\), \(a\mid a\), and \((a.b)\mid \out{a}\).
	Similarly, we can consider the \(\iconf\)-structures from \autoref{fig:cs1}---ignoring the grayed out parts for now-- as configuration structures. The interested reader can check that the encoding of \((a.b) \mid (c.\out{a})\) in \autoref{fig:ex_mem3} is indeed the result of applying the parallel composition of configurations structures (\autoref{def:par-comp-ccs}) to the encoding of \(a.b\) in \autoref{fig:ex_mem1} and \(c.\out{a}\) in \autoref{fig:ex_mem2}. Lastly, \autoref{fig:ex_mem4} shows the encoding of \((a.b) \mid c\).

\end{example}

The parallel composition of \(\iconf\)-structures (\autoref{def:par-ics}) differs slightly and is new, and hence deserves a detailed example.

\begin{example}[Encoding RCCS memories]
	\label{ex:proc}
	The process obtained at the end of \autoref{ex:proc-lts-1} has its memory encoded as follows:
	\begin{align*}
		  & \encmp{(\mem{3, b, 0} . \mem{2, a, 0} . \fork . \emptymem \rhd 0) \mid (\mem{2, \out{a}, 0} . \mem{1,c,0} . \fork . \emptymem \rhd 0)}                           \\
		= & \encmp{\mem{3, b, 0} . \mem{2, a, 0} . \fork . \emptymem \rhd 0} \mid \encmp{\mem{2, \out{a}, 0} . \mem{1,c,0} . \fork . \emptymem \rhd 0}                       \\
		= & \encm{\mem{3, b, 0} . \mem{2, a, 0} . \fork . \emptymem} \mid \encm{\mem{2, \out{a}, 0} . \mem{1,c,0} . \fork . \emptymem}
		\shortintertext{Letting \(E = L = \{a, \out{a}, b, c\}\),
			\(\labl = \id\)%
			, \(I = \{1, 2, 3\}\), using \(\encm{\fork . \emptymem} = \encm{\emptymem} = \iconfzero\) and the postfixing:}
		  & \encm{\mem{3, b, 0}.\mem{2, a, 0}.\fork.\emptymem} = (\{a, b\}, \{ \emptyset, \{a\}, \{a, b\}\}, L, \labl, I, \{a \mapsto 2, b \mapsto 3
		\}                                                                                                                                                                   \\
		  & \encm{\mem{2, \out{a}, 0}.\mem{1, c, 0}.\fork.\emptymem} = (\{c, \out{a}\}, \{ \emptyset, \{c\}, \{c,\out{a}\} \}, L, \labl, I, \{c \mapsto 1, \out{a} \mapsto 2
		\})
	\end{align*}

	Those are displayed in Fig.~\ref{fig:ex_mem1} and~\ref{fig:ex_mem2}, and their product (which is the first step to compute their parallel composition) gives the following sets of events and identifiers:

	\begin{center}
		\arrayrulecolor{gray!20}
		\(	\begin{array}{|c|c|c|c|c|c|c|c|c|c|}
			\hline
			\text{Event}                         & (a, \star) & (b, \star) & (\star, c) & (\star, \out{a}) & (a, c) & (a, \out{a}) & (b, c) & (b, \out{a}) \\
			\rowcolor{gray!20} \text{Identifier} & (2, \star) & (3, \star) & (\star, 1) & (\star, 2)       & (2, 1) & (2, 2)       & (3, 1) & (3, 2)       \\
			\hline
		\end{array}\)
	\end{center}

	Re-identifying and re-labeling according to the definition of parallel composition gives:
	\begin{center}
		\(
		\arrayrulecolor{gray!20}
		\begin{array}{|c|c|c|c|c|c|c|c|c|c|}
			\hline
			\text{Event}                            & (a, \star)        & (b, \star) & (\star, c) & (\star, \out{a})  & (a, c)       & (a, \out{a}) & (b, c)       & (b, \out{a}) \\
			\rowcolor{gray!20} \text{Re-identified} & \bot_{(2, \star)} & 3          & 1          & \bot_{(\star, 2)} & \bot_{(2,1)} & 2            & \bot_{(3,1)} & \bot_{(3,2)} \\
			\text{Re-labeled}                       & \bot              & b          & c          & \bot              & \bot         & \tau         & \bot         & \bot         \\
			\hline
		\end{array}
		\)
	\end{center}

	Indeed, if two events occur at the same time with the same identifier (\ref{valid-synch}), then their identifier is simply picked.
	Hence, \(\iden'(a, \out{a}) = 2\).
	If only one event is present in the pair, and no event on the other component have the same identifier (\ref{extra1}, \ref{extra2}), then this event's identifier is picked.
	Hence, \(\iden'(b, \star) = 3\) and \(\iden'(\star, c) = 1\).
	The remaining cases %
	get re-identified with \(\bot_k\) (\ref{error}). %
	Finally, %
	\((b, \star)\), \((\star, c)\) and \((a, \out{a})\) gets relabeled with \(b\), \(c\) and \(\tau\) respectively, %
	and after restricting to the label \(\bot\) we obtain the
	grayed out part of \autoref{fig:ex_mem3}.%

\end{example}

Observe that in this last example, the structure underlying the encoding of the memory is just a particular \enquote{path} in the encoding of the origin.
We can observe this intuition again with the following example: %

\begin{example}[Memory and origin]
	\label{ex:proc2}
	The encoding of the memory resulting from the execution
	\(\emptymem\rhd ((a.b)\mid c) \fwlts{1}{c} \fwlts{2}{a} ( \mem{2, a, 0} . \fork . \emptymem \rhd b) \mid (\mem{1,c,0} . \fork . \emptymem \rhd 0)\)
	is the grayed out part in \autoref{fig:ex_mem4}, with \(\iden(c) = 1\) and \(\iden(a) = 2\). %
	We name this process \(R_1\) and come back to it in \autoref{ex:op-cor}.
\end{example}

\subsection{Operational Correspondence}
\label{sec:op-corr}

Before studying bisimulations on configuration structures and processes, we prove the operational correspondence\footnote{Similar to the operational correspondance between configuration structures and CCS processes~\cite[Section 3]{Boudol1987} \ie, if \(P \redl{\alpha} Q\), then \(\enc{P} = \enc{Q}\crestr{\{e\}}\), where \(e\) is an event in \(\enc{P}\) such that \(\labl(e)=\alpha\).} between RCCS processes and the encodings of their memories in \(\iconf\)-structures (\autoref{lem:transition_identified}, \cf also \autoref{sub:prop-mem-proof}).
Events in \(\iconf\)-structures resulting from the encoding of a process have different identifiers, and they are either causally linked or concurrent.%

\begin{restatable}[Memories give posets]{lemma}{prefixenc}
	\label{lem:prefix_enc}
	For all \(R\), letting \(x\) be the maximal configuration in \(\encmp{R}\) (\autoref{def:causality}), \((\encmp{R}, \subseteq)\) is a partially ordered set (poset) with maximal element \(x\).
\end{restatable}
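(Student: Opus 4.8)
The plan is to prove the claim by induction on the structure of the reversible process $R$ and of its memory, following the clauses of \autoref{def:encode_m}. That $(\encmp{R}, \subseteq)$ is a partial order is immediate, since $\subseteq$ is always a partial order on a family of sets; the real content is that $\encmp{R}$ has a \emph{unique} maximal configuration, since general configuration structures (\eg those of \autoref{fig:ex}) may have several. Because every structure produced by \autoref{def:encode_m} is finite, a unique maximal configuration is the same as a greatest element, so I would actually establish the stronger invariant: \emph{$\encmp{R}$ is a finite $\iconf$-structure whose set of configurations has a greatest element $x$, which moreover equals the event set $E$ of $\encmp{R}$.} To push the parallel case through I would also carry along the coherence of reachable terms in an explicit form: distinct memory events carry distinct identifiers, and two memory events on opposite sides of a $\mid$ share an identifier only when they are either two copies of one pre-fork event (hence with the same label) or the two halves of a past synchronization (hence with complementary labels). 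This holds for $\emptymem \rhd \orig{R}$, is preserved by the transitions of \autoref{fig:ltsrules}, and is visibly inherited by the two components of a parallel composition.

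The base and unary cases are routine: $\encm{\emptymem} = \iconfzero$ has greatest configuration $\emptyset = E$; and $\encm{\fork.m} = \encm{m}$, $\encmp{m \rhd P} = \encm{m}$, $\encmp{R \bs a} = \encmp{R}$ are covered verbatim by the induction hypothesis. For postfixing, $\encm{\mem{i,\lambda,P}.m} = \encm{m} \post (\lambda,i)$: coherence gives $i \notin \ids(m)$, so the operation is defined; by induction $\encm{m}$ is finite with greatest configuration $x_m = E$, hence $x_m$ is its unique maximal (and finite) configuration, so $\encm{m} \post (\lambda,i)$ adds exactly the one configuration $x_m \cup \{e\}$ for a fresh event $e$; every configuration is then contained in $x_m$, hence properly in $x_m \cup \{e\}$, and the result is finite with greatest configuration $x_m \cup \{e\}$, which is its event set.

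The parallel composition $\encmp{R_1 \mid R_2} = (\iconf_3 [\iden'/\iden_3] [\labl'/\labl_3])\crestr{\bot}$ with $\iconf_3 = \encmp{R_1} \times \encmp{R_2}$ is the main obstacle. By induction $\encmp{R_i}$ is finite with greatest configuration $x_i = E_i$. I would propose as the greatest configuration of the composition
\begin{align*}
y ={}&\{(e_1,e_2) \setst e_i \in E_i,\ \iden_1(e_1)=\iden_2(e_2)\} \\
\cup{}&\{(e_1,\star) \setst e_1 \in E_1,\ \iden_1(e_1) \notin \ids(R_2)\} \\
\cup{}&\{(\star,e_2) \setst e_2 \in E_2,\ \iden_2(e_2) \notin \ids(R_1)\},
\end{align*}
and then check three points. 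First, $\gamma_i(y) = E_i = x_i$ (every event of $E_1$ either matches an event of $E_2$, since $\ids(R_2)$ is exactly the identifier set of $\encmp{R_2}$, or has an identifier absent from $R_2$), and the combinatorial conditions of \autoref{icat-op-def} hold, so $y$ is a configuration of $\iconf_3$. Second, no event of $y$ is relabeled to $\bot$: a pair $(e_1,e_2) \in y$ has $\iden_3 = (i,i)$, so by \ref{valid-synch} it keeps identifier $i$ and label $\labl_1(e_1) \bullet \labl_2(e_2)$, which is never $\bot$ because coherence makes $e_1,e_2$ either copies of a common pre-fork event, with equal labels and $\alpha \bullet \alpha = \alpha$, or a synchronization pair, with complementary labels and $\alpha \bullet \out{\alpha} = \tau$; a singleton $(e_1,\star) \in y$ falls under \ref{extra1} and keeps label $\labl_1(e_1) \neq \bot$ (symmetrically for $(\star,e_2)$ and \ref{extra2}); hence $y$ survives $\crestr{\bot}$. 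Third, $y$ is greatest: any configuration $z$ of the composition has $\gamma_i(z) \in C_i$, hence $\gamma_i(z) \subseteq x_i = E_i$, and every event of $z$ must avoid label $\bot$; running the previous point in reverse, such an event must be a pair with equal identifiers on the two sides or an unmatched singleton whose identifier is absent from the other side, i.e., an element of $y$, so $z \subseteq y$. Finiteness is clear, and $y$ is exactly the non-$\bot$ part of $E_3$, hence the event set of the composition.

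The crux is this parallel case, and inside it the use of coherence both to exclude $\bot$ from the \enquote{everything} configuration and to pin down which pairs of events can survive at all; here the synchronization-algebra clauses $\alpha \bullet \alpha = \alpha$ (forks) and $\alpha \bullet \out{\alpha} = \tau$ (synchronizations) of \autoref{def:encode_m} do the real work. A more conceptual alternative that repackages the same content would first show that the configurations of $\encmp{R}$ are precisely the order ideals of the poset of memory events of $R$ ordered by causal dependency; then $(\encmp{R}, \subseteq)$ is a finite distributive lattice whose top is $x$. I would keep the inductive proof as primary, since it stays close to \autoref{def:encode_m}, but the order-ideal picture is a useful sanity check that explains why the configuration lattice is so rigid.
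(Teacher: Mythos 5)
Your proof has the same overall skeleton as the paper's --- induction on \(R\) and on \(m\), with the base, fork, restriction and postfixing cases handled essentially identically --- but it treats the one non-trivial case, parallel composition, by a genuinely different argument. The paper argues by \emph{elimination}: it notes that the product \(\encmp{R_1}\times\encmp{R_2}\) has several maximal configurations, and then shows that for each event \(d\) of \(\encmp{R_1}\) at most one event of the product with first projection \(d\) survives the reidentification (all the others become some \(\bot_k\), by \autoref{lem:id_unique}), so the restriction destroys all but one maximal configuration. You argue by \emph{exhibition}: you write down the candidate top configuration \(y\) (identifier-matched pairs plus unmatched singletons), check that it is a configuration that survives the restriction, and observe that every configuration of the restricted structure is a subset of its event set, which is exactly \(y\). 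Your route buys a sharper conclusion (the configuration poset has a greatest element equal to the whole event set, not merely a unique maximal configuration), and it makes explicit a point the paper's proof passes over in silence: a matched pair \((e_1,e_2)\) survives the \emph{relabeling} only because \(\labl_1(e_1)\bullet\labl_2(e_2)\neq\bot\), which requires knowing that identifier-sharing events on the two sides of a \(\mid\) are fork copies (equal labels, so \(\alpha\bullet\alpha=\alpha\)) or synchronization halves (complementary labels, so \(\alpha\bullet\out{\alpha}=\tau\)). The price is that this coherence invariant, and the verification that \(y\) meets the configuration conditions of the product (in particular coincidence freeness, which needs the causal orders induced by the two memories to be consistent --- again a consequence of reachability), are only asserted in your write-up; both would have to be established by the same induction-on-the-trace argument the paper uses for \autoref{lem:unique_id}. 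Neither omission is fatal --- the paper's own parallel case is no more detailed on exactly these points --- but if you keep your version, you should state and prove the fork/synchronization invariant as a separate lemma rather than leave it as a one-line remark.
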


This is proved by induction on \(R\) and illustrated by Examples~\ref{ex:proc} and \ref{ex:proc2}. %
However, having at most one maximal configuration does not imply that one particular event has to be \enquote{the last one} introduced.
We use the following definition to make it formal.

\begin{definition}[Maximal event]
	\label{def:maximal-event}
	An event \(e\) is \emph{maximal in \(\iconf\)} %
	if there is no event \(e'\) such that \(e<_{x}e'\), for \(x\) a maximal configuration of \(\iconf\).
\end{definition}

For instance, %
the encoding of the memory of \autoref{ex:proc2}, pictured in \autoref{fig:ex_mem4}, has two maximal events, labeled \(a\) and \(c\).
We can now state the main result of this section:

\begin{restatable}[Operational Correspondence]{lemma}{transitionidentified}
	\label{lem:transition_identified}
	For all \(R\) and \(S\), writing \((E_R, C_R, \labl_R, I_R, \iden_R)\) for \(\encmp{R}\) and similarly for \(S\), if \(R\fwlts{i}{\alpha}S\) or \(S\bwlts{i}{\alpha}R\), then
	there exists \(e \in E_S\) maximal in \(\encmp{S}\) with \(\iden_S(e) = i\) \st \(\encmp{R} \iso \encmp{S}\crestr{\{e\}}\). %
	For all \(R\) and \(e\) a maximal event in \(\encmp{R}\), there is a transition \(R\text{\raisebox{-.10pt}{\(\bwlts{\iden_R(e)}{\labl_R(e)}\)}}S\) with \(\encmp{S} \iso \encmp{R}\crestr{\{e\}}\).
\end{restatable}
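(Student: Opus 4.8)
Both halves are proved by induction and lean on \autoref{lem:prefix_enc} to keep every $\encmp{\cdot}$ under control, as a finite poset of configurations with a maximal configuration $x$; for the first half I would additionally strengthen the statement to record that $\labl_S(e)=\alpha$. For the first half it suffices, \withoutlog, to treat $R\fwlts{i}{\alpha}S$ — the case $S\bwlts{i}{\alpha}R$ is symmetric, its derivation using the mirrored rules act.\(_*\)/syn.\(_*\) and giving the same relationship between $\encmp{R}$ and $\encmp{S}$ — and I would induct on the derivation in \autoref{fig:ltsrules}. The base case act.\ is the crux: there $\encmp{S}=\encmp{R}\post(\alpha,i)$, so by the definition of postfixing a single fresh event $e$, with $\iden_S(e)=i$ and $\labl_S(e)=\alpha$, is stacked above the (finite) maximal configuration $x$ of $\encmp{R}$; hence $e$ is maximal in $\encmp{S}$ and $x\cup\{e\}$ the only configuration of $\encmp{S}$ meeting $\{e\}$, so $\encmp{S}\crestr{\{e\}}\iso\encmp{R}$. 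The case res.\ is immediate from $\encmp{R\bs a}=\encmp{R}$, and the case \(\congru\) follows by transporting the induction hypothesis along $\encmp{R}\iso\encmp{R'}$ for $R\congru R'$ — the one subtle rule being \ref{eq:rcss-congru-distrib-mem}, where one must check $\encm{m}\iso\encm{m}\mid_{S}\encm{m}$: the two copies carry identical identifiers, so each event synchronises with its twin through $\alpha\bullet\alpha=\alpha$ while every other product event is cut by $\crestr{\bot}$. In the case par.\(_{\text{L}}\), the side condition $i\notin\ids(R_2)$ guarantees that the event $e_1$ supplied by the induction hypothesis on $R_1$ reappears in $\encmp{R_1'}\mid_{S}\encmp{R_2}$ as $(e_1,\star)$, re-identified to $i$ by \ref{extra1} and re-labelled $\alpha\bullet\star=\alpha$; in the case syn., the two events $e_1,e_2$ share the identifier $i$ and carry complementary labels $\lambda,\out\lambda$, so $(e_1,e_2)$ is re-identified to $i$ by \ref{valid-synch} and re-labelled $\lambda\bullet\out\lambda=\tau$, matching the transition. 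In either case the conclusion reduces to commuting the erasure of a maximal event with $\mid_{S}$, \ie $(\encmp{R_1'}\mid_{S}\encmp{R_2})\crestr{\{e\}}\iso\encmp{R_1}\mid_{S}\encmp{R_2}$, erasing $e_1$ (resp.\ $e_1$ and $e_2$) on the component(s).

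For the second half, I would use the decomposition of the reachable $R$ into threads given by structural congruence, $R\congru(m_1\rhd P_1\mid\cdots\mid m_n\rhd P_n)\bs\namelist{a}$, so that $\encmp{R}\iso\encm{m_1}\mid_{S}\cdots\mid_{S}\encm{m_n}$. If $\encmp{R}$ has no event, the claim is vacuous. Otherwise a maximal event $e$ of $\encmp{R}$ corresponds, through the re-identification clauses, either to a memory event $\mem{j,\mu,Q}$ sitting at the top of a single stack $m_k$ with $j$ occurring in no other stack — then act.\(_*\) on that thread followed by par.\ gives $R\bwlts{j}{\mu}S$ with $\encmp{S}\iso\encm{m_1}\mid_{S}\cdots\mid_{S}\encm{m_k'}\mid_{S}\cdots\iso\encmp{R}\crestr{\{e\}}$ (the side condition $j\notin\ids(m_k')$ holding because identifiers along a reachable stack are pairwise distinct) — or to a past synchronisation, \ie two top memory events $\mem{j,\mu,Q}$ of $m_k$ and $\mem{j,\out\mu,Q'}$ of $m_l$ with the same $j$, handled the same way via syn.\(_*\) combining the two threads — or, finally, to a memory event shadowed by leading \(\fork\)'s of a subtree of threads that has not yet been split apart, which is exposed by first undoing the relevant split(s) with \ref{eq:rcss-congru-distrib-mem} and then proceeding as above. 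The case $R=R_0\bs a$ is subsumed, the side condition of res.\ being met because a well-formed restricted process has no memory mentioning $a$.

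\textbf{Main obstacle.} The technical heart of both halves is the commutation of the erasure of a maximal event with the parallel composition of \(\iconf\)-structures: that removing a maximal event before composing agrees, up to $\iso$, with composing and then removing the corresponding maximal event(s). Proving it cleanly demands careful bookkeeping of the re-identification clauses \ref{valid-synch}--\ref{error} and of how maximality in a product relates to maximality in its two projections, and it relies on the correctness of the operations established in \autoref{sec:op-proof}. A second, subtler point — specific to the second half — is the handling of \(\fork\)'s: one needs reachability to know they occur in matched pairs, together with an analysis of which shadowed memory events actually remain maximal in $\encmp{R}$, so that such an $e$ can always be exposed by a preliminary application of \ref{eq:rcss-congru-distrib-mem}.
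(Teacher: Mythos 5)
Your first half is, up to bookkeeping, the paper's own argument: your base case act.\ is exactly \autoref{lem:postfix-max-event}, your par.\ steps are \autoref{lem:prevent-max-event} applied once per layer of parallel composition, and your \(\congru\) case is \autoref{lem:equivalence_encoding}; the paper merely flattens your induction on the derivation into a normal form for \(R\) and \enquote{traces} the fresh maximal event outward through each composition and restriction, and it dispatches the \(S\ebs[i:\alpha]R\) case by the Loop Lemma rather than by mirroring the rules (both work). The genuine divergence is in the second half. The paper never asks where a maximal event of \(\encmp{R}\) sits syntactically: it takes a forward-only trace \(\emptymem\rhd\orig{R}\tfbwlts R\), builds a bijection between events of \(\encmp{R}\) and transitions of that trace, observes that maximality of \(e\) forces the corresponding transition to be concurrent with every later one, and then uses trace equivalence and the square lemma (\autoref{sec:trace}) to permute that transition to the end of the trace, where the Loop Lemma reverses it. You instead read the backward transition directly off the memory stacks, arguing that a maximal event is a top-of-stack memory event, a matched pair of such events undone by syn.\(_*\), or a pre-fork event exposed by un-applying \ref{eq:rcss-congru-distrib-mem}. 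Your route avoids the trace-rearrangement machinery entirely, which is a real economy, but it shifts the burden onto a structural claim the paper never needs --- that every maximal event of \(\encmp{R}\) projects onto the \emph{top} of every stack it touches (true, but it requires showing that the chain order of each \(\encm{m_k}\) is reflected into the parallel composition) --- and onto the commutation of maximal-event erasure with \(\mid_{S}\), which you rightly flag as the main obstacle; the paper only ever needs that commutation in the forward direction, where it is packaged as \autoref{lem:prevent-max-event}.

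One small repair to your second half: the memories of a reachable process under \(\bs a\) \emph{can} mention \(a\), so the res.\ side condition is not discharged by \(a\notin\nm{m}\). It is discharged because any \(a\)-labelled memory event under \(\bs a\) necessarily arose from a synchronisation (the res.\ rule blocks a lone \(a\) from crossing \(\bs a\)), so the maximal event it contributes to is \(\tau\)-labelled and its undoing satisfies \(a\notin\tau\).
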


For the first part, it suffices to
show that the forward transition triggers the creation of a maximal event with the same identifier, and nothing else, and that this event can be \enquote{traced} in \(\encmp{S}\).
It uses intermediate lemma (Lemmas~\ref{lem:equivalence_encoding}--\ref{lem:prevent-max-event}) showing how maximal events are preserved by certain operations on \(\iconf\)-structures.
The result follows easily for backward transition, but %
the last part is more involved: it requires to show that \(e\) can be mapped to a particular transition in the trace from \(\orig{R}\) to \(R\), and, using a notion of trace equivalence, that this particular transition can be \enquote{postponed} and done last, so that \(R\) can backtrack on it.%

\begin{example}[Forward and backward transitions]
	\label{ex:op-cor}
	Looking back at the process of \autoref{ex:proc2}, we could further have \(R_1 \bwlts{1}{c} R_2 \fwlts{3}{b} R_3\), \ie
	\begin{align*}
		( \mem{2, a, 0} . \fork . \emptymem \rhd b) \mid (\mem{1,c,0} . \fork . \emptymem \rhd 0) & \bwlts{1}{c} ( \mem{2, a, 0} . \fork . \emptymem \rhd b) \mid \fork . \emptymem \rhd c)) \tag{act.\(_*\)}           \\
		                                                                                          & \fwlts{3}{b} ( \mem{3, b, 0} . \mem{2, a, 0} . \fork . \emptymem \rhd 0) \mid \fork . \emptymem \rhd c)) \tag{act.}
	\end{align*}
	We can see using \autoref{fig:ex_mem4} that \(\encmp{R_1}\crestr{c} = \encmp{R_2}\) and that \(\encmp{R_2} = \encmp{R_3} \crestr{b}\).
\end{example}

\section{Reversible and Truly Concurrent Bisimulations Are the Same}

\subsection{History-Preserving Bisimulations in Configuration Structures}
\label{sec:hpbs}
History-preserving bisimulation (\HPB)~%
\cite{Baldan2014, Phillips2014,Rabinovich1988}
and hereditary history-preserving bisimulation (\HHPB)~%
\cite{Baldan2014, Bednarczyk1991} are equivalences on configuration structures that use label- and order-preserving bijections on events.
Below, assume given \(\conf_i = (E_i, C_i, L_i, \labl_i)\) for \(i = 1, 2\).

\begin{definition}[Label- and order-preserving functions (\lop)]
	\label{def:label-order-functions}
	A function \(f: x_1 \to x_2\), for \(x_i\in C_i\), \(i\in\{1,2\}\) is \emph{label-preserving} if \(\labl_1(e) = \labl_2(f(e))\) for all \(e \in x_1\).
	It is \emph{order-preserving} if \(e_1 \leqslant_{x_1} e_2 \Rightarrow f(e_1) \leqslant_{x_2} f(e_2)\), for all \(e_1, e_2 \in x_1\).
	We write that \(f\) is \emph{\lop} if it is both.
\end{definition}

\begin{definition}[\HPB and \HHPB]
	\label{def:hpb}
	A relation \(\rel\subseteq C_1\times C_2\times (E_1 \rightharpoonup E_2)\) such that \((\emptyset,\emptyset,\emptyset)\in {\rel}\), and if \((x_1,x_2,f)\in {\rel}\), then \(f\) is a \lop bijection between \(x_1\) and \(x_2\) and (\ref{hpb1}) and (\ref{hpb2}) (\resp (\ref{hpb1}--\ref{hpb4})) hold is called a \emph{history-} (\resp \emph{hereditary history-}) \emph{preserving bisimulation (\HPB, \resp \HHPB) between \(\conf_1\) and \(\conf_2\)}.
	\begin{align}
		\forall y_1, x_1\redl{e_1}y_1 \Rightarrow
		\exists y_2, g, x_2\redl{e_2}y_2, g\frestr{x_1}
		= f, (y_1, y_2, g)\in {\rel} \label{hpb1} \\
		\forall y_2, x_2\redl{e_2}y_2 \Rightarrow
		\exists y_1, g, x_1\redl{e_1}y_1, g\frestr{x_1}
		= f, (y_1, y_2, g)\in {\rel} \label{hpb2} \\
		\forall y_1, x_1 \revredl{e_1} y_1 \Rightarrow
		\exists y_2, g, x_2 \revredl{e_2} y_2, g = f\frestr{y_1}%
		, (y_1, y_2, g)\in {\rel}\label{hpb3}     \\
		\forall y_2, x_2 \revredl{e_2} y_2 \Rightarrow
		\exists y_1, g, x_1 \revredl{e_1} y_1, g = f\frestr{y_1} %
		, (y_1, y_2, g)\in {\rel} \label{hpb4}
	\end{align}

	We write that \emph{\(\conf_1\) and \(\conf_2\) are \HorHPB} if there exists a \HorHPB relation between them.
\end{definition}

Note that \HPB and \HHPB are two different relations, as \eg \((a\mid (b+c)) + (a\mid b) + ((a+c) \mid b)\) and \((a\mid(b+c)) + ((a+c)\mid b)\) have \HPB but not \HHPB encodings~\cite{Glabbeek2001} .

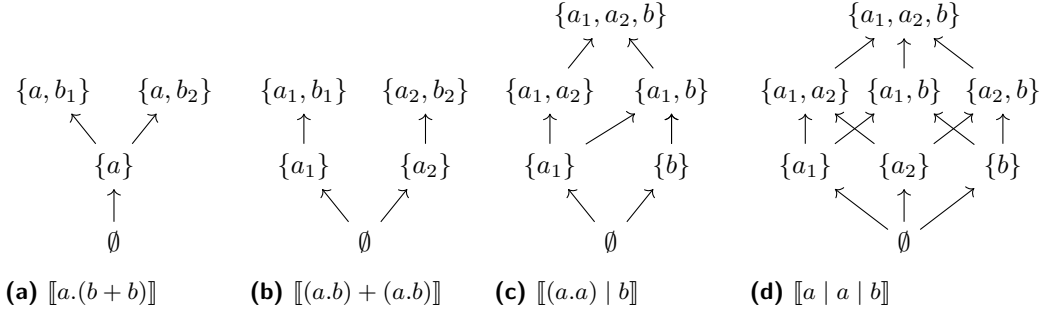
\begin{figure}
	\begin{minipage}[b]{.22\linewidth}
		\begin{tikzpicture}%
			\node (emptyset) at (1, -1) {\(\emptyset\)};
			\node (a) at (1,0) {\(\{a\}\)};
			\node (b1) at (0.2, 1) {\(\{a, b_1\}\)};
			\node (b2) at (1.8, 1) {\(\{a, b_2\}\)};
			\draw [->] (emptyset) -- (a);
			\draw [->] (a) -- (b1);
			\draw [->] (a) -- (b2);
		\end{tikzpicture}
		\subcaption{\(\enc{a. (b + b)}\)}\label{fig:ex:notH3}
	\end{minipage}
	\hfill
	\begin{minipage}[b]{.22\linewidth}
		\begin{tikzpicture}%
			\node (emptyset) at (1, -1) {\(\emptyset\)};
			\node (a1) at (0.2,0) {\(\{a_1\}\)};
			\node (a2) at (1.8,0) {\(\{a_2\}\)};
			\node (b1) at (0.2, 1) {\(\{a_1, b_1\}\)};
			\node (b2) at (1.8, 1) {\(\{a_2, b_2\}\)};
			\draw [->] (emptyset) -- (a1);
			\draw [->] (emptyset) -- (a2);
			\draw [->] (a1) -- (b1);
			\draw [->] (a2) -- (b2);
		\end{tikzpicture}
		\subcaption{\(\enc{(a. b) + (a.b)}\)}\label{fig:ex:notH4}
	\end{minipage}
	\hfill
	\begin{minipage}[b]{.23\linewidth}
		\begin{tikzpicture}%
			\node (emptyset) at (0,0) {\(\emptyset\)};
			\node (a) at (-0.8,1) {\(\{a_1\}\)};
			\node (b) at (0.8, 1) {\(\{b\}\)};
			\node (ab) at (0.8, 2) {\(\{a_1, b\}\)};
			\node (aa) at (-0.8, 2) {\(\{a_1, a_2\}\)};
			\node[align=center] (aab) at (0, 3) {\(\{a_1, a_2, b\}\)};
			\draw [->] (emptyset) -- (a);
			\draw [->] (emptyset) -- (b);
			\draw [->] (a) -- (aa);
			\draw [->] (a) -- (ab);
			\draw [->] (b) -- (ab);
			\draw [->] (b) -- (ab);
			\draw [->] (ab) -- (aab);
			\draw [->] (aa) -- (aab);
		\end{tikzpicture}
		\subcaption{\(\enc{(a . a) \mid b}\)}\label{fig:ex:notH1}
	\end{minipage}
	\hfill
	\begin{minipage}[b]{.3\linewidth}
		\begin{tikzpicture}%
			\node (emptyset) at (1, -1) {\(\emptyset\)};
			\node (a1) at (-0.3,0) {\(\{a_1\}\)};
			\node (a2) at (1,0) {\(\{a_2\}\)};
			\node (b) at (2.3, 0) {\(\{b\}\)};
			\node (aa) at (-0.3, 1) {\(\{a_1,a_2\}\)};
			\node (a1b) at (1,1) {\(\{a_1, b\}\)};
			\node (a2b) at (2.3, 1) {\(\{a_2, b\}\)};
			\node[align=center] (aab) at (1,2) {\(\{a_1, a_2, b\}\)};
			\draw [->] (emptyset) -- (a1);
			\draw [->] (emptyset) -- (a2);
			\draw [->] (emptyset) -- (b);
			\draw [->] (a1) -- (a1b);
			\draw [->] (a2) -- (a2b);
			\draw [->] (a1) -- (aa);
			\draw [->] (a2) -- (aa);
			\draw [->] (b) -- (a1b);
			\draw [->] (b) -- (a2b);
			\draw [->] (a1b) -- (aab);
			\draw [->] (a2b) -- (aab);
			\draw [->] (aa) -- (aab);
		\end{tikzpicture}
		\subcaption{\(\enc{a \mid a \mid b}\)}\label{fig:ex:notH2}
	\end{minipage}
	\caption{Configuration structures for Examples~\ref{ex:HPB-not-HHPB} and \ref{ex:not_hhpb}}
	\label{fig:ex2}
\end{figure}

\begin{example}
	\label{ex:HPB-not-HHPB}
	The encoding of the two processes \(a.(b+b)\) and \((a.b)+(a.b)\), in Fig.~\ref{fig:ex:notH3} and \ref{fig:ex:notH4}, are \HHPB: the relation can start
	by mapping \(a\) to \(a_1\) or \(a_2\),
	and then maps \(b_1\) or \(b_2\) (depending on the superset reached) to \(b_1\) or \(b_2\), according to the first choice made.
	This relation can \enquote{follow} the forward and backward movements in both structures, giving a \lop bijection.
	This example also proves that \HHPB is not CCS's structural congruence.
\end{example}

\subsection{Back-and-forth Bisimulations in Reversible CCS}
\label{sec:HHPB-CCS}

This section presents the relations we will be using, explain the restrictions on previous attempts to capture \HHPB as a relation on process algebra terms~\cite{Aubert2016jlamp,Phillips2007}, and shows why both backward transitions \emph{and} identifiers are needed to capture \HHPB.

Below, assume given two reachable processes \(R_1\) and \(R_2\), and if \(f : A \to B\) is such that \(f(a) = b\), we write \(f \setminus \{a \mapsto b\}\) for \(f \frestr{A{\setminus}\{a\}}\).

\begin{definition}[\textnormal{B\&F} and \textnormal{SB\&F} bisimulations]
	\label{def:hhpb_ident}
	A relation \(\rel\subseteq \rproc \times \rproc \times (\ids \rightharpoonup \ids)\) such that \((\emptymem\rhd \orig{R_1}, \emptymem\rhd \orig{R_2}, \emptymem)\in\rel\) and if \((R_1, R_2, f) \in \rel\), then \(f\) is a bijection between \(\ids(R_1)\) %
	and \(\ids(R_2)\)
	and (\ref{bf1}--\ref{bf4}) hold is called \emph{a back-and-forth-bisimulation (\BF) between \(R_1\) and \(R_2\)}.
	\begin{align}
		\forall S_1, R_1\fwlts{i}{\alpha}S_1 \Rightarrow
		\exists S_2, g, R_2\fwlts{j}{\alpha}S_2, g = f\cup\{i \mapsto j\}, (S_1, S_2, g)\in\rel \label{bf1}        \\
		\forall S_2, R_2\fwlts{i}{\alpha}S_2 \Rightarrow
		\exists S_1, g, R_1\fwlts{j}{\alpha}S_1, g = f\cup\{i \mapsto j\}, (S_1, S_2, g)\in\rel \label{bf2}        \\
		\forall S_1, R_1\bwlts{i}{\alpha}S_1 \Rightarrow
		\exists S_2, f, R_2\bwlts{j}{\alpha}S_2, g = f{\setminus}\{i \mapsto j\}, (S_1, S_2, g)\in\rel \label{bf3} \\
		\forall S_2, R_2\bwlts{i}{\alpha}S_2 \Rightarrow
		\exists S_1, g, R_1\bwlts{j}{\alpha}S_1, g = f{\setminus}\{i \mapsto j\}, (S_1, S_2, g)\in\rel \label{bf4}
	\end{align}
	If we remove the requirements on \(f\) and \(g\) in the second part of (\ref{bf1}--\ref{bf4}), we call \(\rel\) a \emph{simple back-and-forth bisimulation (\SBF)}.
	We write that \emph{\(R_1\) and \(R_2\) are \BF} (\resp \SBF) if there exists a \BF (\resp \SBF) relation between them.
\end{definition}

\subparagraph*{Restrictions and Previous Results}

\begin{definition}[Constraints]%
	\label{def:autoc}
	\label{def:non-repeat-cs}
	\label{def:singly-labeled}
	Given \(\conf\), if \(\forall x \in C\),
	\(\forall e_1, e_2 \in x\), \(\labl(e_1) = \labl(e_2)\) implies \(e_1 = e_2\), then \(\conf\) is \emph{non-repeating}~\cite[Definition 3.5]{Phillips2007}.
	If, \(\forall x \in C\), \(\forall e_1, e_2 \in x\), \(e_1 \co_x e_2\) and \(\labl(e_1) = \labl(e_2)\) implies \(e_1 = e_2\), then \(\conf\) is \emph{without auto-concurrency}~\cite[Definition 9.5]{Glabbeek2001}.
	A process \(R\) is \emph{non-repeating} (\resp \emph{without auto-concurrency}) if \(\enc{\orig{R}}\) is.
\end{definition}

Those are the constraints used in showing equivalences between process algebra and configuration structures.
We omitted the definition of \emph{singly labeled}~\cite[Definition 26]{Aubert2016jlamp}, as it does not contribute to the understanding of our results.
Every non-repeating process is without auto-concurrency, but being non-repeating events and singly labeled are incomparable features. %
Simple processes can be repeating (\eg \(a.a\)), with auto-concurency (\eg \((a.b) \mid a\)), not singly labeled (\eg \(a + a\)), as well as complicated processes using these patterns. %

The first syntactical characterization of \HHPB was obtained on non-repeating processes, using the \enquote{forward-reverse bisimulation} (\FR)~\cite[Definition 6.5]{Phillips2007b}, which is essentially defined as \BF, with the additional requirement that \(f = \id\).
The theorem states that non-repeating CCSK processes are \FR iff their encoding are \HHPB~\cite[Theorem 5.4]{Phillips2007}.
We argue that \FR gives too much importance to the technical apparatus implementing reversibility: processes should be able to pick their identifiers freely, and comparing them when establishing bisimulations should not require identities, but only bijections.

A second attempt~\cite{Aubert2016jlamp} to capture %
\HHPB %
used a \emph{back-and-forth barbed congruence} on RCCS processes which was proven to correspond to \HHPB on their encodings for a restricted class of processes as well, the class of singly-labeled processes. %

\subparagraph*{Pinpointing the Right Reversible Bisimulation}
We lift both restrictions in \autoref{cor:previousresult}, %
by proving that \BF captures \HHPB on \emph{all} processes.
Before doing so, let us note that even though \BF is the right notion to capture \HHPB, when restricted to non-repeating processes, which are also without auto-concurrency, it does not use in a meaningful way %
the identifiers. %

\begin{restatable}[Collapsing \BF and \SBF]{theorem}{thequivalencebf}
	\label{th:equivalence_bf}
	If \(R_1\) and \(R_2\) are without auto-concurrency, then they are \BF iff they are \SBF.%
\end{restatable}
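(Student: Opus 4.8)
The plan is to prove the two implications separately. The direction \BF $\Rightarrow$ \SBF is immediate: every \BF relation is already an \SBF relation, because the clauses defining \SBF are obtained from those of \BF by dropping the requirements tying the updated identifier function to $f$, so a relation satisfying the stronger clauses satisfies the weaker ones. Hence all the content is in \SBF $\Rightarrow$ \BF, and this is where the no-auto-concurrency hypothesis is used.

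The first step there is to extract a determinism property from the hypothesis: \emph{if $R$ is reachable and without auto-concurrency, then for each label $\alpha$ the process $R$ has at most one backward $\alpha$-transition, and it is determined by $\alpha$}. By \autoref{lem:transition_identified}, such transitions correspond bijectively to the maximal events (\autoref{def:maximal-event}) of $\encmp{R}$ labelled $\alpha$; and $\encmp{R}$ has at most one maximal event per label, since by \autoref{lem:prefix_enc} its events all lie in its single maximal configuration $x$, two distinct maximal events of $x$ are necessarily concurrent (a strictly smaller event is not maximal), and the maximal configuration of the memory encoding is the \enquote{path} inside $\enc{\orig{R}}$ that $R$ has traversed (\autoref{lem:transition_identified} together with the operational correspondence for CCS~\cite{Boudol1987}), so two of them sharing a label would be auto-concurrent in $\enc{\orig{R}}$, contradicting the hypothesis.

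Given an \SBF relation $\rel$ between $R_1$ and $R_2$, I then build a \BF relation $\rel'$ by replaying $\rel$ along \enquote{synchronized traces}: I put $(P, Q, f) \in \rel'$ whenever there are traces $\emptymem \rhd \orig{R_1} \tfbwlts P$ and $\emptymem \rhd \orig{R_2} \tfbwlts Q$ using the same sequence of labels and the same forward/backward pattern, such that every intermediate pair of processes lies in the $\rproc \times \rproc$-projection of $\rel$, and $f$ is the identifier bijection accumulated along the two traces --- adjoining the fresh pair $\{i \mapsto j\}$ at each forward step and removing a pair at each backward step. The triple $(\emptymem \rhd \orig{R_1}, \emptymem \rhd \orig{R_2}, \emptymem)$ belongs to $\rel'$ via the empty traces. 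The crux is the invariant that whenever $(P, Q, f) \in \rel'$, the map $f$ is a label- and order-\emph{iso}morphism between the maximal configurations of $\encmp{P}$ and $\encmp{Q}$ --- equivalently, using \autoref{lem:transition_identified} and the step above, the isomorphism between the posets of backward-reachable states of $P$ and of $Q$ induced by the two traces. I would prove this by induction on the synchronized trace, controlling how configurations and maximal events behave under adding an event (a forward step adds one maximal event of the matching label on each side) and under removing one (a backward step removes, on each side, \emph{the} unique maximal event of its label), using \autoref{lem:prefix_enc} and the intermediate lemmas that feed into \autoref{lem:transition_identified}.

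It then remains to verify clauses (\ref{bf1})--(\ref{bf4}) for $\rel'$. Clauses (\ref{bf1}) and (\ref{bf2}) hold by construction: an \SBF-matching forward transition extends the synchronized trace, and since a forward transition consumes a fresh identifier on each side ($i \notin \ids(P)$, $j \notin \ids(Q)$) the accumulated bijection is exactly $f \cup \{i \mapsto j\}$, a bijection $\ids(S) \to \ids(S')$. Clauses (\ref{bf3}) and (\ref{bf4}) use the determinism step: the backward $\alpha$-transition of $P$ is \emph{the} one of its label, \SBF supplies \emph{the} backward $\alpha$-transition of $Q$, and the order-isomorphism invariant forces the identifier it removes to be $f(i)$, so the accumulated bijection is exactly $f \setminus \{i \mapsto f(i)\}$, as required. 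The main obstacle is this order-isomorphism invariant: one must argue that a back-and-forth bisimulation which only matches labels, never identifiers, still forces the two processes to have the same causal structure --- the operational counterpart of the classical fact that auto-concurrency-free structures are recoverable from their reversible transition systems~\cite{Bednarczyk1991} --- and it is precisely here that the label-determinism of backward moves, hence the absence of auto-concurrency, is indispensable.
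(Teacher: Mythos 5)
Your easy direction and your determinism step are both correct: the absence of auto-concurrency does imply that \(\encmp{R}\) has at most one maximal event per label (two maximal events of the single maximal configuration are concurrent, and concurrency in \(\encmp{R}\) reflects concurrency in the address of \(R\) inside \(\enc{\orig{R}}\)), hence at most one backward transition per label. Your treatment of clauses (\ref{bf3})--(\ref{bf4}) is also sound \emph{conditionally on} your invariant that the accumulated \(f\) is a label- and order-isomorphism between the maximal configurations of \(\encmp{P}\) and \(\encmp{Q}\). The gap is that this invariant is exactly the hard part of the theorem, and the forward case of your induction does not establish it. Saying that \enquote{a forward step adds one maximal event of the matching label on each side} only tells you that \(f\cup\{e_1\mapsto e_2\}\) is a label-preserving bijection; it says nothing about why the \emph{causal past} of \(e_1\) in \(\encmp{P'}\) is mapped by \(f\) onto the causal past of \(e_2\) in \(\encmp{Q'}\). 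This cannot be argued locally: nothing in the definition of \SBF prevents the matched forward steps from being performed by threads whose memory stacks are not \(f\)-related (compare \((a.c)\mid b\) with \(a\mid (b.c)\): the two \(c\)-transitions have causally unrelated pasts, and only the \emph{subsequent} forward/backward behaviour reveals the mismatch). Ruling this out requires invoking the bisimulation property at the new state, and that argument is missing; as written, the proof assumes what it must prove.

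The paper sidesteps the need for a full order-isomorphism invariant. It builds \(f\) along a \emph{forward-only} trace \(\theta_1\) from \(\emptymem\rhd\orig{R_1}\) to \(R_1\) (such a trace exists by~\cite[Lemma 10]{Danos2004}) and then shows directly, by contradiction, that any pair of backward transitions matched by the \SBF must have \(f\)-paired identifiers. The contradiction only uses three targeted facts about the transitions \(t_i, t_{i'}\) of \(\theta_1\) and \(t_j, t_{j'}\) of the matching trace \(\theta_2\) that carry the relevant identifiers: (i) two same-labelled transitions of an auto-concurrency-free process cannot be concurrent, so they are causally ordered; (ii) a cause cannot be undone while its effect is still recorded in the memory; and (iii) the causal order of transitions is consistent with the order in which the pairs were added to \(f\). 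If you want to keep your architecture, you should either supply a proof of the forward inductive step of your invariant along these lines (deriving a contradiction from \(d<e_1\) but \(f(d)\co e_2\) by backtracking and using label-determinism of backward moves), or replace the invariant by the paper's weaker, directly verifiable claim about the matched backward transitions.
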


The intuition---made formal in \autoref{subsec:collapse-proof}---is that since two concurrent transitions sharing the same label can not be fired at the same time,
the identifiers do not add any information.
The proof is easy for the forward transitions, and uses an order on the transitions enforced by causality for the backward traces.
In the presence of auto-concurrency, the relations differ, \eg the process with auto-concurrency \(a\mid a\) and \(a.a\) are \SBF but not \BF.

\begin{example}[Reversibility is \emph{not} \enquote{just back and forth}]
	\label{ex:not_hhpb}
	Observe that the bisimulation relation obtained by only considering (\ref{bf1}--\ref{bf2}) and ignoring the identifiers in \autoref{def:hhpb_ident} is the \enquote{standard} CCS bisimulation. %
	Hence, it could seem natural to assume that \enquote{simply adding the backwards transitions}, \ie taking (\ref{bf1}--\ref{bf4}) without the identifiers, giving \SBF, would be \enquote{the right} bisimulation for RCCS.
	Processes like
	\((a.a) \mid b\) and %
	\(a \mid a \mid b\) are %
	\SBF, but %
	their encodings, presented in Fig.~\ref{fig:ex:notH1} and \ref{fig:ex:notH2}, are not \HPB and hence not \HHPB: \SBF does not account for reversibility in a satisfactory manner.

\end{example}

Both the bijection on identifiers \emph{and} backward transitions are necessary to capture \HHPB. Indeed, as suggested by \autoref{ex:not_hhpb}, \enquote{simply} considering forward and backward transitions is not enough. Let us now consider the role of the bijection on identifiers a bit further. A first remark is that \autoref{th:equivalence_bf} shows that it is easy to overlook the role of identifiers when restricting the class of processes considered. Secondly, we can prove, as an immediate corolary of Theorems \ref{th:hpb_rccs} and \ref{thm:def_hpb_coincide}, that considering only (\ref{bf1}--\ref{bf2}) (\emph{with} the identifiers) in \autoref{def:hhpb_ident} gives a characterization of \HPB (\autoref{cor:fw-with-id}): if anything, having a bijection between identifiers--thanks to the order on events that can be deducted from it--helps getting closer to \enquote{truly concurrent} bisimulation than adding backward transitions does. However, as \HPB and \HHPB do not coincide, the identifiers are not enough either.

Of course, similar mechanisms could achieve similar results, but it is our hope that reversibility is fully understood as not \enquote{just} being about adding backward transitions or memories, but to use both to obtain \enquote{backward determinism}.

\subsection{History-Preserving Bisimulations in (R)CCS}
\label{sec:lifting}

Proving our main result (\autoref{cor:previousresult})
will use intermediate relations on processes---called \HPB and \HHPB as well---that use the encoding of the memories into \(\iconf\)-structures. %
Those relations are proven to correspond to \HorHPB on the encoding of the processes on one hand (\autoref{th:hpb_rccs}), and the one %
that characterizes \HHPB is proven to coincide with \BF (\autoref{thm:def_hpb_coincide}) on the other hand.
The proofs and connections between formalisms are gathered in \autoref{subsec:proof-main}.

\begin{definition}[\HPB and \HHPB on RCCS]
	\label{def:hpb_rccs}
	A relation \(\rel\subseteq \rproc \times \rproc \times (E_1 \rightharpoonup E_2)\) such that \((\emptymem \vartriangleright \orig{R_1}, \emptymem \vartriangleright \orig{R_2}, \emptyset) \in\rel\) and if \((R_1, R_2, f) \in \rel\) then \(f\) is an isomorphism between \(\encmp{R_1}\) and \(\encmp{R_2}\) and (\ref{hpb_rccs1}) and (\ref{hpb_rccs2}) (\resp (\ref{hpb_rccs1}--\ref{hpb_rccs4})) hold is called a \emph{history-}(\resp \emph{hereditary history-}) \emph{preserving bisimulation between \(R_1\) and \(R_2\)}.
	\begin{align}
		\forall S_1, R_1\fwlts{i}{\alpha}S_1 \Rightarrow
		\exists S_2, g, R_2\fwlts{j}{\alpha}S_2, g\frestr{\encmp{R_1}} = f, (S_1, S_2, g)\in\rel \label{hpb_rccs1}  \\
		\forall S_2, R_2\fwlts{i}{\alpha}S_2 \Rightarrow
		\exists S_1, g, R_1\fwlts{j}{\alpha}S_1, g\frestr{\encmp{R_1}} = f , (S_1, S_2, g)\in\rel \label{hpb_rccs2} \\
		\forall S_1, R_1\bwlts{i}{\alpha}S_1 \Rightarrow
		\exists S_2, f, R_2\bwlts{j}{\alpha}S_2, g = f\frestr{\encmp{S_1}}, (S_1, S_2, g)\in\rel \label{hpb_rccs3}  \\
		\forall S_2, R_2\bwlts{i}{\alpha}S_2 \Rightarrow
		\exists S_1, g, R_1\bwlts{j}{\alpha}S_1, g = f\frestr{\encmp{S_1}}, (S_1, S_2, g)\in\rel \label{hpb_rccs4}
	\end{align}

	We write that \emph{\(R_1\) and \(R_2\) are \HorHPB} if there exists a \HorHPB relation between them.

\end{definition}
Above, we write \(g\frestr{\encmp{R}}\)%
for the restriction of each component of \(g\) to \(\encmp{R}\). %
Note that the definitions above reflect the definition of \HorHPB (\autoref{def:hpb}): the condition \((\emptymem \vartriangleright \orig{R_1}, \emptymem \vartriangleright \orig{R_2}, \emptyset) \in \rel\) is intuitively the counterpart to the condition that \((\emptyset,\emptyset,\emptyset)\) has to be included in the relation on configuration structures.
Also, \(f\) shares similarity with the \lop bijection, in the sense that it exists, with \(\id\) as the component on the labels, iff there exists a \lop bijection between the unique maximal configurations in \(\encmp{R_1}\) and \(\encmp{R_2}\) (\autoref{lem:lop_preserve_conf}).

Relations defined on RCCS (\BF, \SBF, \HPB and \HHPB) straightforwardly extend to CCS, by simply stating that \(P_1\) and \(P_2\) are in it if \(\emptymem \vartriangleright P_1\) and \(\emptymem \vartriangleright P_2\) are too.
Therefore we can state below our results in terms of CCS processes. %

\begin{restatable}[Equivalences]%
	{theorem}{hpbrccs}
	\label{th:hpb_rccs}
	\(P_1\) and \(P_2\) are \HHPB (\resp \HPB) iff \(\enc{P_1}\) and \(\enc{P_2}\) are.
\end{restatable}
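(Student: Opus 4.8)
The plan is to route both implications through the correspondence, set up in \autoref{sec:op-corr}, between a reachable RCCS process \(R\) (taken up to \(\congru\)) and a configuration of \(\enc{\orig R}\). By \autoref{lem:prefix_enc}, \(\encmp{R}\) is a poset with a unique maximal configuration, call it \(x_R\); forgetting identifiers, \(x_R\) is a configuration of \(\enc{\orig R}\), and \autoref{lem:transition_identified} together with the operational correspondence for CCS makes this assignment dynamic-respecting: the forward (resp.\ backward) transitions of \(R\), indexed by their identifier and labelled \(\alpha\), are in bijection with the one-step extensions \(x_R \redl{e} y\) (resp.\ reductions \(x_R \revredl{e} y\)) of \(x_R\) in \(\enc{\orig R}\), with \(\labl(e) = \alpha\) and \(\iden(e)\) the identifier of the transition; in particular the maximal events of \(\encmp{R}\) (\autoref{def:maximal-event}) are exactly the events removable from \(x_R\) in \(\enc{\orig R}\). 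Finally, by \autoref{lem:lop_preserve_conf}, an isomorphism \(f\colon \encmp{R_1} \iso \encmp{R_2}\) with identity label component exists iff there is a \lop{} bijection \(g\) between \(x_{R_1}\) and \(x_{R_2}\); and since \(\encmp{R_i}\) is a poset whose events all lie in its maximal configuration and \(\iden_i\) is injective on it by \eqref{eq:collision}, such an \(f\) is uniquely determined by \(g\) (with \(f_E = g\), \(f_L = \id\), \(f_{\iden}(\iden_1(e)) = \iden_2(g(e))\)). Write \(g_f\) for the \lop{} bijection underlying \(f\) and \(f_g\) for the isomorphism induced by \(g\); these are mutually inverse. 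Throughout, ``\(P_1, P_2\) are \HHPB{} (resp.\ \HPB)'' abbreviates ``\(\emptymem \rhd P_1, \emptymem \rhd P_2\) are'' in the sense of \autoref{def:hpb_rccs}.

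\textbf{From RCCS to configuration structures.} Let \(\rel\) be an \HHPB{} (resp.\ \HPB) relation between \(\emptymem \rhd P_1\) and \(\emptymem \rhd P_2\), and let \(\rel'\) relate \((x_{R_1}, x_{R_2}, g_f)\) for every \((R_1, R_2, f) \in \rel\). Since \(\encmp{\emptymem \rhd P_i} = \iconfzero\) has maximal configuration \(\emptyset\), which is the empty configuration of \(\enc{P_i}\), we get \((\emptyset, \emptyset, \emptyset) \in \rel'\); each \(g_f\) is a \lop{} bijection by \autoref{lem:lop_preserve_conf}. For \eqref{hpb1}: given \((x_{R_1}, x_{R_2}, g_f) \in \rel'\) and \(x_{R_1} \redl{e_1} y_1\) in \(\enc{P_1}\), the correspondence yields \(R_1 \fwlts{i}{\alpha} S_1\) with \(x_{S_1} = y_1\); applying \eqref{hpb_rccs1} gives \(R_2 \fwlts{j}{\alpha} S_2\) and an \(f'\) with \(f' \frestr{\encmp{R_1}} = f\) and \((S_1, S_2, f') \in \rel\); then \(x_{R_2} \redl{e_2} x_{S_2}\) in \(\enc{P_2}\) and \((x_{S_1}, x_{S_2}, g_{f'}) \in \rel'\) with \(g_{f'} \frestr{x_{R_1}} = g_f\). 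Condition \eqref{hpb2} is symmetric, and for \HHPB{}, \eqref{hpb3}--\eqref{hpb4} follow the same pattern from \eqref{hpb_rccs3}--\eqref{hpb_rccs4}, using that removing a removable event of \(x_{R_1}\) corresponds to a backward transition of \(R_1\). Hence \(\rel'\) is an \HHPB{} (resp.\ \HPB) between \(\enc{P_1}\) and \(\enc{P_2}\).

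\textbf{From configuration structures to RCCS.} Let \(\rel'\) be an \HHPB{} (resp.\ \HPB) between \(\enc{P_1}\) and \(\enc{P_2}\), and put \((R_1, R_2, f) \in \rel\) whenever \(\orig{R_i} = P_i\), \((x_{R_1}, x_{R_2}, g) \in \rel'\) for some \lop{} bijection \(g\), and \(f = f_g\). The base triple lies in \(\rel\) since \((\emptyset, \emptyset, \emptyset) \in \rel'\) and the only reachable process with empty maximal configuration and origin \(P_i\) is \(\emptymem \rhd P_i\), giving \(f = \emptyset\). For \eqref{hpb_rccs1}: \(R_1 \fwlts{i}{\alpha} S_1\) induces \(x_{R_1} \redl{e_1} x_{S_1}\) in \(\enc{P_1}\) with \(\labl(e_1) = \alpha\); \eqref{hpb1} yields \(x_{R_2} \redl{e_2} y_2\) and an extension \(g' \supseteq g\) with \((x_{S_1}, y_2, g') \in \rel'\); choosing \(j \notin \ids(R_2)\) fresh, the correspondence gives \(R_2 \fwlts{j}{\alpha} S_2\) with \(x_{S_2} = y_2\), and \(f_{g'}\) restricts to \(f_g\) on \(\encmp{R_1}\), so \((S_1, S_2, f_{g'}) \in \rel\). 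Condition \eqref{hpb_rccs2} is symmetric. For \eqref{hpb_rccs3}: \(R_1 \bwlts{i}{\alpha} S_1\) means the event \(e_1\) with \(\iden_1(e_1) = i\) is maximal in \(\encmp{R_1}\), hence removable from \(x_{R_1}\), so \(x_{R_1} \revredl{e_1} x_{S_1}\); \eqref{hpb3} yields \(x_{R_2} \revredl{e_2} y_2\) with \(e_2 = g(e_1)\) (so \(\labl(e_2) = \alpha\)) and \(g' = g \frestr{x_{S_1}}\), and as \(e_2\) is removable from \(x_{R_2}\) it is maximal in \(\encmp{R_2}\), so \autoref{lem:transition_identified} gives \(R_2 \bwlts{\iden_2(e_2)}{\alpha} S_2\) with \(x_{S_2} = y_2\); then \((S_1, S_2, f_{g'}) \in \rel\) and \(f_{g'} = f \frestr{\encmp{S_1}}\). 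Condition \eqref{hpb_rccs4} is symmetric. Thus \(\rel\) is an \HHPB{} (resp.\ \HPB) on RCCS.

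\textbf{Main obstacle.} The real content lives in the first paragraph: that \(x_R\) is genuinely a configuration of \(\enc{\orig R}\), with its causal order matching the one induced in \(\enc{\orig R}\), and that the transitions of \(R\) biject with the one-step moves of \(x_R\) there. This is where the nonstandard parallel composition of \(\iconf\)-structures (\autoref{def:par-ics}), with its handling of forks and of synchronisation re-identification, must be reconciled with the CCS parallel composition (\autoref{def:par-comp-ccs}); it is carried by \autoref{lem:transition_identified} and the lemmas it builds on, together with \autoref{lem:prefix_enc}. Everything else is bookkeeping: that picking a fresh identifier for the response keeps the induced isomorphisms coherent (harmless, since the LTS allows any fresh identifier and \(\iden\) is injective by \eqref{eq:collision}), and that the restriction \(g \frestr{y_1}\) on configurations and \(f \frestr{\encmp{S_1}}\) on processes denote the same data.
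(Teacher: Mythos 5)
Your overall architecture is the one the paper uses: both directions are routed through the identification of a reachable \(R\) with a configuration \(x_R\) of \(\enc{\orig R}\), isomorphisms of \(\iconf\)-structures are traded for \lop{} bijections via \autoref{lem:lop_preserve_conf}, and the four transfer conditions are then bookkeeping. The constructed relations are essentially the paper's \(\rel\) and \(\rel_{\text{CONF}}\)-to-\(\rel_{\text{RCCS}}\) translations.

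There is, however, a genuine gap exactly where you locate the ``main obstacle,'' and the lemmas you invoke to discharge it do not reach far enough. You assert that, forgetting identifiers, the maximal configuration \(x_R\) of \(\encmp{R}\) \emph{is} a configuration of \(\enc{\orig R}\), and that the transitions of \(R\) biject with the one-step moves of \(x_R\) inside \(\enc{\orig R}\); you then claim this is carried by \autoref{lem:transition_identified} together with \autoref{lem:prefix_enc}. But both of those lemmas live entirely on the \(\encmp{\cdot}\) side: \autoref{lem:prefix_enc} says \(\encmp{R}\) is a poset, and \autoref{lem:transition_identified} relates transitions of \(R\) to maximal events of \(\encmp{R}\) and to restrictions of \(\encmp{R}\) --- neither mentions \(\enc{\orig R}\), whose events are built by a \emph{different} construction (prefixing and the CCS parallel composition of \autoref{def:par-comp-ccs}, versus postfixing and the re-identifying parallel composition of \autoref{def:par-ics}). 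The bridge you need is precisely \autoref{lem:enc_mem_2}, which states \(\encmp{R} \iso x_R{\downarrow}\) for \(x_R\) the address of \(R\) in the encoding \(\enc{R} = (\enc{\orig R}, x_R)\) of the earlier formalism, and whose proof is a nontrivial induction on a forward-only trace combining \autoref{lem:transition_identified} with the \emph{separate} operational correspondence between \(R\) and \((\enc{\orig R}, x_R)\) from prior work. Without citing that correspondence and \autoref{lem:enc_mem_2} (or reproving them), your first paragraph --- on which every subsequent step rests, in particular the claims that \(R_1 \fwlts{i}{\alpha} S_1\) induces \(x_{R_1} \redl{e_1} x_{S_1}\) in \(\enc{P_1}\) and that maximal events of \(\encmp{R}\) are exactly the events removable from \(x_R\) in \(\enc{\orig R}\) --- is unsupported. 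Once that bridge is in place, the rest of your argument goes through and matches the paper's.
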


This result can easily be extended to \emph{weak}-\HPB and \emph{weak}-\HHPB~\cite{Bednarczyk1991, Phillips2014}, which are defined by removing from \HorHPB on configuration structures (\autoref{def:hpb}) and on RCCS (\autoref{def:hpb_rccs}) the condition that \(f\) must be preserved from one step to the next one. %

\begin{restatable}[Equivalence (contd)]{theorem}{bfhpbcoincide}
	\label{thm:def_hpb_coincide}
	\(P_1\) and \(P_2\) are \BF iff they are \HHPB.
\end{restatable}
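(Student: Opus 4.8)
The plan is to derive both implications from a single dictionary between the two kinds of relation, built on the observation that, writing $\iden_R$ for the identifying function of $\encmp{R}$, for a reachable $R$ this function restricts to a bijection $\iden_R : E_R \to \ids(R)$ between the events of the memory encoding and the identifiers occurring in $R$. That this is a bijection is the first thing I would establish: by \autoref{lem:prefix_enc} the poset $(\encmp{R},\subseteq)$ has a greatest element $x$, so the maximal configuration $x$ contains every event of $\encmp{R}$, on which $\iden_R$ is injective by the \ref{eq:collision} axiom; and since $R$ is reachable its coherent memory assigns to each identifier of $\ids(R)$ exactly one such event, so $\iden_R$ is surjective as well. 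I then send a triple $(R_1,R_2,f)$ of \autoref{def:hhpb_ident}, with $f$ a bijection $\ids(R_1)\to\ids(R_2)$, to the event map $\tilde f := \iden_{R_2}^{-1}\circ f\circ\iden_{R_1}$, and a triple $(R_1,R_2,g)$ of \autoref{def:hpb_rccs}, with $g$ an isomorphism $\encmp{R_1}\to\encmp{R_2}$, to the identifier map $\hat g := \iden_{R_2}\circ g\circ\iden_{R_1}^{-1}$; these translations are mutually inverse, and each carries the root triple of one definition to the root triple of the other, since $\encmp{\emptymem\rhd\orig{R}}=\confzero$ has neither events nor identifiers.

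For the direction from \HHPB to \BF, given a \HHPB relation $\rel$ on RCCS I would check that $\rel':=\{(R_1,R_2,\hat g)\mid(R_1,R_2,g)\in\rel\}$ is a \BF relation. Each $\hat g$ is an identifier bijection by construction, so only the transfer clauses need work. For (\ref{bf1}), a step $R_1\fwlts{i}{\alpha}S_1$ creates, by \autoref{lem:transition_identified}, a maximal event $e_1\in E_{S_1}$ with label $\alpha$, with $\iden_{S_1}(e_1)=i$, and with $\encmp{R_1}\iso\encmp{S_1}\crestr{\{e_1\}}$; clause (\ref{hpb_rccs1}) supplies $R_2\fwlts{j}{\alpha}S_2$, an isomorphism $g':\encmp{S_1}\to\encmp{S_2}$ with $g'\frestr{\encmp{R_1}}=g$, and $(S_1,S_2,g')\in\rel$, and \autoref{lem:transition_identified} also names the fresh maximal event $e_2\in E_{S_2}$ with $\iden_{S_2}(e_2)=j$. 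Identifying $\encmp{R_k}$ with the substructure $\encmp{S_k}\crestr{\{e_k\}}$, the event $e_1$ is the unique event of $\encmp{S_1}$ outside the domain of $g$ and $e_2$ the unique event of $\encmp{S_2}$ outside its image, so the bijection $g'$ must send $e_1$ to $e_2$; hence $\hat{g'}=\hat g\cup\{i\mapsto j\}$, which is exactly what (\ref{bf1}) requires. The backward clause (\ref{bf3}) is dual, using $g'=g\frestr{\encmp{S_1}}$ from (\ref{hpb_rccs3}) to force $g$ to pair the event deleted on the left with the one deleted on the right, and (\ref{bf2}) and (\ref{bf4}) are symmetric. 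This direction is essentially bookkeeping once \autoref{lem:transition_identified} is available.

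For the converse, from \BF to \HHPB, given a \BF relation $\rel$ I would first replace it by the smallest sub-relation containing the root and closed under the witnesses demanded by (\ref{bf1}--\ref{bf4})---still a \BF relation---so that every triple is reachable from the root, and then set $\rel':=\{(R_1,R_2,\tilde f)\mid(R_1,R_2,f)\in\rel\}$. Translating the transfer clauses is again bookkeeping, completely parallel to the previous paragraph. The substantive point is to show that each $\tilde f$ really is an isomorphism $\encmp{R_1}\to\encmp{R_2}$, which I would prove by induction on the length of the trace in $\rel$ reaching $(R_1,R_2,f)$. The base case is the root, where the empty identifier map translates to the empty map on $\confzero$, trivially an isomorphism. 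For the inductive step with a forward transition, if $(S_1,S_2,g)$ is reached from $(R_1,R_2,f)$ then by \autoref{lem:transition_identified} each $\encmp{S_k}$ is $\encmp{R_k}$ with a single maximal event $e_k$ added ($\iden(e_1)=i$, $\iden(e_2)=j$, both carrying the label $\alpha$), and $\tilde g=\tilde f\cup\{e_1\mapsto e_2\}$; labels are preserved because $\tilde f$ preserves labels by the induction hypothesis and $e_1,e_2$ share the label $\alpha$, and---since deleting a maximal event from a configuration structure does not alter the causality among the surviving events---it remains only to check that, writing $x_k$ for the maximal configuration of $\encmp{S_k}$, one has $d<_{x_1}e_1$ iff $\tilde f(d)<_{x_2}e_2$ for every $d\in E_{R_1}$, which also makes $\tilde g$ a bijection on configurations. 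The inductive step with a backward transition is dual.

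I expect this last point---matching the causal position of the freshly created event on the two sides---to be the main obstacle, since the transitions $R_1\fwlts{i}{\alpha}S_1$ and $R_2\fwlts{j}{\alpha}S_2$ are related only through $f$ and the common label $\alpha$. I would settle it by reading off, from the LTS rules of \autoref{fig:ltsrules} and the memory encoding of \autoref{def:encode_m}---via the auxiliary results on preservation of maximal events, Lemmas \ref{lem:equivalence_encoding}--\ref{lem:prevent-max-event}---that the events below $e_1$ in $\encmp{S_1}$ (which all lie in $E_{R_1}$) are precisely the $\tilde f$-preimages of the events below $e_2$ in $\encmp{S_2}$: morally, both are the causal history stored in the memory of the thread that moved, and the \BF matching keeps these histories in lock-step up to the renaming $f$. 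Once this is in place, $\tilde g$ maps the configurations of $\encmp{S_1}$ containing $e_1$ bijectively onto those of $\encmp{S_2}$ containing $e_2$ and the remaining ones onto the configurations of $\encmp{R_2}$ missing $e_2$, so $\tilde g$ is an isomorphism. Combining the two directions proves the theorem, and together with \autoref{th:hpb_rccs} it yields the main result \autoref{cor:previousresult}.
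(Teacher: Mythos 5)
Your overall architecture matches the paper's: the same dictionary between identifier bijections and event isomorphisms (legitimate because, by \autoref{lem:id_unique} and the reachability of the processes, \(\iden_R\) is a bijection from \(E_R\) onto \(\ids(R)\)), the same reliance on \autoref{lem:transition_identified} to translate the transfer clauses, and your \HHPB\(\Rightarrow\)\BF direction is sound as written: the freshly created events are indeed the unique ones outside the domain and image of the old isomorphism, so the extended isomorphism must pair them. The problem is in the converse direction, at exactly the point you flag as ``the main obstacle'': showing that \(d<_{x_1}e_1\) iff \(\tilde f(d)<_{x_2}e_2\), i.e.\ that the fresh events \(e_1\) and \(e_2\) occupy matching causal positions. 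Your proposed resolution---reading this off from the LTS rules, \autoref{def:encode_m}, and Lemmas \ref{lem:equivalence_encoding}--\ref{lem:prevent-max-event}---does not work. Those lemmas describe how maximal events behave under structural congruence, postfixing and parallel composition \emph{within a single process's encoding}; they say nothing about how the causal histories of two \emph{different} processes related only by a \BF bisimulation line up. The claim that ``the \BF matching keeps these histories in lock-step up to the renaming \(f\)'' is precisely the statement to be proven, not something the encoding supplies for free: the whole reason \((a.a)\mid b\) and \(a\mid a\mid b\) are separated by \BF but not by \SBF is that this lock-step property is a consequence of the \emph{backward} clauses, not of the forward construction of the memories.

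What the paper actually does here is a contradiction argument that makes essential, iterated use of the backward clauses of \BF together with the third item of \autoref{lem:transition_identified} (a maximal event is exactly one that can be undone). Supposing \(e_1\co_{x_1^m}e\) while \(F(e)<_{x_2^m}e_2\), one backtracks in \(S_1\) the chain of events above \(e\) one maximal event at a time; each backward step must be matched by \(S_2\) on the \(f\)-paired identifier, and since \(e_1\) is concurrent with \(e\) none of these steps removes \(e_2\). When \(e\) finally becomes maximal on the left it can be undone, but its partner \(F(e)\) still has \(e_2\) above it on the right and cannot be, contradicting clause (\ref{bf3}). Without an argument of this shape---one that actively exercises the backward transitions of the bisimulation to probe causality---your induction step does not go through, so the \BF\(\Rightarrow\)\HHPB direction is left with a genuine gap.
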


\autoref{th:hpb_rccs} (\resp \autoref{thm:def_hpb_coincide}) uses our operational correspondence between RCCS processes (\resp RCCS memories) and their encodings as configuration structures~\cite[Lemma 6]{Aubert2016jlamp} (\resp as \(\iconf\)-structure~(\autoref{lem:transition_identified})) to transition between the semantic and syntactic worlds.

Our main result will come as an immediate corollary of Theorems~\ref{th:hpb_rccs} and \ref{thm:def_hpb_coincide}. %
\begin{corollary}[Main result]
	\label{cor:previousresult}
	\(P_1\) and \(P_2\) are \BF iff \(\enc{P_1}\) and \(\enc{P_2}\) are \HHPB.
\end{corollary}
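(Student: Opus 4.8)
The plan is to read the corollary as a one-line composition of Theorems~\ref{th:hpb_rccs} and~\ref{thm:def_hpb_coincide}, paying attention only to the fact that the name \HHPB is overloaded: it denotes both the relation on RCCS (hence CCS) processes from \autoref{def:hpb_rccs} and the relation on configuration structures from \autoref{def:hpb}. \autoref{th:hpb_rccs} is precisely the bridge between the two readings, stating that $P_1$ and $P_2$ are \HHPB \emph{as processes} iff $\enc{P_1}$ and $\enc{P_2}$ are \HHPB \emph{as configuration structures}, while \autoref{thm:def_hpb_coincide} identifies the process-level \HHPB with \BF.

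Concretely, I would first apply \autoref{thm:def_hpb_coincide} to replace the hypothesis that $P_1$ and $P_2$ are \BF by the statement that $P_1$ and $P_2$ are \HHPB as processes, and then apply \autoref{th:hpb_rccs} to replace the latter by the statement that $\enc{P_1}$ and $\enc{P_2}$ are \HHPB as configuration structures. Both are biconditionals, so no direction-specific reasoning is needed, and since recursion-free CCS is the ambient setting of both theorems there is no side condition to discharge; transitivity of logical equivalence yields exactly the claim.

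There is no real obstacle at the level of the corollary itself; the mathematical content lives in the two cited theorems, and it is worth flagging where. In \autoref{th:hpb_rccs} the work is to transport a \lop bijection between the (unique) maximal configurations of $\encmp{R_1}$ and $\encmp{R_2}$ to and from an isomorphism of $\iconf$-structures (via \autoref{lem:lop_preserve_conf}), and to match forward and backward process transitions with the forward and backward configuration moves of \autoref{def:hpb}, using the operational correspondence of \autoref{lem:transition_identified} to cross between the syntactic and denotational sides. In \autoref{thm:def_hpb_coincide} the work is to see that the bijection on identifiers carried by a \BF relation is exactly the data of an isomorphism of the encoded memories, again leaning on \autoref{lem:transition_identified}. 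Once those are in place, the corollary follows immediately.
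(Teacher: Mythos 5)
Your proposal is correct and matches the paper exactly: the corollary is obtained by chaining \autoref{thm:def_hpb_coincide} (identifying \BF with the process-level \HHPB of \autoref{def:hpb_rccs}) with \autoref{th:hpb_rccs} (transporting the latter to \HHPB on \(\enc{P_1}\) and \(\enc{P_2}\)), and your attribution of where the real work lives is accurate. Nothing further is needed.
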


\section{Concluding Remarks}
\label{sec:conclusion}

This work offers a \enquote{definitive} answer to the question of finding a meaningful bisimulation for reversible LTS by providing relations that correspond to \HorHPB on their encodings on \emph{all} processes.
We believe this contribution is of value because:
\begin{enumerate*}
	\item This result solves a problem that was open since \HHPB was defined~\cite{Bednarczyk1991}, nearly 30 years ago, for which despite the use of multiple techniques, only partial results were obtained,
	\item This idea in appearance simple still requires a lot of technical work, as detailed in the Appendix,
	\item The use of reversibility (both the backtracking capability \emph{and} the memory mechanism) is critical to characterize \HHPB on syntactical terms.
\end{enumerate*}
This result also enforces the importance of identifiers in general and not just as part of a backtracking mechanism.
Indeed, they are generally already present when concurrency is implemented, \eg when two \texttt{Unix} threads terminate with the same signal, the parent process have the capacity of determining which process sent which signal.

As a byproduct of our result, we also proposed an encoding of RCCS memories into an \enquote{enriched} configuration structure, called identified configuration structure.
This observation echoes our previous formalism~\cite{Aubert2016jlamp} and similar encoding~\cite{Graversen2018} in an interesting way:
as mentioned in the Introduction, a reversible process \(R\) was encoded as a pair \((\enc{\orig{R}}, x_{R})\) made of the configuration structure encoding the origin of \(R\), and a configuration \(x_{R}\) in it, called the address of \(R\).
The intuition was that we could \enquote{match} a partially executed process with a configuration. We can now go further by observing that \(\encmp{R}\) is isomorphic to the \(\iconf\)-structure
generated by \(x_R\), which is everything \enquote{below} it.
This result (\autoref{lem:enc_mem_2}) is used in our proof, and exemplified by \autoref{ex:proc2}: the encoding of the memory of \((\mem{2, a, 0} . \fork . \emptymem \rhd b) \mid (\mem{1,c,0} . \fork . \emptymem \rhd 0)\) corresponds the \enquote{past} of the process, whose underlying structure is grayed out in \autoref{fig:ex_mem4}, and what is left to execute---\(b \mid 0\)---corresponds to the \enquote{future} of that process, and is represented by the configuration \(\{c, a, b\}\) in \autoref{fig:ex_mem4}.

\bibliography{standalone} %

\appendix
\clearpage{}%
\section{Event Structures as Categories}%
\label{sec:ct-app}

Configuration structures
often use the insights provided by the
categorical framework~\cite{Aubert2016jlamp,Sassone1996,Winskel1982}.
This appendix regroups the categorical treatment of (identified) configuration structures (\autoref{def:conf_str}).%

\begin{definition}[Category of configuration structures]
	\label{def:cat_lcs}
	We define \(\catcs\) the category of configuration structures, where an object is a configuration structure, and a morphism \(f: \conf_1 \to \conf_2\) is a triple \((f_E, f_L, f_C)\) such that
	\begin{itemize}
		\item \(f_L : L_1\to L_2\);
		\item \(f_E : E_1 \to E_2\) preserves labels:
		      \(\labl_2(f_E(e)) = f_L(\labl_1(e))\);
		\item \(f_C : C_1 \to C_2\) is defined as
		      \(f_C(x) = \{ f_E(e) \setst e\in x\}\).
	\end{itemize}
	We write \(\conf_1 \iso \conf_2\) if there exists an isomorphism between \(\conf_1\) and \(\conf_2\).
\end{definition}
For simplicity, we often assume that \(L_1 = L_2\), \ie, that all the configuration structures use the same set of labels, take \(f_L\) to be the identity and remove it from the notation.

\begin{definition}[Category of \(\iconf\)-structures]
	\label{def:cat_ics}
	We define \(\catics\) the category of identified configuration structures, where objects are \(\iconf\)-structures, and a morphism \(f: \iconf_1 \to \iconf_2\) is a tuple \( q = (f, f_{\iden})\) such that
	\begin{itemize}
		\item \(f = (f_E, f_C)\) is a morphism in \(\catcs\) between the underlying structures of \(\iconf_1\) and \(\iconf_2\),%
		\item \(f_{\iden} : I_1 \to I_2\) preserves identifiers: \(f_{\iden}(\iden_1(e)) = \iden_2(f_E(e))\).
	\end{itemize}
	We write \(\iconf_1 \iso \iconf_2\) if there exists an isomorphism between \(\iconf_1\) and \(\iconf_2\).
\end{definition}
Observe that %
\(\catcs\) is a subcategory of \(\catics\).
In both \(\catcs\) and \(\catics\), composition is written \(\circ\) and defined componentwise.

\begin{lemma}
	\label{lem:cat_ics}
	Identified configuration structures and their morphisms form a category.
\end{lemma}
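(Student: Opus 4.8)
The plan is a routine verification of the three category axioms — existence of identities, closure under composition, and associativity together with the unit laws — building on the fact, recalled in \autoref{sec:ct-app}, that \(\catcs\) is a category.

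First I would exhibit identities. For an \(\iconf\)-structure \(\iconf\), set \(\id_{\iconf} = (f, f_{\iden})\) where \(f = (\id_E, \id_C)\) is the identity morphism of \(\catcs\) on the underlying configuration structure \(\conf\), and \(f_{\iden} = \id_{I}\). Label preservation is inherited from \(\catcs\), and identifier preservation is immediate: \(\id_{I}(\iden(e)) = \iden(\id_E(e))\) for all \(e\).

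Next, closure under composition. Given \(q_1 = (f, f_{\iden}) : \iconf_1 \to \iconf_2\) and \(q_2 = (g, g_{\iden}) : \iconf_2 \to \iconf_3\), I would define \(q_2 \circ q_1 = (g \circ f,\ g_{\iden} \circ f_{\iden})\), the componentwise composite. Its first component \(g \circ f\) is a morphism of \(\catcs\) between the underlying structures because \(\catcs\) is a category (\autoref{def:cat_lcs}); its second component is a composite of set functions \(g_{\iden} \circ f_{\iden} : I_1 \to I_3\), and it preserves identifiers by the short computation
\[
(g_{\iden} \circ f_{\iden})(\iden_1(e)) = g_{\iden}(f_{\iden}(\iden_1(e))) = g_{\iden}(\iden_2(f_E(e))) = \iden_3(g_E(f_E(e))) = \iden_3\big((g_E \circ f_E)(e)\big),
\]
using identifier preservation for \(q_1\) and then for \(q_2\). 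Hence \(q_2 \circ q_1\) is a morphism of \(\catics\).

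Finally, associativity of \(\circ\) and the left and right unit laws hold because \(\circ\) is defined componentwise: on the \((f_E, f_C)\) part these laws are inherited from \(\catcs\), and on the identifier part they are the corresponding laws for composition of functions in \(\mathbf{Set}\); lifting componentwise gives the laws in \(\catics\). I do not anticipate a genuine obstacle; the only point worth a remark is that a \(\catics\)-morphism carries real data on all of \(I_1\) (the function \(f_{\iden}\) is not forced outside \(\iden_1(E_1)\)), yet since identities and composites are formed by literally taking identities and composites of these functions, categoricity is unaffected. As a sanity check one can also note that the assignment \((f, f_{\iden}) \mapsto f\) respects identities and composition, which is exactly what is needed for the forgetful functor to \(\catcs\) mentioned in \autoref{sec:ct-app}.
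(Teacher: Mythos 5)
Your proposal is correct and follows essentially the same route as the paper's proof: identities and associativity are inherited componentwise from \(\catcs\) and from composition of functions on identifiers. You are in fact slightly more thorough than the paper, which omits the explicit check that the composite of two identifier-preserving maps preserves identifiers (the paper only spells out that computation later, when verifying functoriality of the enrichment functor), so no gap remains.
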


\begin{proof}
	\begin{description}
		\item[Identity] For every \(\iconf\)-structure \(\iconf = (E, C, \labl, I, \iden)\), \(\id_{\iconf} : \iconf \to \iconf\) is defined to be the identity on the underlying configuration structure \(\id : (E, C, \labl) \to (E, C, \labl)\) from \(\catcs\), that trivially preserves identifiers.
		      For any morphism \(f : \iconf_1 \to \iconf_2\), \(f \circ \id_{\iconf_1} = f = \id_{\iconf_2}\circ f\) is trivial.
		\item[Associativity] for \(f : \iconf_1 \to \iconf_2\), \(g : \iconf_2 \to \iconf_3\) and \(h : \iconf_3 \to \iconf_4\), \(h \circ (g \circ f) = (h \circ g) \circ f\) is inherited from the associativity in \(\catcs\), and since \(f\), \(g\) and \(h\) all preserves identifiers.
	\end{description}
	Hence \(\catics\) is a category.
\end{proof}

Unsurprisingly, a forgetful functor and an enrichment functor can be defined between those two categories.
The only assumption is that we need to suppose that every configuration structure can be endowed with a total ordering \(\preceq\) on its events.

\begin{lemma}
	\label{lem:functors}
	The forgetful functor \(\functoric : \catics \to \catcs\), defined by
	\begin{itemize}
		\item \(\functoric (E, C, \labl, I, \iden) = (E, C, \labl)\)
		\item \(\functoric(f_E, f_C, f_{\iden}) = (f_E, f_C)\)
	\end{itemize}
	and the enrichment functor \(\functorci : \catcs \to \catics\), defined by
	\begin{itemize}
		\item \(\functorci (E, C, \labl) = (E, C, \labl, I, \iden)\), where \(I = \{1, \hdots, \card{E}\}\) for \( \card{E}\) the cardinality of \(E\), and
		      \[
			      \iden(e) = \begin{cases*}
				      1 & if \(\forall e', e \preceq e'\) \\
				      i+1 & if \(\exists e', e' \preceq e\), \(\iden(e') = i\) and there is no \(e''\) \st \(e' \preceq e'' \preceq e\)
			      \end{cases*}
		      \]
		\item For \((f_E, f_C): (E_1, C_1, \labl_1) \to (E_2, C_2, \labl_2)\), \(\functorci(f_E, f_C) = (f_E, f_C, f_{\iden})\), where we let \(f_{\iden}(\iden_1(e)) = \iden_2(f_E(e_2))\).
	\end{itemize}
	are functors.
\end{lemma}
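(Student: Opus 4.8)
The plan is to verify, for each of $\functoric$ and $\functorci$ in turn, that it is well-defined on objects and morphisms and that it preserves identities and composition. For $\functoric$ this is almost immediate. On objects, dropping $I$ and $\iden$ from an $\iconf$-structure leaves a tuple $(E, C, \labl)$ that still satisfies \ref{Finiteness}, \ref{CoincidenceFreeness}, \ref{FiniteCompleteness} and \ref{Stability}, since none of these conditions mentions identifiers, so $\functoric(\iconf)$ is a configuration structure. On morphisms, a $\catics$-morphism is a tuple $(f_E, f_C, f_{\iden})$ whose first two components already form a label-preserving $\catcs$-morphism between the underlying structures (\autoref{def:cat_ics}), so $\functoric(f_E, f_C, f_{\iden}) = (f_E, f_C)$ lands in $\catcs$. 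Since identities and composition in both $\catics$ and $\catcs$ are defined componentwise, erasing the last component commutes with both, giving $\functoric(\id_{\iconf}) = \id_{\functoric(\iconf)}$ and $\functoric(g \circ f) = \functoric(g) \circ \functoric(f)$.

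For $\functorci$, the one point requiring thought is that $(E, C, \labl, I, \iden)$ satisfies \ref{eq:collision}. The key observation is that, relative to the total order $\preceq$, the function $\iden$ given in the statement assigns to each event its rank in $\preceq$, hence is a bijection $E \to \{1, \dots, \card{E}\}$; being injective, it forces \ref{eq:collision} on every configuration. On morphisms I would first check that $f_{\iden}$ is well-defined: since $\iden_1$ is injective, $\iden_1(e) = \iden_1(e')$ implies $e = e'$, hence $\iden_2(f_E(e)) = \iden_2(f_E(e'))$, so $f_{\iden}(\iden_1(e)) := \iden_2(f_E(e))$ unambiguously defines a map $I_1 \to I_2$ that preserves identifiers by construction; paired with the $\catcs$-morphism $(f_E, f_C)$ this is a $\catics$-morphism. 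Identity preservation then follows from $f_{\iden}(\iden(e)) = \iden(\id_E(e)) = \iden(e)$, and composition preservation from the fact that, for $f \colon \conf_1 \to \conf_2$ and $g \colon \conf_2 \to \conf_3$, both $(g \circ f)_{\iden}(\iden_1(e))$ and $(g_{\iden} \circ f_{\iden})(\iden_1(e))$ evaluate to $\iden_3(g_E(f_E(e)))$.

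All of these computations are routine, and the only delicate spot is the object part of $\functorci$: arguing that the recursively specified $\iden$ is a well-defined injection. This is clear when $E$ is finite, but for infinite $E$ the recursion implicitly needs $\preceq$ to be better than an arbitrary total order --- e.g.\ a well-order in which every non-minimal element has an immediate predecessor --- for the clause ``there is no $e''$ with $e' \preceq e'' \preceq e$'' to pick out a genuine predecessor. I would handle this either by working under the paper's standing assumption that such a $\preceq$ is available, or, more robustly, by taking $\iden$ to be any order-isomorphism of $(E, \preceq)$ onto an initial segment of the ordinals, which makes injectivity manifest and leaves every subsequent step unchanged.
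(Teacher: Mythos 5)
Your proposal is correct and follows essentially the same route as the paper's proof, which simply declares the object and morphism parts \enquote{immediate} and then carries out exactly the two computations you give for identity and composition preservation. Your additional observation that the recursive definition of \(\iden\) requires \(\preceq\) to be more than an arbitrary total order when \(E\) is infinite is a genuine point the paper glosses over, but it concerns the standing hypothesis of the lemma rather than the proof strategy.
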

\begin{proof}
	Proving that \(\functoric\) is a functor is immediate.

	Proving that \(\functorci(\conf)\) is a \(\iconf\)-structure is immediate, since our construction of \(\iden\) trivially insures \ref{eq:collision}.
	For \((f_E, f_C): \conf_1 \to \conf_2\), proving that \(\functorci(f_E, f_C)\) is a morphism between \(\functorci(\conf_1)\) and \(\functorci(\conf_2)\) is also immediate.
	For the preservation of the identity, we compute:
	\begin{align*}
		\functorci(\id_{\conf}) & = \functorci(\id_E, \id_C)   \\
		                        & = (\id_E, \id_C, f_{\iden})  \\
		\shortintertext{where \(f_{\iden} (\iden (e)) = \iden(\id_E(e)) = \iden(e)\), hence \(f_{\iden} = \id_{I} : I \to I\),}
		                        & = (\id_E, \id_C, \id_{\ids}) \\
		                        & = \id_{\functorci(\conf)}
	\end{align*}
	For the composition of morphisms, given \(f = (f_C, f_E) : \conf_1 \to \conf_2\) and \(g = (g_C, g_E) : \conf_2 \to \conf_3\), we write \(\functorci(\conf_i) = (E_i, C_i, \labl_i, I_i, \iden_i)\) and we compute:
	\begin{align*}
		\functorci(g) \circ \functorci(f) & = (g_C, g_E, g_{\iden}) \circ (f_C, f_E, f_{\iden})         \\
		                                  & = (g_C \circ f_C, g_E \circ f_E, g_{\iden} \circ f_{\iden}) \\
		\shortintertext{where, for all \(e \in E_1\), we compute:
			\begin{align*}
				(g_{\iden} \circ f_{\iden})(\iden_1(e)) & = g_{\iden} (f_{\iden}(\iden_1(e))                                            \\
				                                        & = g_{\iden} (\iden_2(f_E(e))) \tag{Since \(f_{\iden}\) preserves identifiers} \\
				                                        & = \iden_3(g_E (f_E(e))) \tag{Since \(g_{\iden}\) preserves identifiers}
			\end{align*}
			Hence we can conclude:
		}
		\functorci(g) \circ \functorci(f) & = \functorci(g \circ f)\qedhere
	\end{align*}\end{proof}

\begin{remark}
	\label{rem:mor-in-catics-notation}
	In \(\catics\), every morphism \(f = (f_E, f_L, f_C, f_{\iden})\) from \(\iconf_1\) to \(\iconf_2\) is actually fully determined by \(f_E\) whenever \(f_L = \id\).
	Indeed, given \(f_E : E_1 \to E_2\), then we can define for all \(x \in C_1\), \(f_C(x) = \{ f_E(e) \setst e \in x\}\) and \(f_{\iden}\) as \(f_{\iden}(\iden_1(e)) = \iden_2(f_E(e))\).
	We will often make the abuse of notation of writing \(f_E\) for \(f\) and reciprocally.
\end{remark}

\section{Concurrency in a Trace and Trace Equivalence}
\label{sec:trace}

We give here a quick reminder on concurrency and causality in CCS~\cite{Boudol1988} and RCCS~\cite{Danos2004} traces.
Aside from the convenient notation \(m_{R/S}\) that represents the memory stack(s) modified by a forward transition from \(R\) to \(S\), and of the notation \(\namelist{a}\) for a list of names \(a_1,\cdots,a_n\), nothing new is introduced in this Section.
However, the results reminded below are used in the proofs of \autoref{lem:transition_identified} and \autoref{th:equivalence_bf}.

Concurrency on events corresponds to a notion of concurrency on transitions in RCCS traces%
~\cite[Definition 7 and Lemma 8]{Danos2004}.
For this reminder we consider only concurrency and causality for forward transitions, so that CCS intuitions work equally well.
We make a remark at the end about extending the concurrency to backward transitions, but it should be noted that forward and backward transitions are not mixed.

Two transitions \(t_1 = R\fwlts{i_1}{\alpha_1} R_1\) and \(t_2 = R'\fwlts{i_2}{\alpha_2} R_2\) are \emph{composable} if \(R_1 = R'\), and in this case, doing \(t_1\) then \(t_2\) is written as the composition \(t_1 ; t_2\).
Given \(n\) composable transitions \(t_i : R_i \fwlts{i}{\alpha_i} R_{i+1}\) and their composition \(t_1 ; \hdots ; t_n\), we say that \emph{\(t_i\) is a direct cause of \(t_k\)} for \(1 \leqslant i < k \leqslant n\) and write \(t_i < t_k\) (or, for short, \(i < k\)) if there is a memory stack \(m\) in \(R_{i+1}\) and a memory stack \(m'\) in \(R_{k+1}\) such that \(m < m'\), where the order on memory stacks is given by prefix ordering.
Note that, if they exist, \(m\) and \(m'\) are unique, as memory events in reachable processes all have a different pairs (identifier, label).

Let \(R\fwlts{i}{\alpha} S\) be a transition.
If \(\alpha\neq\tau\), we write \(m_{R/S} = \{m\}\) where %
\begin{align*}
	R = & (\cdots((R_3 \mid ((R_1 \mid (m \rhd P)) \mid R_2)\bs \namelist{b^1}) \mid R_4)\bs \namelist{b^2} \cdots \mid R_n) \bs \namelist{b^{m}}                 \\
	S = & (\cdots ((R_3 \mid ((R_1 \mid \mem{i, a, Q} . m \rhd P ) \mid R_2)\bs \namelist{b^1}) \mid R_4)\bs \namelist{b^2} \cdots \mid R_n) \bs \namelist{b^{m}}
\end{align*}
for some \(R_i\) any of which could be missing and for some \(\namelist{b^j}\), possibly missing as well.
If \(\alpha = \tau\), then \(m_{R/S}\) will contain the pair of memory stacks that has been changed by the transition.
Intuitively, the notation \(m_{R/S}\) is useful to extract the memory stack(s) modified by a forward transition from \(R\) to \(S\).

Two transitions are \emph{coinitial} if they have the same source process and \emph{cofinal} if they have the same target process.
We say that two coinitial transitions \(t_1 = R\fwlts{i_1}{\alpha_1} S_1\) and \(t_2 = R\fwlts{i_2}{\alpha_2} S_2\) are \emph{concurrent} if \(m_{R/S_1} \cap m_{R/S_2}=\emptyset\), that is, if the transitions modify disjoint memories in \(R\).

The square lemma~\cite[Lemma 8]{Danos2004} says that
moreover, given two such concurrent transitions,
there exists two cofinal and concurrent transitions \(t_1' = S_1\fwlts{i_2}{\alpha_2} S\) and \(t_2' = S_2\fwlts{i_1}{\alpha_1} S\).
The name of the lemma comes from this picture:

\begin{center}
	\begin{tikzpicture}
		\node (R) at (0,0) {\(R\)};
		\node (S1) at (-1, 1) {\(S_1\)};
		\node (S2) at (1, 1) {\(S_2\)};
		\node (S) at (0, 2) {\(S\)};
		\draw [->] (R) -- node[left]{\(t_1\)} (S1);
		\draw [->] (R) -- node[right=2]{\(t_2\)} (S2);
		\draw [->] (S1) -- node[left=2]{\(t'_1\)} (S);
		\draw [->] (S2) -- node[right]{\(t'_2\)} (S);
	\end{tikzpicture}
\end{center}

Moreover, the traces \(\theta_1 = t_1;t_1'\) and \(\theta_2 = t_2;t_2'\) are \emph{equivalent}~\cite[Definition 9]{Danos2004}. This allows one to define \emph{equivalence classes on transitions}: \(t_1\) in \(\theta_1\) is equivalent to \(t_2'\) in \(\theta_2\) if \(\theta_1\) is equivalent to \(\theta_2\) and \(t_1\) and \(t_2'\) have the same index. Then in the trace \(t_1;t_1'\) we are now allowed to say that \(t_1\) is concurrent to \(t_1'\).

In a trace \(t_1;t_2\) we have that \(t_1\) is concurrent to \(t_2\) iff \(t_1\) is not a cause of \(t_2\).
This follows from a case analysis using the definitions of concurrency and causality.
Thanks to trace equivalence, we also have that in a trace \(t_1;\hdots;t_n\) either \(t_1\) is a cause of \(t_n\) or the two transitions are concurrent.
Those intuitions are enough for us to carry on our development, but a complete treatment of concurrency and causality in the trace of a CCS process~\cite{Boudol1988} can give better insight to the curious reader.

The definitions of concurrency for forward coinitial traces and of causality for forward traces can easily be \enquote{flipped} into definitions of concurrency for backward cofinal traces, and of causality for backward traces.

\section{Proofs and Auxiliary Materials}
\label{sec:proof-and-aux}

In this section we detail the proofs missing from the main text, and introduce some intermediate definitions and lemmas that are necessary for the proofs.

\subsection{Operations on Identified Configurations Structures (\autoref{sec:op-def})}%
\label{sec:op-proof}

Our main goal here is to prove that the operations of \autoref{icat-op-def} preserve \(\iconf\)-structures (\autoref{lem:preserve_ics}).
The product and coproduct (used to define the nondeterministic choice below) have particular roles, since they have a direct representation in the categorical world.

The structures we considered are \emph{full} \wrt the sets of labels and identifiers, \ie the labeling and identifying functions are surjective. This only impacts the relabelling and reidentifying operations, where we have to additionally require that \(\labl'\) and \(\iden'\) are surjective.

We redefine the nondeterministic choice of \autoref{icat-op-def} by first defining the coproduct on \(\iconf\)-structures and then using relabeling and reidentifying to get rid of the extra indices in the label and the identifier of events.
It is easy to check that the two definitions of nondeterministic choice are equivalent, but working with the one from \autoref{icat-op-def} is easier and simpler.

\begin{definition}
	\label{def:coproduct}

	We redefine nondeterministic choice using coproduct as follows:

	\begin{description}
		\item[The coproduct] of \(\iconf_1\) and \(\iconf_2\) is \(\iconf_1 \pm \iconf_2 = (E, C, L,\labl, I,\iden)\), where
			      {
				      \setlength\multicolsep{\topsep}
				      \begin{multicols}{2}
					      \begin{itemize}
						      \item \(E=\{\{1\}\times E_1\}\cup\{\{2\}\times E_2\}\) with \(\pi_1 : E \to \{1,2\}\) and \(\pi_2: E \to E_1\cup E_2\),
						      \item \(C=\{\{i\}\times x \setst x\in C_i\}\),
						      \item \(L = \{\{1\} \times L_1\} \cup \{\{2\} \times L_2\}\),
						      \item \(\labl(e)=(i,\labl_i(\pi_2(e))\)) for \(\pi_1(e) = i\),
						      \item \(I =\{\{1\}\times I_1\}\cup\{\{2\}\times I_2\} \),
						      \item \(\iden(e) = (i, \iden_i(\pi_2(e)))\) for \(\pi_1(e) = i\).
					      \end{itemize}
				      \end{multicols}
				      and with the expected injections \(\iota_i: \iconf_i \to \iconf_1+\iconf_2 \).
			      }
		\item[The nondeterministic choice] of \(\iconf_1\) and \(\iconf_2\) is \(\iconf_1 + \iconf_2 = (\iconf_1+\iconf_2)[\labl'/\labl][\iden'/\iden]\) where
		      \begin{itemize}
			      \item \(\labl'(e) = a\text{ if }\labl(e)=(j, a), j\in\{1,2\}\),
			      \item \(\iden'(e) = i\text{ if }\labl(e)=(j, i), j\in\{1,2\}\).
		      \end{itemize}
	\end{description}
\end{definition}

\begin{lemma}
	\label{lem:product_ics}
	The product and coproduct of \(\iconf\)-structures is the product and coproduct in \(\catics\).
\end{lemma}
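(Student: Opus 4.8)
The plan is to reduce both universal properties to the corresponding ones in $\catcs$, which are classical~\cite{Winskel1982}, and then check that the extra identifier data carries over for free. The two facts that make this work are: (i) the forgetful functor $\functoric : \catics \to \catcs$ sends $\iconf_1 \times \iconf_2$ to the product, and $\iconf_1 \pm \iconf_2$ to the coproduct, of the underlying configuration structures --- immediate by comparing \autoref{icat-op-def} and \autoref{def:coproduct} with Winskel's definitions, since the five conditions imposed on $x \in C$ in the product are exactly Winskel's and the coproduct is the tagged disjoint union; and (ii) \autoref{rem:mor-in-catics-notation}, which lets us reconstruct a $\catics$-morphism from its event component. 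We take \autoref{lem:preserve_ics} for granted, so that $\iconf_1 \times \iconf_2$ and $\iconf_1 \pm \iconf_2$ are honest $\iconf$-structures and the $\gamma_i$, $\iota_i$ are honest morphisms.

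For the product: given an $\iconf$-structure $\mathcal{D}$ and morphisms $g_i : \mathcal{D} \to \iconf_i$, applying $\functoric$ yields a cone in $\catcs$, hence a unique $\catcs$-mediating map with event component $h_E(e) = ((g_1)_E(e), (g_2)_E(e))$ (a coordinate being $\star$ exactly when the relevant $(g_i)_E$ is undefined there), so that $\pi_i \circ h_E = (g_i)_E$. We then take the identifier component dictated by \autoref{rem:mor-in-catics-notation}, $h_{\iden}(\iden_{\mathcal{D}}(e)) = \iden(h_E(e))$; it is well defined because $g_1, g_2$ preserve identifiers, it lands in $I = I_1 \times_{\star} I_2$ because the product keeps the \emph{full} partial product of identifiers, and unfolding $\iden(h_E(e)) = (\iden_1(\pi_1(h_E(e))), \iden_2(\pi_2(h_E(e)))) = (\iden_1((g_1)_E(e)), \iden_2((g_2)_E(e)))$ gives $\gamma_i \circ h = g_i$ on identifiers as well. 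Uniqueness is inherited from $\catcs$ through \autoref{rem:mor-in-catics-notation}. The dual argument handles the coproduct: from $g_i : \iconf_i \to \mathcal{D}$ one gets the copairing $h_E = [(g_1)_E, (g_2)_E]$, the unique $\catcs$-mediator, and sets $h_{\iden}(j,k) = (g_j)_{\iden}(k)$ on $I = \{\{1\}\times I_1\}\cup\{\{2\}\times I_2\}$ --- well defined since the summands are disjointly tagged --- and checks that $h$ preserves labels and identifiers and satisfies $h \circ \iota_i = g_i$, with uniqueness again from \autoref{rem:mor-in-catics-notation}.

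The only point requiring care, and the main obstacle, is the identifier bookkeeping in the product: one must verify that no \enquote{error}/$\bot$ case of the kind appearing in parallel composition can occur here (it cannot, precisely because $\iconf_1 \times \iconf_2$ retains all of $I_1 \times_{\star} I_2$ and never collapses clashing pairs), and that \ref{eq:collision} is preserved --- if $e \neq e'$ belong to a common configuration $x$ of $\iconf_1 \times \iconf_2$, the product's configuration conditions force $\pi_1(e) \neq \pi_1(e')$ or $\pi_2(e) \neq \pi_2(e')$, and since $\star \notin I_1 \cup I_2$ and \ref{eq:collision} holds for $\iconf_1, \iconf_2$ on $\gamma_1(x), \gamma_2(x)$, we conclude $\iden(e) \neq \iden(e')$. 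Beyond this, the proof is a routine transport of the $\catcs$ verification along $\functoric$.
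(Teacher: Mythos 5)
Your proof is correct and follows essentially the same route as the paper's: both reduce to the known product and coproduct in $\catcs$ via the forgetful functor $\functoric$ and then check that the identifier data lifts, with yours being more explicit about constructing the mediating morphisms where the paper merely states that the universal properties \enquote{follow easily by lifting.} One caveat: in the paper, \autoref{lem:preserve_ics} for the product and coproduct cases is itself deduced \emph{from} \autoref{lem:product_ics}, so you should not take it for granted here --- but this is harmless, since you independently verify \ref{eq:collision} for the product at the end, which is the only non-trivial part of that dependency.
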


In the categorical setting, the product and coproduct on labeled configuration structures can be obtained by a straightforward enrichment of the un-labeled configuration structures~\cite[Propositions 11.2.2 and 11.2.3]{Winskel1995}.
In a similar vein, we obtain the extension of those operations on identified (labeled) configuration structures directly.

\begin{proof}
	The proof for the product and the coproduct follows the same pattern: assume \(\iconf\) is the result of applying the operation to \(\iconf_1\) and \(\iconf_2\), and observe that
	\begin{enumerate}
		\item \(\iconf\) is an \(\iconf\)-structure. This follows from the fact that the structure underlying \(\iconf\), \(\functoric(\iconf) = \conf\) is the product or coproduct of the configurations underlying \(\functoric(\iconf_1) = \conf_1\) and \(\functoric(\iconf_2) = \conf_2\), hence, that it is a valid configuration structure. As \(\iden_1\) and \(\iden_2\) are such that \ref{eq:collision} is enforced, then by construction \(\iden\) enforces it too and \(\iconf = \conf \oplus \iden\) is an \(\iconf\)-structure.

		\item The morphisms \(\gamma_i\) (for the product) and \(\iota_i\) (for the coproduct) extends the corresponding morphisms on configuration structure with valid morphisms on identifiers.

		      Let us suppose that \(\conf\) is the product and let \(\alpha_1:\conf\to\conf_1\) be one of the projections. Let us write \(\iconf = \conf \oplus \iden = (\conf, \iden, I)\) and \(\iconf_1 = \conf_1 \oplus \iden_1 = (\conf_1, \iden_1, I_1)\). From the definition of product on \(\iconf\)-structures, \(I = I_1 \times_{\star} I_2\) is the product on the sets on identifiers and let \(p_1:I \to I_1\) be one of the projections, defined as \(p_1(i_1,i_2) = i_1\), for \(i_1\neq\star\). We can now write \(\gamma_1 = (\alpha_1, p_1)\) the projection on \(\iconf\)-structures, which is indeed a morphism: suffices to verify that \(p_1(\iden(e)) = \iden_1(\alpha_1(e))\), for all events \(e\), which follows from the definition of \(\iden\). A similar argument holds for the coproduct and the injections.

		\item The universal properties follow easily by lifting the universal property of the underlying configuration structures in \(\catics\).\qedhere
	\end{enumerate}
\end{proof}

The restriction of the operations of \autoref{icat-op-def} to configuration structures are standard~\cite{Winskel1982,Winskel1986}, except %
for postfixing.
We prove below that the restriction of this operation to configuration structures is correct.

\begin{lemma}
	\label{lem-postfixing}
	The postfixing of a label \(a\) to an event structure \(\conf_1 = (E_1, C_1, L_1, \labl_1)\), defined as \(\conf_1 \post (a) = (E, C, L, \labl)\) where
	\begin{itemize}
		\begin{minipage}{0.65\linewidth}
			\item \(E = E_1 \cup \{e\}\), for \(e \notin E_1\),
			\item \(C = C_1 \cup \{x \cup \{e\} \setst x \in C_1 \text{ is maximal and finite}\}\),
		\end{minipage}
		\begin{minipage}{0.34\linewidth}
			\item \(L = L_1\cup\{a\}\),
			\item \(\labl = \labl_1 \cup \{e \mapsto a\}\),
		\end{minipage}
	\end{itemize}
	is a configuration structure.
\end{lemma}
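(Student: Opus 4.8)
The plan is to check directly that $\conf_1 \post (a) = (E, C, L, \labl)$ satisfies the four defining conditions of \autoref{def:conf_str}: \ref{Finiteness}, \ref{CoincidenceFreeness}, \ref{FiniteCompleteness}, and \ref{Stability} (there is nothing to verify about $L$ and $\labl$, since $\labl = \labl_1 \cup \{e \mapsto a\}$ is a function from $E$ to $L$ because $e \notin E_1$). The single structural fact I would isolate first and reuse throughout is a \emph{rigidity principle} coming from \autoref{def:causality}: every configuration of $C$ is either an old configuration (in $C_1$, hence not containing the fresh event $e$) or of the form $x' \cup \{e\}$ with $x' \in C_1$ maximal and finite; and if $x' \in C_1$ is maximal with $x' \subseteq y'$ for some $y' \in C_1$, then $x' = y'$. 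Consequently, whenever a new configuration $x' \cup \{e\}$ (with $x'$ maximal) sits below $y' \cup \{e\}$ for $y' \in C_1$, it must equal it. This is exactly where the "maximal" clause in the definition of postfixing is used, and it is what prevents pathologies such as two distinct new configurations admitting a new configuration as their union.

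Conditions \ref{Finiteness} and \ref{CoincidenceFreeness} are easy. For \ref{Finiteness}, an old configuration inherits the property from $\conf_1$, and a new configuration $x' \cup \{e\}$ is itself finite, so it serves as its own finite witness for each of its events, $e$ included. For \ref{CoincidenceFreeness}, take $d \neq d'$ in a new configuration $x' \cup \{e\}$: if both lie in $x'$, the separating configuration supplied by \ref{CoincidenceFreeness} in $\conf_1$ still works, as it is contained in $x' \subseteq x' \cup \{e\}$ and lies in $C_1 \subseteq C$; if one of them is $e$, say $d' = e$, then \ref{Finiteness} in $\conf_1$ yields a finite $z \in C_1$ with $d \in z \subseteq x'$, and this $z$ separates $d$ from $e$ since $e \notin z$.

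The two real cases are \ref{FiniteCompleteness} and \ref{Stability}, handled by the same template of splitting on whether the relevant configuration is old or new. For \ref{FiniteCompleteness}, let $X \subseteq C$ be bounded above by a finite $y \in C$. If $y \in C_1$, no element of $X$ contains $e$, so $X \subseteq C_1$ and \ref{FiniteCompleteness} in $\conf_1$ gives $\bigcup X \in C_1$. If $y = y' \cup \{e\}$ with $y'$ maximal finite, then either $e \notin \bigcup X$, so $X \subseteq C_1$ with all members below $y'$ and \ref{FiniteCompleteness} in $\conf_1$ applies, or $e \in \bigcup X$, so some $x_0 \in X$ is $x_0' \cup \{e\}$ with $x_0'$ maximal; from $x_0 \subseteq y$ and rigidity, $x_0' = y'$, hence $x_0 = y \in X$ and $\bigcup X = y \in C$. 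For \ref{Stability}, with $x, y \in C$ and $x \cup y \in C$: if $x \cup y \in C_1$, then $e$ is in neither $x$ nor $y$, both are old, and \ref{Stability} in $\conf_1$ finishes; if $x \cup y = z' \cup \{e\}$ with $z'$ maximal finite, split on which of $x, y$ contains $e$. If both do, then $x = x' \cup \{e\}$ and $y = y' \cup \{e\}$ with $x', y' \subseteq z'$ maximal, so by rigidity $x' = y' = z'$, whence $x = y = x \cap y \in C$. If only $x$ does, then $x = x' \cup \{e\}$ and $y \in C_1$, so $x \cap y = x' \cap y$; moreover $x' \cup y = z' \in C_1$, and \ref{Stability} in $\conf_1$ gives $x' \cap y \in C_1 \subseteq C$.

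I expect the main obstacle to be precisely the subcases of \ref{FiniteCompleteness} and \ref{Stability} where the bounding (resp.\ joining) configuration is new: there one cannot delegate to $\conf_1$, and the argument depends essentially on the maximality clause via the rigidity principle. Once that principle is stated cleanly, the remaining work is a routine reduction of each case to the corresponding axiom for $\conf_1$.
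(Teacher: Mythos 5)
Your proposal is correct and follows essentially the same route as the paper's proof: a direct verification of the four axioms of \autoref{def:conf_str}, with \ref{Finiteness} and \ref{CoincidenceFreeness} immediate and the maximality clause in the definition of postfixing doing all the work for \ref{FiniteCompleteness} and \ref{Stability}. The only cosmetic differences are that you factor the key observation out as an explicit \enquote{rigidity principle} and, in the \ref{Stability} subcase where \(e\) lies in \(x\) but not in \(y\), you reduce to \ref{Stability} in \(\conf_1\) applied to \(x\setminus\{e\}\) and \(y\), whereas the paper instead observes that \(x\) is maximal in the new structure and hence \(y \subseteq x\); both arguments are sound.
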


\begin{proof}
	Looking back at \autoref{def:conf_str}, we simply have to prove that the four axioms of configuration structures are respected by \(\conf_1 \post (a)\), knowing that they are respected by \(\conf_1\) by assumption.
	\begin{itemize}
		\item \ref{Finiteness} is satisfied because \(\conf_1\) is a configuration structure, and every configuration in \(\{x \cup \{e\} \setst x \in C_1 \text{ is maximal and finite}\}\) is finite.
		\item For \ref{CoincidenceFreeness}, we only have to check the configurations containing \(e\), otherwise it folows from \(\conf_1\) being a configuration structure.
		      Given \(y \in \{x \cup \{e\} \setst x \in C_1 \text{ is maximal and finite}\}\), there exists \(y'\) such that \(y = y' \cup \{e\}\).
		      Given two events \(e_1, e_2\) in \(y\) such that \(e_1 \neq e_2\), if \(e_1 = e\) or \(e_2 = e\), then \(y'\) is the configuration we are looking for. Otherwise, it follows from \(y'\) being a configuration in the configuration structure \(\conf_1\).

		\item In \ref{FiniteCompleteness}, let \(X\) be a subset of configurations.
		      If \(\forall x \in X\), \(e \notin x\), then the result follows from \(\conf_1\) being a configuration structure.
		      Otherwise, there exists \(x' \in X\) such that \(e \in x'\), hence we know that \(x'\) is maximal.
		      Now, assume there is \(y \in C\) finite such that \(\forall x \in X\), \(x \subseteq y\). As \(x'\) is finite and maximal and \(y\) is finite, then \(x' = y\). Suppose by contradiction that \({\textstyle \bigcup } X \notin C\). It implies that \(\exists y' \in X\) with \(e' \in y'\) such that \(e'\notin x'\). We reach a contradiction: as \(y' \subseteq y\), \(e' \in y\), and as \(x' \subseteq y\) and \(e'\notin x'\), we obtain that \(x' \subsetneq y\), which contradicts the maximality of \(x'\).
		      Hence, for all \(x \in X\), \(x \subseteq x'\), and \({\textstyle \bigcup } X = x' = y\).
		\item For \ref{Stability} let us consider \(x\) and \(y\) such that \(x\cup y \in C\). If \(e\notin x\cup y\) the result follows from the stability of \(\conf_1\). Otherwise assume, that \(e\in x\), but then \(x\) is maximal,
		      and for \(x \cup y \in C\) to be the case it must be that either \(y = \emptyset\) or \(y \subseteq x\), and \(x \cap y \in C\) holds trivially.\qedhere
	\end{itemize}
\end{proof}

\begin{lemma}
	\label{lem:preserve_ics}
	The operations of \autoref{icat-op-def} (relabeling, reidentifiying, restriction, prefixing, postfixing, non-deterministic choice and product), coproduct, as well as the parallel composition (\autoref{def:par-ics}) preserve \(\iconf\)-structures.
\end{lemma}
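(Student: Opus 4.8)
The plan is to check, operation by operation, that the five axioms \ref{Finiteness}, \ref{CoincidenceFreeness}, \ref{FiniteCompleteness}, \ref{Stability} and \ref{eq:collision} are preserved; most of the work is already done by previous results, so the proof is mostly a matter of bookkeeping and pointing at the right lemma. First I would dispatch the product and coproduct immediately by citing \autoref{lem:product_ics}: being categorical limits/colimits in $\catics$, they are $\iconf$-structures by construction. Then the nondeterministic choice follows, since by \autoref{def:coproduct} it is the coproduct followed by relabeling and reidentifying, so its correctness reduces to the correctness of those two operations plus the coproduct.

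Next I would handle the four axioms \ref{Finiteness}--\ref{Stability} for relabeling, reidentifying, restriction, prefixing and postfixing by the forgetful-functor argument: in each case the underlying configuration structure $\functoric(\iconf_1\,\text{op})$ coincides with the result of the corresponding un-identified operation applied to $\functoric(\iconf_1)$, and those un-identified operations are known to preserve configuration structures --- for relabeling, restriction and prefixing by \cite{Winskel1982,Winskel1986}, and for postfixing by \autoref{lem-postfixing}, which we just proved. Since $\iden$ (or $\iden'$) restricted to any configuration is injective precisely when it was before --- relabeling and postfixing/prefixing do not merge identifiers, restriction only removes events, and reidentifying is defined only \emph{provided} it respects \ref{eq:collision} --- the \ref{eq:collision} condition is inherited. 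For prefixing and postfixing the one fresh identifier $i\notin I_1$ is added to a single fresh event, so no configuration can contain a collision involving it. So for each of these operations the verification is: underlying structure is fine by a cited result; \ref{eq:collision} is fine because the identifying function stays injective on configurations.

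The genuinely new case, and the one I expect to be the main obstacle, is parallel composition of $\iconf$-structures (\autoref{def:par-ics}), because unlike the other operations it is \emph{not} a conservative extension of the configuration-structure version --- the reidentifying step $[\iden'/\iden_3]$ uses a different synchronization discipline (synchronizations and forks are kept, everything else is sent to an error label $\bot_k$). Here I would argue: $\iconf_1\mid_S\iconf_2 = (\iconf_3[\iden'/\iden_3][\labl'/\labl_3])\crestr{\bot}$ where $\iconf_3=\iconf_1\times\iconf_2$ is an $\iconf$-structure by the product case; then $[\labl'/\labl_3]$ is a relabeling (fine), and $\crestr{\bot}$ is a restriction of a set of labels (fine); the only delicate point is that $[\iden'/\iden_3]$ is a legitimate \emph{reidentifying}, i.e.\ that $\iden'$ respects \ref{eq:collision}. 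For this I need: within any configuration $x$ of $\iconf_3$, distinct events have distinct $\iden'$-values. Events sent to distinct $\bot_k$'s are separated because $\bot_k\neq\bot_{k'}$ when $k\neq k'$, and within a configuration of the product distinct events already have distinct $\iden_3$-values by \ref{eq:collision} for $\iconf_3$; events assigned a non-error identifier $i$ via \ref{valid-synch}, \ref{extra1} or \ref{extra2} get an identifier of $\iconf_1$ or $\iconf_2$, and the side conditions in \ref{extra1}/\ref{extra2} together with the projection properties of the product (distinct events in a configuration of $\iconf_3$ project to distinct events of $\iconf_i$) ensure two such events in the same configuration cannot collide, and a $\bot$-event and a non-$\bot$-event obviously differ. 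Once $\iden'$ is seen to respect \ref{eq:collision}, the result follows by composing the already-verified cases. I would close the proof by remarking that the full-ness caveat (surjectivity of $\labl'$, $\iden'$) is handled exactly as in the standalone relabeling/reidentifying discussion and imposes no extra difficulty.
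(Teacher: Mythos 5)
Your proposal is correct and follows essentially the same route as the paper's proof: dispatch product and coproduct via \autoref{lem:product_ics}, reduce the remaining simple operations to the known configuration-structure results (plus \autoref{lem-postfixing} for postfixing) together with a check that \ref{eq:collision} is inherited, and concentrate the real work on showing that the reidentifying function \(\iden'\) of parallel composition respects \ref{eq:collision}, by a case analysis using the side conditions of \ref{valid-synch}--\ref{error} and the injectivity of the product's projections on configurations. The paper organizes that final case analysis by which projections equal \(\star\) rather than by which clause of \(\iden'\) fires, but the content is the same.
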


\begin{proof}
	Let us note that \begin{enumerate*} \item the restriction of the relabeling, restrictions, prefixing, postfixing, non-deterministic choice, coproduct and product to configuration structures preserve configuration structures, as their are adaptations of the usual operations~\cite{Winskel1982,Winskel1986}, or proven correct in \autoref{lem-postfixing}, \item any configuration structure endowed with a valid identifying function (\ie, such that no two events in the same configuration have the same identifier, \cf \ref{eq:collision}) is a valid \(\iconf\)-structure.\end{enumerate*}
	\begin{description}
		\item[For the relabeling,] since this operation does not change anything but the labels, we have nothing to prove.
		\item[For the reidentifiying,] since the function \(\iden'\) is supposed to make the resulting \(\iconf\)-structure respect the \ref{eq:collision} condition, there is nothing to prove.
		\item[For the restriction,] note that this operation only removes events in configurations and keeps the identifying function intact. Hence if the initial structure has a valid identifying function, then the identifying function of the new structure is a valid one by assumption.
		\item[For the prefixing,] since \(i\) is a fresh identifier, \ref{eq:collision} is trivially preserved.
		\item[For the postfixing,] we know that the underlying configuration of \(\iconf_1 \post (a, i)\), \(\conf_1 \post (a)\) is a valid configuration structure by \autoref{lem-postfixing}.
		      And since \(i\) is a fresh identifier, \ref{eq:collision} is trivially preserved.
		\item[For the coproduct,] it follows trivially from \autoref{lem:product_ics}.
		\item[For the product,] it follows trivially from \autoref{lem:product_ics}.
		\item[For the nondeterministic choice,] note that \ref{eq:collision} holds only if \(I_1\cap I_2 = \emptyset\). We assume this to be the case, as we can always apply a reidentifying operation when it is not, to guarantee that nondeterministic choice is well defined. Moreover the two definitions of nondeterministic choice from \autoref{icat-op-def} and \autoref{def:coproduct} coincide, which follows trivially.
		\item[For the parallel composition,] we need to first observe that this operation is defined in terms of product, restriction, relabeling, and reidentifiying along \(\iden'\), and we know that those operations are correct provided \(\iden'\) respects the \ref{eq:collision} condition.
		      In other terms, given \(\iconf_1\), \(\iconf_2\) and their product \(\iconf_3 = (E_3, C_3, L_3,\labl_3, I_3,\iden_3)\), we only need to prove that \(\iconf_3 [\iden'/\iden_3]\) is a valid \(\iconf\)-structure, and it follows from a case analysis.
		      Given a configuration \(x \in C_3\), as and two events \(e,e' \in x\), we can reason by case analysis on the first and second projections of both events, and these are the possible cases:
		      \begin{itemize}
			      \item \(\pi_2(e)= \star\) and \(\pi_2(e')= \star\).
			            In this case, looking at the definition of the product in \(\iconf\)-structures, \(\iden'(e) = (\iden_1(\pi_1(e)), \star)\) and \(\iden'(e') = (\iden_1(\pi_1(e')), \star)\).
			            If \(\iden'(e) = \iden'(e')\), then \(\iden_1(\pi_1(e)) = \iden_1(\pi_1(e'))\) in the configuration \(\pi_1(x)\) in \(\iconf_1\).
			            But that's a contradiction, since \(\pi_1(e)\) and \(\pi_1(e')\) are in the same configuration in \(\iconf_1\) and the identifying function of \(\iconf_1\) is valid.
			      \item \(\pi_1(e)= \star\) and \(\pi_1(e')= \star\). This case is similar as the previous one, except that it uses that the identifying function of \(\iconf_2\) is valid.
			      \item \(\pi_1(e)\neq \star\), \(\pi_2(e)= \star\) and \(\pi_2(e')\neq \star\) (with either \(\pi_1(e') = \star\) or \(\pi_1(e')\neq \star\)).
			            If \(\iden'(e) = \bot_{(\iden_1(\pi_1(e)), \star)}\), then clearly \(\iden'(e) \neq \iden'(e')\) as \(\pi_2(e')\neq \star\).
			            If \(\iden'(e) = \iden_1(\pi_1(e))\), then we know that \(\forall e_2 \in E_2, \iden_2(e_2) \neq \iden_1(\pi_1(e))\), and it is impossible that \(\iden'(e') = \iden'(e)\), as \(\pi_2(\iden'(e')) = \iden_1(\pi_1(e))\) cannot be.
			      \item \(\pi_1(e)\neq \star\), \(\pi_2(e)\neq \star\) and \(\pi_2(e')\neq \star\) (again with either \(\pi_1(e') = \star\) or \(\pi_1(e')\neq \star\)).
			            If \(\iden'(e) = \bot_{\iden_3(e)}\), then having \(\iden'(e) = \iden'(e')\) would imply \(\iden'(e') = \bot_{\iden_3(e')}\) and \(\iden_3(e) = \iden_3(e')\), which would contradict that product is well-defined.
			            Otherwise, \(e\) is a synchronization or a fork (\cf \ref{valid-synch} in \autoref{def:par-ics}), and it follows that \(\iden_1(\pi_1(e)) = \iden_2(\pi_2(e))\) and \(\iden'(e) =\iden_1(\pi_1(e))\).
			            We prove by case analysis that \(\iden'(e) = \iden'(e')\) can not be the case:
			            \begin{itemize}
				            \item If \(\pi_1(e') = \star\), then \(\iden'(e') = \bot_{\iden_3(e')}\) as \(\pi_1(e)\) prevents from applying \ref{extra2}.
				            \item If \(\pi_1(e')\neq \star\), \(\iden_3(e') = \iden_3(e)\), which would contradict the correctness of the product.
			            \end{itemize}
			      \item \(\pi_1(e)\neq \star\), \(\pi_2(e)\neq \star\) and \(\pi_2(e')= \star\) is similar to above, except that it uses \ref{extra1}.
			      \item Finally, if \(\pi_1(e)\neq \star\), \(\pi_2(e)\neq \star\), \(\pi_1(e') \neq \star\), \(\pi_2(e') \neq \star\), then \(\iden'(e) = \iden(e')\) would require to have applied either \ref{valid-synch} or \ref{error} to both cases, and both situations would contradict the correctness of the product.
			            \qedhere
		      \end{itemize}
	\end{description}
\end{proof}

\subsection{Properties of Memory Encodings and Operational Correspondence (Sections~\ref{sec:enc_mem} and \ref{sec:op-corr})}
\label{sub:prop-mem-proof}

Our goal here is to show that there is an operational correspondence between \(R\) and \(\encmp{R}\) (\autoref{lem:transition_identified}, \autoref{sssec:op_cor}), and this requires intermediate lemmas (Lemmas~\ref{lem:equivalence_encoding}--\ref{lem:prevent-max-event}%
) about the encoding of memories and their relation to maximal events.
To prove those, we start by exhibiting some useful properties of memory encoding (\autoref{sec:mem-prep}).

\subsubsection{Properties of Memory Encodings}
\label{sec:mem-prep}
We assume
given reachable processes
\(R\) and \(S\) %
and we write \(\orig{R}\) for the origin of \(R\), and \(\encmp{R}\) as \((E_R, C_R, \labl_R, I_R, \iden_R)\) and similarly for \(S\).

To prove interesting properties about the encoding of memory, we first need this small technical lemma.

\begin{lemma}
	\label{lem:unique_id}
	For every reversible thread \(m\rhd P\) of a reachable process \(R\), and for all \(i \in \ids(m)\), \(i\) occurs once in \(m\).
\end{lemma}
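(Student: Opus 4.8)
The plan is a double induction. By \autoref{def:coh} a reachable process $R$ admits a trace $\emptymem \rhd \orig{R} \tfbwlts R$, and by \cite[Lemma~10]{Danos2004} such a trace may be chosen forward-only, say \(\emptymem \rhd \orig{R} = R_0 \fwlts{i_1}{\alpha_1} R_1 \fwlts{i_2}{\alpha_2} \cdots \fwlts{i_n}{\alpha_n} R_n = R\). Call a memory stack \emph{no-repeat} when each identifier occurring in it occurs in exactly one of its memory events, and call an RCCS process \emph{clean} when every reversible thread $m \rhd P$ occurring in it has a no-repeat memory stack; the lemma asserts that every reachable process is clean. I would prove by induction on $n$ that $R_n$ is clean. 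The base case $n=0$ is immediate, since the single thread $\emptymem \rhd \orig{R}$ has memory $\emptymem$, which is vacuously no-repeat.

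For the inductive step, assuming $R_{n-1}$ clean, I would derive that $R_n$ is clean by an inner induction on the derivation of $R_{n-1} \fwlts{i_n}{\alpha_n} R_n$ in \autoref{fig:ltsrules}, proving the slightly more general implication ``if $R'$ is clean and $R' \fwlts{i}{\alpha} R''$, then $R''$ is clean'' for \emph{arbitrary} RCCS processes $R'$ — this generality is needed because the premises of the \textbf{syn.} and \textbf{par.} rules speak of sub-processes of $R_{n-1}$, which are clean precisely because their threads are among those of $R_{n-1}$. The crux is the \textbf{act.} rule, the only one that modifies a memory stack (the others merely recombine, via $\mid$ or $\bs a$, the processes produced by their premises): it rewrites a thread $m \rhd \lambda . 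P + Q$ into $\mem{i,\lambda,Q}.m \rhd P$ under the side condition $i \notin \ids(m)$; if $m$ is no-repeat, then so is $\mem{i,\lambda,Q}.m$, since the new topmost event carries $i$, which by the side condition differs from every identifier already in $m$, while the identifiers of $m$ each keep their unique occurrence, not at the top. For \textbf{syn.}, \textbf{par.} and \textbf{res.} cleanliness transfers from the inner induction hypothesis applied to the premises; and for the rule $\congru$ one checks that $\congru$ (\autoref{def:congru_rccs}) preserves cleanliness — $\alpha$-conversion and scope-of-restriction leave memory stacks untouched, and distribution of memory replaces a thread with no-repeat stack $m$ by two threads with stack $\fork . m$, which is no-repeat as well since $\ids(\fork.m) = \ids(m)$ and $\fork$ carries no identifier.

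I do not expect a genuine difficulty: the whole content sits in the side condition $i \notin \ids(m)$ of \textbf{act.} The only point deserving attention is the bookkeeping of the inner induction — stating it for arbitrary processes so that the premises of the binary rules are in scope, and observing that the side conditions $i \notin \ids(S)$ of the \textbf{par.} rules, although present, play no role here (they matter for global freshness of identifiers, not for the absence of repetitions within a single stack). One could equally dispense with \cite[Lemma~10]{Danos2004} and argue on a mixed trace directly, since a backward step merely pops the top event off one memory stack — or two, for a backward synchronization — which trivially preserves cleanliness.
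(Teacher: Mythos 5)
Your proof is correct, and it isolates exactly the right crux: the side condition \(i \notin \ids(m)\) of the act.\ rule is the only place where freshness is injected, and everything else is bookkeeping. The paper organizes the argument differently, though the key idea is identical. It proceeds by structural induction on the memory stack \(m\) itself (cases \(\emptymem\), \(\fork.m'\), \(\mem{j,\lambda,Q}.m'\)), and in the cons case invokes reachability locally: it asserts that there is a transition \(m' \rhd S \fwlts{j}{\lambda} \mem{j,\lambda,Q}.m' \rhd P\) whose derivation has act.\ as the axiom at its top, and reads off \(j \notin \ids(m')\) from the side condition. That version is shorter, but it leaves implicit both why such a transition exists and why the induction hypothesis applies to \(m'\) (one needs that the thread with stack \(m'\) is itself part of a reachable process, i.e.\ that reachability is inherited along the trace). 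Your trace-indexed outer induction with an inner induction on the derivation makes precisely this bookkeeping explicit, and you are the only one of the two to address the \(\congru\) rule and the distribution-of-memory axiom, which do occur in derivations and do touch memory stacks (replacing \(m\) by \(\fork.m\)); the paper handles \(\fork.m'\) only as a structural case of \(m\), not as a step of the LTS. The trade-off is that your argument is longer and requires the auxiliary notion of a \enquote{clean} process closed under subterms, whereas the paper's structural induction stays entirely inside a single stack. Your closing remark that one could work with a mixed trace directly is also sound, since backward steps only pop events.
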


\begin{proof}
	We prove it by induction on the structure of \(m\).
	\begin{itemize}
		\item if \(m = \emptyset\), then it is obvious.
		\item if \(m = \fork . m'\), then by induction hypothesis, for all \(i \in \ids(m')\), \(i\) occurs only once in \(m'\), and since no identifier occur in \(\fork\), \(i\) occurs only once in \(m\).
		\item if \(m = \mem{j, \lambda, Q}.m'\) then there exists \(S\) such that \(m' \rhd S \fwlts{j}{\lambda} \mem{j, \lambda, Q}.m' \rhd P\) since \(R\) is reachable and \(\mem{j, \lambda, Q}.m' \rhd P\) is a thread in it.
		      By induction we know that for all \(i \in \ids(m')\), \(i\) occurs once in \(m'\).
		      We reason on the derivation tree of the transition that adds the memory event \(\mem{j, \lambda, Q}\) and we have that for any such transition, the rule act.\@ of \autoref{fig:ltsrules} is applied as axiom at the top of the derivation tree.
		      By the side condition of the rule act., we know that \(j \notin \ids(m')\), hence that for all \(i \in \ids(m)\), \(i\) occurs once in \(m\). \qedhere
	\end{itemize}
\end{proof}

Note that the property above holds for reversible threads, and not for RCCS processes in general: we actually \emph{want} memory events to sometimes share the same identifiers.
Indeed, two memory events need to have the same identifiers if they result from a synchronization (\ie, the application of the syn.\@ rule of \autoref{fig:ltsrules}) or a fork (\ie, the application of the \ref{eq:rcss-congru-distrib-mem} rule of structural equivalence, \autoref{def:congru_rccs}).

\begin{restatable}[Uniqueness of identifiers]{lemma}{lem:id_unique}
	\label{lem:id_unique}
	For all \(e_1, e_2 \in E_R\), \(\iden_R(e_1) = \iden_R(e_2)\) implies \(e_1 = e_2\).
\end{restatable}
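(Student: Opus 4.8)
The statement strengthens Collision Freeness (\ref{eq:collision}) --- which only forbids repeated identifiers \emph{inside a single configuration} --- to global injectivity of $\iden_R$ on all of $E_R$; this is possible precisely because the encoding of a memory is a single ``path'' rather than a branching structure. I would prove it by induction following the two-layer shape of \autoref{def:encode_m}: first on the structure of the RCCS process $R$, then on the memory stack $m$. The case $R = R_1\bs a$ is immediate since $\encmp{R\bs a} = \encmp{R_1}$.

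For the thread case $R = m\rhd P$ we have $\encmp{R} = \encm{m}$, and I would induct on $m$. The cases $m = \emptyset$ (where $E_R = \emptyset$) and $m = \fork.m'$ (where $\encm{m} = \encm{m'}$) are trivial. For $m = \mem{i,\lambda,Q}.m'$ we have $\encm{m} = \encm{m'}\post(\lambda,i)$; by inspection of the encoding the identifier set of $\encm{m'}$ is exactly $\ids(m')$, and \autoref{lem:unique_id} gives that $i$ occurs only once in $m$, hence $i\notin\ids(m')$. So the postfixing is defined, adds a single event carrying the fresh identifier $i$, and injectivity of the new $\iden$ follows from the induction hypothesis.

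The crux is the parallel case $R = R_1\mid R_2$, where $\encmp{R} = \encmp{R_1}\mid_{S}\encmp{R_2}$ is the product $\iconf_3 = \encmp{R_1}\times\encmp{R_2}$ reidentified along $\iden'$ and then restricted on the $\bot$-labelled events (\autoref{def:par-ics}). One subtlety is that ``reachable'' is not inherited by parallel sub-components (an isolated $\fork$-prefixed thread need not be reachable), so I would run the induction carrying the weaker invariant actually used: every reversible thread of the process has pairwise distinct identifiers in its memory stack --- this is what \autoref{lem:unique_id} extracts from reachability, and it is plainly inherited by $R_1$ and $R_2$. Granting $\iden_{R_1}$, $\iden_{R_2}$ injective by induction and $\iden_3$ injective by correctness of the product (\autoref{lem:product_ics}), suppose $e,e'$ survive the $\bot$-restriction with $\iden'(e) = \iden'(e')$. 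I would then do a case analysis on which clause of $\iden'$ applies to $e$ and to $e'$. If both come from \ref{valid-synch} (synchronisation/fork), then $\pi_1(e),\pi_1(e')$ share an identifier in $\encmp{R_1}$ and $\pi_2(e),\pi_2(e')$ one in $\encmp{R_2}$, so both induction hypotheses force $e = e'$; if both come from \ref{extra1} (resp. \ref{extra2}), the induction hypothesis on $R_1$ (resp. $R_2$) does it; and every mixed combination is contradictory, since the side condition of \ref{extra1}/\ref{extra2} --- ``no event of the other component carries this identifier'' --- is violated by the witnessing projection of the other event. Hence $\iden'$ is injective on the surviving events.

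I expect this parallel-composition bookkeeping around the $\iden'$ clauses to be the only real obstacle; the rest is routine. As a remark, a shorter route may be available via \autoref{lem:prefix_enc}: since $\encmp{R}$ has a greatest configuration $x$ one has $\bigcup C_R = x$, so if one additionally checks that every event of a memory encoding lies in some configuration (giving $E_R = \bigcup C_R = x$), then global injectivity of $\iden_R$ collapses to Collision Freeness of $x$. I would still favour the structural induction, since \autoref{lem:unique_id} is visibly tailored to it.
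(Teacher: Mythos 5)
Your proof is correct and takes essentially the same route as the paper: structural induction on \(R\), with \autoref{lem:unique_id} disposing of the thread case (via the freshness of the postfixed identifier) and the clauses of \(\iden'\) in \autoref{def:par-ics} handling parallel composition --- the paper merely asserts that last case follows ``from the definition,'' whereas you spell out the case analysis, and your observation that reachability is not inherited by the parallel components (so the induction must carry the per-thread distinctness invariant instead) is a legitimate subtlety the paper glosses over. One caveat on your closing remark: the alternative route through \autoref{lem:prefix_enc} would be circular in this development, since the paper's proof of that lemma itself invokes the present one in its parallel-composition case.
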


\begin{proof}
	We proceed by structural induction on \(R\).
	From \autoref{lem:unique_id} the only interesting case is the parallel composition, \ie \(R= R_1\mid R_2\).
	From the definition of parallel composition of \(\iconf\)-structures (\autoref{def:par-ics}), it follows that \(\iden_R(e_1) = \iden_R(e_2)\) implies \(e_1 = e_2\).
\end{proof}

\prefixenc*

\begin{proof}
	We proceed by induction on \(R\).

	\begin{description}
		\item[If \(R\) is \(m \rhd P\),] we prove that \(\encmp{m \rhd P}\) is a partially ordered set (poset) with one maximal element by induction on \(m\).
		      The base case, if \(m\) is \(\emptymem\), is trivial, since \(\encm{\emptymem} = \iconfzero \) is a poset with one maximal element, \(\emptyset\).
		      If \(m\) is \(\fork . m'\), then it follows by induction hypothesis, since \(\encm{\fork . m'} = \encm{m'}\).
		      If \(m\) is \(\mem{i, \alpha, P} . m'\), then by induction hypothesis, \(\encm{m'}\) is a poset with one maximal element, and the postfixing construction used to define \(\encm{\mem{i, \alpha, P} . m'}\), detailed in \autoref{icat-op-def}, preserves that property.

		\item[If \(R\) is \(R' \bs a\),] then it trivially follows by induction hypothesis.

		\item[If \(R\) is \(R_1 \mid R_2\),] then by induction hypothesis we get that \(\encmp{R_1}\) and \(\encmp{R_2}\) are both posets with a maximal configuration.
		      We also know by \autoref{lem:id_unique} that events in \(\encmp{R_1}\) and \(\encmp{R_2}\) have disjoint identifiers (which does not imply that identifiers occurring in \(\encmp{R_1}\) and in \(\encmp{R_2}\) need to be disjoint).
		      Looking at the definition of parallel composition for \(\iconf\)-structures (\autoref{def:par-ics}), we may observe that \(\encmp{R} = \encmp{R_1 \mid R_2} = \encmp{R_1} \mid \encmp{R_2}\) consists of the structure \(\big(\encmp{R_1}\times\encmp{R_2})[\iden'/\iden][\labl'/\labl] \big)\crestr{\bot}\) for a certain \(\iden'\) and \(\labl'\).

		      We show that there exists more than one maximal configurations in \(\encmp{R_1}\times\encmp{R_2}\) and that all but one are removed by the restriction.

		      We show this by first showing that there exists more than one maximal configurations in \(\encmp{R_1}\times\encmp{R_2}\), denoted here with \(\iconf\).
		      From the definition of product (\autoref{icat-op-def}), we have that there exists \(y_1,\hdots, y_n\) maximal configurations in \(\iconf\) such that \(\pi_1(y_i)\) and \(\pi_2(y_i)\) are maximal in \(\iconf_1\) and \(\iconf_2\), respectively.

		      A second step is then to show that the restriction keeps only one maximal configuration.
		      Let \(y_i\) and \(y_j\) be two maximal configurations in \(\encmp{R_1}\times\encmp{R_2}\).
		      As they are maximal it implies that \(y_i\cup y_j\) is not a configuration in \(\iconf\), for \(i\neq j\leqslant n\).
		      In turn, this implies that there exists \(e_i\in y_i\) and \(e_j\in y_j\) such that \(\pi_1(e_i) = \pi_1(e_j)\) and \(e_i\neq e_j\), as otherwise \(y_i\cup y_j\) would be defined.
		      Here we assume that \(\pi_1(e_i) = \pi_1(e_j)\) but we could also take \(\pi_2(e_i) = \pi_2(e_j)\) and the argument still holds.

		      Let us now take \(d\) an event in \(\encmp{R_1}\) and take \(e_1, \hdots, e_m\) the subset of events in \(E\) such that \(\pi_1(e_i) = d\). The restriction in the parallel composition of \(\encmp{R_1} \mid \encmp{R_2}\) keeps only one such event \(e_i\) and removes the rest: all the other events are reidentified with \(\bot_k\) since no two events can have the same identifier in the same component of the event.
		      Therefore, from all maximal configurations \(y_1, \hdots, y_m\) such that \(e_i \in y_i\), \(i \leqslant m\), only one remains.

		      By applying the argument above to all events in \(\encmp{R_1}\) (and \(\encmp{R_2}\)), we have that the restriction removes all but one \(y_i\), which is then the maximal configuration in \(\encmp{R_1} \mid \encmp{R_2}\).\qedhere
	\end{description}
\end{proof}

\subsubsection{Operational Correspondence}
\label{sssec:op_cor}

\begin{lemma}
	\label{lem:equivalence_encoding}
	If \(R\congru S\), then the exists an isomorphism \(f\) between \(\encmp{R}\) and \(\encmp{S}\), with \(f_L = \id\) and \(f_{\iden}=\id\). %
\end{lemma}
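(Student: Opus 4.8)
The plan is to induct on the derivation of $R \congru S$, where $\congru$ is generated (\autoref{def:congru_rccs}) by the three displayed rules, by reflexivity/symmetry/transitivity, and by closure under the RCCS process constructors $\mid$ and $\bs a$ --- note that there is no closure under prefixing or sum, which operate on CCS processes only. The equivalence-relation cases are immediate: reflexivity is witnessed by $\id_{\encmp{R}}$, symmetry by $f^{-1}$ (still an isomorphism fixing labels and identifiers), and transitivity by composition, using that identified configuration structures form a category (\autoref{lem:cat_ics}).

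For the base cases, the rules \ref{eq:rcss-congru-alpha} and \ref{eq:rcss-congru-scope} are trivial, because the encoding of a memory (\autoref{def:encode_m}) discards the process carried by a thread: $\encmp{m \rhd P} = \encm{m} = \encmp{m \rhd Q}$ whenever $P =_\alpha Q$, and likewise $\encmp{m \rhd P\bs a} = \encm{m} = \encmp{(m\rhd P)\bs a}$, so the identity does the job. The substantive base case is \ref{eq:rcss-congru-distrib-mem}: there $\encmp{m \rhd (P\mid Q)} = \encm{m}$ while $\encmp{\fork.m\rhd P \mid \fork.m\rhd Q} = \encm{\fork.m}\mid_S\encm{\fork.m} = \encm{m}\mid_S\encm{m}$, so I must produce an isomorphism $\encm{m} \iso \encm{m}\mid_S\encm{m}$ that is the identity on labels and identifiers. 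I would show that this parallel composition collapses onto the diagonal. Writing $\encm{m} = (E, C, L,\labl, I, \iden)$, the map $\iden$ is injective on $E$ (by \autoref{lem:id_unique}, or by \autoref{lem:prefix_enc} together with \ref{eq:collision}, since $\encm{m}$ has a single maximal configuration containing all its events). In $\encm{m}\times\encm{m}$, any event with $\iden_3$-value $(i,\star)$ or $(\star,i)$ is re-identified with some $\bot_k$ --- the identifier $i$ does occur in the other, identical component, so neither \ref{extra1} nor \ref{extra2} applies; any $(i,j)$ with $i\neq j$ lands in the \ref{error} case; and an event $(d,d')$ with $\iden(d)=\iden(d')$ is a \ref{valid-synch} event, which forces $d=d'$ by injectivity. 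Hence after restriction to $\bot$ the surviving events are exactly $\{(d,d)\mid d\in E\}$, the surviving configurations are exactly the diagonals $\{(d,d)\mid d\in y\}$ of the $y\in C$ (these are precisely the configurations of the product all of whose events survive, each projecting on both sides to some $y \in C$, the remaining product axioms transferring routinely from those of $\encm{m}$), and (using the clause $\alpha\bullet\alpha=\alpha$ of the synchronisation algebra of \autoref{def:encode_m}) the label and identifier of $(d,d)$ are $\labl(d)$ and $\iden(d)$. So $d\mapsto(d,d)$ is the desired isomorphism.

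For the contextual cases, $R_1\bs a \congru S_1\bs a$ from $R_1\congru S_1$ is immediate since $\encmp{R_1\bs a}=\encmp{R_1}\iso\encmp{S_1}=\encmp{S_1\bs a}$ by the induction hypothesis. For $R_1\mid R_2\congru S_1\mid R_2$ from $R_1\congru S_1$ (the other side symmetric), the induction hypothesis gives an isomorphism $f:\encmp{R_1}\to\encmp{S_1}$ with $f_L=\id$ and $f_{\iden}=\id$; since the product is the categorical product of \(\catics\) (\autoref{lem:product_ics}), $f\times\id_{\encmp{R_2}}$ is an isomorphism of the products, and because $f$ fixes labels and identifiers it commutes with the re-identifying map $\iden'$, the re-labelling map $\labl'$, and hence the restriction to $\bot$ of \autoref{def:par-ics} --- all of which are defined purely from the labels and identifiers of the product and from which identifiers occur in each component. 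So it descends to an isomorphism $\encmp{R_1}\mid_S\encmp{R_2}\iso\encmp{S_1}\mid_S\encmp{R_2}$ fixing labels and identifiers; it is exactly for this step that the invariant $f_L=\id$, $f_{\iden}=\id$ must be threaded through the induction. The main obstacle is the analysis of \ref{eq:rcss-congru-distrib-mem}: verifying that the $\iconf$-structure parallel composition applied to two copies of one memory encoding indeed restricts down to the diagonal (and preserves labels via $\alpha\bullet\alpha=\alpha$); everything else is bookkeeping, with the caveat above about the strengthened induction hypothesis.
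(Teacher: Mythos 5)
Your proof is correct and follows essentially the same route as the paper's: the heart of both arguments is the case of \ref{eq:rcss-congru-distrib-mem}, where the diagonal map \(e \mapsto (e,e)\) is shown to give an isomorphism \(\encm{m} \iso \encm{m}\mid_{S}\encm{m}\) by using the injectivity of the identifying function (\autoref{lem:id_unique}) to send every off-diagonal and \(\star\)-paired event to the \ref{error} clause, so that only the diagonal survives the restriction. You are in fact somewhat more thorough than the paper, which only checks the three generating rules of \autoref{def:congru_rccs} and leaves implicit both the equivalence-relation closure and the contextual closure under \(\mid\) and \(\bs a\) that you dispatch via \autoref{lem:cat_ics} and \autoref{lem:product_ics}.
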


\begin{proof}
	Looking back at \autoref{def:congru_rccs}, there are only limited ways for \(R\) and \(S\) to be structurally equivalent (\ie, in the \(\congru\) relation), and we review them one by one.
	Before that, let us observe that the sets of identifiers and of labels occurring in the memories of \(R\) and \(S\) are identical.

	In%
	~\ref{eq:rcss-congru-alpha}, it should be noted that the memory is left untouched, so their encodings are equal.

	For~\ref{eq:rcss-congru-distrib-mem}, we have that
	\begin{align*}
		\encmp{m \rhd (P \mid Q)}                         & = \encm{m}                                                  \\
		\shortintertext{and}
		\encmp{(\fork. m \rhd P) \mid (\fork . m \rhd Q)} & = \encmp{(\fork. m \rhd P)} \mid \encmp{(\fork . m \rhd Q)} \\
		                                                  & = \encm{\fork. m} \mid \encm{\fork. m}                      \\
		                                                  & = \encm{m} \mid \encm{m}
	\end{align*}

	The construction of the isomorphism between \(\encm{m}\) and \(\encm{m} \mid \encm{m}\) maps \(e\) and \((e, e)\): let \(e\) be an event in \(\encm{m}\).
	Note that the only event \(e'\) in \(\encm{m} \mid \encm{m}\) such that \(\pi_1(e') = e\) is \((e, e)\).
	Indeed, suppose \((e, e'')\) is in \(\encm{m} \mid \encm{m}\), for \(e'' \neq e\).
	Then, by \autoref{lem:id_unique}, \(\iden(e) \neq \iden(e'')\), since they are both events in \(\encm{m}\).
	But by the definition of parallel composition of \(\iconf\)-structures (\autoref{def:par-ics}), \((e, e'')\) should have been re-identified with \(\bot_k\) and then discarded. %
	Similarly, \((e, \star)\) can not be an event in \(\encm{m} \mid \encm{m}\), since it should have been re-identified with \(\bot_k\) and then discarded. %
	Hence, we can map \(e\) in \(\encm{m}\) and \((e, e)\) in \(\encm{m} \mid \encm{m}\) and obtain our isomorphism.
	It follows from the definition of parallel composition of \(\iconf\)-structures that the functions \(f_L\) and \(f_{\iden}\) in the isomorphism are indeed the identities.

	For~\ref{eq:rcss-congru-scope}, we have that have \(\encmp{m \rhd (P \bs a)} = \encm{m} = \encmp{(m \rhd P) \bs a}\).
\end{proof}

\begin{lemma}
	\label{lem:postfix-max-event}
	The event introduced in the postfixing of a memory event to an identified structure is maximal in the resulting identified structure.
\end{lemma}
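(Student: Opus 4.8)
The plan is to unfold three definitions — postfixing (\autoref{icat-op-def}), maximal event (\autoref{def:maximal-event}), and the causality relation $<_x$ (\autoref{def:causality}) — and then exploit the single structural fact that the event created by postfixing is \emph{fresh}: it does not occur in any configuration inherited unchanged from the original structure, and in the one configuration shape where it does occur it is added strictly on top. This immediately prevents it from being strictly below any other event.

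Concretely, write $\iconf = \iconf_1 \post (a,i)$ with fresh event $e \notin E_1$, $\labl(e) = a$; by \autoref{lem:preserve_ics} this is an $\iconf$-structure, so the notion of maximal configuration applies. Recall that postfixing sets $C = C_1 \cup \{x' \cup \{e\} \setst x' \in C_1 \text{ maximal and finite}\}$, and since $e \notin E_1$ no element of $C_1$ contains $e$; hence every configuration of $\iconf$ containing $e$ has the form $x' \cup \{e\}$ with $x' \in C_1$. Now fix an arbitrary maximal configuration $x$ of $\iconf$ and suppose, for contradiction, that some $e'$ satisfies $e <_x e'$. Then $e, e' \in x$ and $e \neq e'$, so $x$ contains $e$, hence $x = x' \cup \{e\}$ for some $x' \in C_1$, and therefore $e' \in x'$. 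But $x' \in C_1 \subseteq C$ witnesses $x' \subseteq x$ with $e' \in x'$ and $e \notin x'$ (because $e \notin E_1 \supseteq x'$), so the defining clause of $e \leqslant_x e'$ — that for every $y \in C$ with $y \subseteq x$ one has $e' \in y \Rightarrow e \in y$ — fails at $y = x'$. Thus $e \not\leqslant_x e'$, contradicting $e <_x e'$. Since $x$ was an arbitrary maximal configuration, no such $e'$ exists, so $e$ is maximal in $\iconf$, which is the claim.

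I do not expect a genuine obstacle here; the only point requiring care is the (slightly counter-intuitive) direction of the implication in the definition of $\leqslant_x$, together with keeping track of the fact that the postfixed event is fresh and so never belongs to a configuration coming from $C_1$. If desired one may additionally remark — though it is not needed for the statement — that each $x' \cup \{e\}$ with $x'$ maximal and finite in $\iconf_1$ is itself maximal in $\iconf$, which clarifies the shape of the maximal configurations of $\iconf$ but plays no role in the argument above.
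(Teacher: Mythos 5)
Your proof is correct and follows essentially the same route as the paper's (which is a two-sentence sketch observing that the fresh event only appears on top of the finite maximal configurations of \(\iconf_1\), so nothing can lie above it). Your version merely spells out the contradiction against the definition of \(\leqslant_x\) explicitly, which is a faithful elaboration rather than a different argument.
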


\begin{proof}
	The proof is immediate: the event introduced occurs only in the finite maximal configurations of the identified structure, and hence there cannot be any other event that causes it.
	If the maximal configuration is infinite, then the event introduced by the postfixing is not in it, and there cannot be any other event causing it.
\end{proof}

Furthermore, the maximality of an event can be \enquote{preserved} by parallel composition:

\begin{lemma}
	\label{lem:prevent-max-event}
	For all identified structure \(\iconf_1 = (E_1, C_1, L_1, \labl_1, I_1, \iden_1)\) with \(e_1 \in E_1\) a maximal event in it, and for all identified configuration \(\iconf_2 = (E_2, C_2, L_2, \labl_2, I_2, \iden_2)\) such that \(\iden_1(e_1) \notin I_2\), \((e_1, \star)\) is maximal in \(\iconf_1 \mid \iconf_2\).
\end{lemma}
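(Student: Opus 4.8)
The plan is to verify first that $(e_1,\star)$ really is an event of $\iconf_1 \mid \iconf_2$, and then to reduce its maximality to a deletability statement about configurations. Write $\iconf_3 = \iconf_1 \times \iconf_2$. By the definition of the product, $\iden_3((e_1,\star)) = (\iden_1(e_1),\star)$, and since $\iden_1(e_1) \notin I_2$ no event of $\iconf_2$ carries that identifier, so case \ref{extra1} of \autoref{def:par-ics} applies: $\iden'((e_1,\star)) = \iden_1(e_1) \neq \bot_k$, hence $\labl'((e_1,\star)) \neq \bot$ and $(e_1,\star)$ survives the restriction $\crestr{\bot}$. Incidentally, the hypothesis also makes $(e_1,\star)$ the unique event of $\iconf_1 \mid \iconf_2$ with first projection $e_1$, since any $(e_1,e_2)$ with $e_2 \in E_2$ would fall under \ref{error} and be discarded.

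Next I would establish, as an auxiliary fact about (finite) configuration structures, that an event $e$ is maximal in the sense of \autoref{def:maximal-event} if and only if $z \setminus \{e\} \in C$ for every configuration $z$ containing $e$. The \enquote{if} direction is immediate. For \enquote{only if}: a configuration $z$ with nothing above $e$ equals $\bigcup\{w \in C : w \subseteq z, e \notin w\}$, which lies in $C$ by \ref{FiniteCompleteness} with $z$ as the finite bound; and if $e <_z e'$ for some configuration $z$, then extending the (by \ref{Stability}) minimal sub-configuration of $z$ containing $e'$ to a maximal configuration $\hat z$ keeps $e <_{\hat z} e'$ — using \ref{FiniteCompleteness} and \ref{Stability} to intersect an arbitrary sub-configuration of $\hat z$ with that minimal one — contradicting maximality of $e$.

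Granting this, it suffices to show $x \setminus \{(e_1,\star)\} \in C$ for every configuration $x$ of $\iconf_1 \mid \iconf_2$ containing $(e_1,\star)$. Since relabelling and re-identifying leave configurations unchanged, $x$ is a configuration of $\iconf_3$ avoiding the $\bot$-events, so I only need to recheck the product's configuration conditions of \autoref{icat-op-def} for $x \setminus \{(e_1,\star)\}$. Deleting $(e_1,\star)$ touches only the first projection: $\gamma_2(x \setminus \{(e_1,\star)\}) = \gamma_2(x) \in C_2$ (its second component is $\star$), and $\gamma_1(x \setminus \{(e_1,\star)\}) = \gamma_1(x) \setminus \{e_1\}$, which is in $C_1$ by the auxiliary fact, since $e_1$ is maximal in $\iconf_1$ and $\gamma_1(x) \in C_1$ contains $e_1$. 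Injectivity is inherited from $x$. For the finiteness- and coincidence-freeness-within conditions I replace any witnessing subset $z \subseteq x$ by $z \setminus \{(e_1,\star)\}$: its projections are still valid ($\gamma_1(z) \setminus \{e_1\} \in C_1$, $\gamma_2(z)$ unchanged), it is still finite, and it still separates the events in question since $(e_1,\star)$ is none of them. Hence $x \setminus \{(e_1,\star)\}$ is a configuration of $\iconf_3$, and therefore of $\iconf_1 \mid \iconf_2$, and by the auxiliary fact $(e_1,\star)$ is maximal.

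The main obstacle is the auxiliary characterization of maximal events: \autoref{def:maximal-event} quantifies over \emph{maximal} configurations, whereas the argument needs to manipulate arbitrary ones, and causal dependencies are not automatically preserved when a configuration is shrunk or enlarged — so the interplay of \ref{Stability} and \ref{FiniteCompleteness} has to be handled with some care. Once it is available, the rest is uniform; in particular no case analysis on the shape of a putative event above $(e_1,\star)$ is required, because we only ever delete $(e_1,\star)$ itself, which does not interact with the $\iconf_2$-component.
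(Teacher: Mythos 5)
Your proof is correct, but it takes a genuinely different route from the paper's. The paper argues by contradiction on the causal order: it assumes \((e_1,\star) <_x e\) for some maximal configuration \(x\) of \(\iconf_1 \mid \iconf_2\), shows that \(\pi_1(x)\) is maximal in \(\iconf_1\) (via \ref{Stability}), rules out \(\pi_1(e) = \star\) by invoking an external proposition on immediate causes in products, and then contradicts the maximality of \(e_1\) in \(\iconf_1\). You instead recast maximality of an event as \emph{deletability from every configuration containing it} and verify the product axioms of \autoref{icat-op-def} directly for \(x \setminus \{(e_1,\star)\}\); the observation that \(\gamma_2\) is untouched while \(\gamma_1(x)\setminus\{e_1\} \in C_1\) does all the work, the uniqueness of \((e_1,\star)\) over \(e_1\) being the same starting point as the paper's. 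What your route buys is self-containedness --- no appeal to the cited proposition about immediate causality in products, and no case analysis on a putative event above \((e_1,\star)\) --- at the price of the auxiliary characterization, whose \enquote{only if} direction is indeed the one delicate step (bridging \autoref{def:maximal-event}, which quantifies only over maximal configurations, to arbitrary ones via \ref{FiniteCompleteness} and \ref{Stability}); your sketch of it is sound. Two small caveats: the identity you state should read \(z \setminus \{e\} = \bigcup\{w \in C \setst w \subseteq z, e \notin w\}\) rather than \(z\) itself; and, as you flag, the auxiliary fact uses finiteness of configurations (both to apply \ref{FiniteCompleteness} with \(z\) or \(\hat z\) as the finite bound and to extend to a maximal configuration at all), a restriction the lemma's statement does not impose --- though the paper's own proof is equally cavalier about infinite structures, and every use of the lemma is on encodings of memories, whose configurations are finite.
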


\begin{proof}
	The definition of parallel composition of identified configuration structures (\autoref{def:par-ics}) should make it clear that the only event in \(\iconf_1 \mid \iconf_2\) whose first projection is \(e_1\) is \((e_1, \star)\), since \(\iden_1(e_1) \notin I_2\): all the other pairings of events from \(E_2\) with \(e_1\) being reidentified with \(\bot_k\) and subsequently removed. %
	Now, suppose for the sake of contradiction that \((e_1, \star)\) is not maximal in \(\iconf_1 \mid \iconf_2\).
	It means that there is a maximal configuration \(x\) in \(\iconf_1 \mid \iconf_2\) and an event \(e \in x\) such that \((e_1, \star) <_x e\).

	By the definition of product of \(\iconf\)-structures (\autoref{icat-op-def}), \(\pi_1(x) \in C_1\), and we prove that \(\pi_1(x)\) is maximal in \(\iconf_1\) by contradiction.
	Suppose \(\pi_1(x)\) is not maximal in \(\iconf_1\), then there exists \(z \in C_1\) such that \(\pi_1(x) \subsetneq z\).
	Assume \(z\) is maximal in \(\iconf_1\) (if it is not, then take \(z'\) the maximal configuration such that \(z \subseteq z'\), which always exists)%
	, and note that \(e_1 \in z\), since \(e_1 \in \pi_1(x)\) and \(\pi_1(x) \subset z\).
	By \ref{Stability}, since \(z \cup \pi_1(x) = z\) is a configuration in \(C_1\), \(z \cap \pi_1(x) = z \setminus \pi_1(x)\) also is, and note that for all events \(e'\) in \(z \setminus \pi_1(x)\), we have that \(e_1 \not \leqslant_{z \setminus \pi_1(x)} e' \).
	Since \(e' \in z \setminus \pi_1(x)\), \(e' \in z\) and \(e_1 \not \leqslant_z e'\), we have that \(z\) is a maximal configuration in \(\iconf_1\) where \(e_1\) is not maximal: a contradiction. Hence, we know that \(\pi_1(x)\) is maximal in \(\iconf_1\).

	Now, we prove that \(\pi_1(e) \neq \star\).
	For the sake of contradiction, suppose that \(\pi_1(e) = \star\).
	Then, observe that the underlying configuration structure \(\functoric (\iconf_1 \mid \iconf_2)\) also have this configuration \(x\) with \((e_1, \star) <_x e\).
	Without loss of generality, we can assume that \(e\) is \emph{an immediate cause} of \((e_1, \star)\)~\cite[Definition 20]{Aubert2016jlamp}, that is, that there is no \(e''\) in \(x\) such that \((e_1, \star) <_x e'' <_x e\).
	We can now use a small proposition connecting events in products with the order in their original configuration~\cite[Proposition 3]{Aubert2016jlamp} to get that it must be the case that either \(\pi_1(e_1, \star) <_{\pi_1(x)} \pi_1(e)\) or \(\pi_2(e_1, \star) <_{\pi_2(x)} \pi_2(e)\).
	But since \(\pi_2(e_1, \star) = \star \notin \pi_2(x)\), and since we assumed \(\pi_1(e) = \star \notin \pi_1(x)\), both scenario are impossible, so we reached a contradiction, and it must be the case that \(\pi_1(e) \neq \star\).

	Since \(\pi_1(x)\) is maximal in \(\iconf_1\), and since \(\pi_1(e), e_1 \in \pi_1(x)\), we reached a contradiction: \(\pi_1(e)\) is not caused by \(e_1\), but that cannot be since \(e_1\) is not maximal.
	Hence, there is no such \(e \in x\), and \((e_1, \star)\) is maximal in \(\iconf_1 \mid \iconf_2\).
\end{proof}

We can now use Lemmas~\ref{lem:equivalence_encoding}--\ref{lem:prevent-max-event} to prove our main result for this section. We use an extra notation here, we write \(d < e\) if for every \(x \in C\) such that \(d, e \in x\), we have \(d <_x e\).

\transitionidentified*

\begin{proof}

	We can rephrase the lemma as follows:

	\begin{enumerate}
		\item For all transitions \(R\fwlts{i}{\alpha}S\), \(\encmp{R} \iso \encmp{S}\crestr{\{e\}}\) with \(e\) maximal in \(\encmp{S}\) and \(\iden_S(e) = i\). \label{lem:transition_identified-fw}
		\item For all transitions \(R\bwlts{i}{\alpha}S\), \(\encmp{S} \iso \encmp{R}\crestr{\{e\}}\) with \(e\) maximal in \(\encmp{R}\) and \(\iden_R(e) = i\). \label{lem:transition_identified-bw}
		\item For any maximal event \(e\) in \(\encmp{R}\), there is %
		      a transition \(R \bwlts{\iden_R(e)}{\labl_R(e)} S\) with \(\encmp{S} \iso \encmp{R}\crestr{\{e\}}\). \label{lem:transition_identified-bw-bis}
	\end{enumerate}

	We prove the three items separately.

	\subparagraph*{\autoref{lem:transition_identified-fw}}
	We proceed by case on \(\alpha\) in the transition \(R\fwlts{i}{\alpha}S\):
	\begin{description}
		\item[If \(\alpha = a\)]
		      For the transition \(R\fwlts{i}{\alpha}S\) to take place, it must be the case that \(R\) contains a thread \(T = m \rhd a.P + Q\) for some \(m\), \(P\) and \(Q\), and that \(T \fwlts{i}{\alpha}\mem{i, a, Q} . m \rhd P \).
		      Additionally, the par.\(_{\text{L}}\), par.\(_{\text{R}}\), res. and \(\congru\) rules of \autoref{fig:ltsrules}, gives us that it must be the case that
		      \begin{align*}
			      R \congru ((R_{n-1} \cdots((R_3 \mid ((R_1 \mid T) \mid R_2)\bs \namelist{b^1}) \mid R_4)\bs \namelist{b^2} \cdots ) \mid R_n) \bs \namelist{b^{m}}
		      \end{align*}
		      for some \(R_i\) any of which (and its corresponding \(\mid\) constructor) could be missing
		      and for some \(\namelist{b^j}\), any of which (along with their \(\bs\) constructor) could be missing as well.
		      Hence, the transition can be written as
		      \begin{align*}
			      R \fwlts{i}{a} \underbrace{((R_{n-1} \cdots((R_3 \mid ((R_1 \mid \mem{i, a, Q} . m \rhd P) \mid R_2)\bs \namelist{b^1}) \mid R_4)\bs \namelist{b^2} \cdots ) \mid R_n) \bs \namelist{b^{m}}}_{= S'}
		      \end{align*}
		      with \(S' \congru S\).
		      By \autoref{lem:postfix-max-event}, we know that there is a maximal event \(e_1\) in \(\encmp{\mem{i, a, Q} . m \rhd P}\) that has for identifier \(i\).
		      We show that this event can be \enquote{traced through} \(\encmp{S'}\), using four arguments:
		      \begin{itemize}
			      \item From the par.\(_{\text{L}}\) or par.\(_{\text{R}}\) rule in \autoref{fig:ltsrules} we have that \(i \notin \ids(R_1)\).
			            Using \autoref{lem:prevent-max-event}, it follows that there exists a maximal event \(e_2\) in \(\encmp{R_1 \mid \mem{i, a, Q} . m \rhd P}\) such that \(\pi_2(e_2) = e_1\), and from the definition of the parallel composition of \(\iconf\)-structures (\autoref{def:par-ics}) that its identifier is \(i\).
			      \item We can use the same reasoning to prove that there is a maximal event \(e_3\) in \(\encmp{(R_1 \mid \mem{i, a, Q} . m \rhd P) \mid R_2}\) such that \(\pi_1(e_3) = e_2\) and such that its identifier is \(i\).
			      \item Since \(\encmp{R_3 \mid ((R_1 \mid \mem{i, a, Q} . m \rhd P) \mid R_2)\bs \namelist{b^1}} = \encmp{R_3 \mid ((R_1 \mid \mem{i, a, Q} . m \rhd P) \mid R_2)}\), it is trivial that \(e_3\) is a maximal event with identifier \(i\) in it.
			      \item Using those three arguments repeatedly, and \enquote{skipping} them if a parallel composition or a restriction is \enquote{missing}, we can \enquote{trace} the maximal event with identifier \(i\) in \(\encmp{S'}\), that we write \(e\).
		      \end{itemize}
		      But we still need to find a maximal event whose identifier is \(i\) \emph{in \(\encmp{S}\)}.
		      Since \(S' \congru S\),
		      \autoref{lem:equivalence_encoding} gives us that \(\encmp{S'} \iso \encmp{S}\).
		      Let us write \(f\) this isomorphism, we know that it is such that \(f_L = \id\), and hence that we can use \autoref{rem:mor-in-catics-notation} to study only the part of \(f\) that maps events, that we also write \(f\).
		      Now, we want to prove that \(f(e)\) is maximal in \(\encmp{S}\).
		      To do so, let us use the \enquote{if} part of the upcoming \autoref{lem:lop_preserve_conf}\footnote{%
			      This part of the lemma is \emph{not} proved using the lemma we are currently proving, but the \enquote{and only if} part does, hence the choice of postponing it after this current lemma.} %
		      to obtain a \lop bijection between the maximal configurations of \(\encmp{S'}\) %
		      and the maximal configuration of \(\encmp{S}\), that we also write \(f\).
		      Now, let's suppose that \(f(e)\) is not maximal in \(\encmp{S}\).
		      Then, there exists a maximal configuration \(x_m\) in \(\encmp{S}\) and an event \(e'\) in \(x_m\) such that \(f(e) <_{x_m} e'\).
		      But since \(f\) is an isomorphism, and since \(f\) is order-preserving, %
		      we have that it must be the case that \(e <_{f^{-1}(x_m)} f^{-1} (e')\), which contradicts the maximality of \(e\) in \(S'\).%

		      Hence, \(f(e)\) is maximal in \(\encmp{S}\), and since by \autoref{lem:equivalence_encoding} \(f_{\iden} = \id\), \(\iden_S(e) = i\). %
		      It is obvious that this event is the only one that was added to the encoding of \(R\), so that \(\encmp{R} = \encmp{S}\crestr{\{e\}}\).

		\item[If \(\alpha = \tau\)]
		      Then it must be the case that \(R\) has two threads, \(T = m \rhd a.P + Q\) and \(T' = m' \rhd \bar{a}.P' + Q'\) for some \(m\), \(m'\), \(a\), \(P\), \(P'\), \(Q\), and \(Q'\), and that
		      \begin{align*}
			      R = (\cdots(R_3 \mid (((R_1 \mid T) \mid R_2)\bs \namelist{b^1} \mid R_4))\bs \namelist{b^2} \cdots \mid((R_1'\mid T')\mid R_2')\bs \namelist{c^1} \cdots \mid R_n) \bs \namelist{b^{m}}
		      \end{align*}

		      for some \(R_i, R_i'\) any of which (and its corresponding \(\mid\) constructor) could be missing
		      and for some \(\namelist{b^j}, \namelist{c^k}\), any of which (along with their \(\bs\) constructor) could be missing as well.
		      Then the transition becomes
		      \begin{align*}
			      R \fwlts{i}{\tau}
			       & (\cdots((R_3 \mid (R_1 \mid (\mem{i, a, Q} . m \rhd P) \mid R_2))\bs \namelist{b^1} \mid R_4)\bs \namelist{b^2} \cdots                    \\
			       & \hspace{2em} (R_1'\mid ((\mem{i, \bar{a}, Q'} . m' \rhd P') \mid R_2')\bs \namelist{c^1} \cdots \mid R_n) \bs \namelist{b^{m}} \congru S.
		      \end{align*}

		      We use the same reasoning as in the case above where \(\alpha = a\) to deduce that there is a maximal event in \(\encmp{(\mem{i, a, Q} . m \rhd P}\), let us name it \(e_1\), and one in \(\encmp{\mem{i, \bar{a}, Q'} . m' \rhd P'}\), let us name it \(e'_1\), that both have identifier \(i\).
		      Using the same argument as in the first case, we can \enquote{trace} in parallel \(e_1\) and \(e'_1\), until the \(\iconf\)-structures that hold their \enquote{descendants} are put in parallel: at that point, by the definition of parallel composition of \(\iconf\)-structures (\autoref{def:par-ics}), it should be clear that a single maximal event resulting from their composition will emerge, that it will be labeled \(\tau\) and have identifier \(i\).
		      Finally, using again the same argument as in the first case, this event resulting from the synchronization of our two maximal events will still be maximal in \(\encmp{S}\), and hence we can conclude that there exists a maximal event in \(\encmp{S}\) labeled \(\tau\) whose identifier is \(i\).
	\end{description}

	\subparagraph*{\autoref{lem:transition_identified-bw}}
	If \(R\bwlts{i}{\alpha}S\), then by the Loop Lemma~\cite[Lemma 6]{Danos2004}, \(S\fwlts{i}{\alpha}R\).
	By \autoref{lem:transition_identified-fw}, we have that \(\encmp{S} \iso \encmp{R}\crestr{\{e\}}\) with \(e\) maximal in \(\encmp{R}\) and \(\iden_R(e) = i\), which is what we wanted to show.

	\subparagraph*{\autoref{lem:transition_identified-bw-bis}}
	For any reachable process \(R\), from~\autoref{def:coh} and the remark below it~\cite[Lemma 10]{Danos2004}, we have that there exists a \emph{forward-only} trace \(\orig{R}\tfbwlts R\). We consider \withoutlog the following trace: \(\emptymem \rhd \orig{R} = R_0\fwlts{1}{\alpha_1} R_1\cdots \fwlts{k}{\alpha_k} R_{k} = R\), for some \(k\), \(\alpha_1, \hdots, \alpha_k\).

	Using \autoref{lem:transition_identified-fw}, we have that %
	\(\encmp{R_j}\crestr{\{e_j\}} = \encmp{R_{j-1}}\), for \(j\leqslant k\) and \(\iden(e_j)=j\). Therefore we can construct a bijection \(h\) between events \(e_j\) in \(\encmp{R}\) and transitions \(t_j : R_{j-1} \fwlts{j}{\alpha_j} R_{j}\) with \(\iden(e_j) = j\).

	Let \(e\) be a maximal event in \(\encmp{R}\) and let \(h(e) = R_{j-1}\fwlts{j}{\alpha_j} R_j\), where
	\begin{align*}
		R_0 \fwlts{1}{\alpha_1} R_1\cdots R_{j-1}\fwlts{j}{\alpha_j} R_j\fwlts{j+1}{\alpha_{j+1}} R_{j+1}\cdots R_{k-1}\fwlts{k}{\alpha_k} R_{k} = R
	\end{align*}

	As \(e\) is maximal, by \autoref{def:maximal-event}, there exists no event \(e'\) in \(\encmp{R}\) such that \(e < e'\).
	From %
	\(\encmp{R_{i+1}}\crestr{\{e_{i+1}\}} = \encmp{R_i}\), for \(i\leqslant k\), we have that \(\encmp{R_i} \subseteq \encmp{R_{i+1}}\) and by induction, \(\encmp{R_i} \subseteq \encmp{R_{k}}\), for all \(i\leqslant k\).
	Then \(\encmp{R_{j+i}}\subseteq \encmp{R_k}\) and we have that \(e_i\not\leqslant e\), for all \(j+1\leqslant i\leqslant k\). It implies that \(e_i\) is concurrent with \(e\), for all \(j+1\leqslant i\leqslant k\).

	We use now concurrency in traces, trace equivalences and the \(m_{R / S}\) notations from \autoref{sec:trace}.
	Consider two consecutive transitions \(t_{j} = R_{j}\fwlts{j+1}{\alpha_{j+1}} R_{j+1}\) and \(t_{j+1} = R_{j+1}\fwlts{j + 2}{\alpha_{j+2}} R_{j+2}\).
	Using the bijection \(h\) defined above between transitions and events, we have that there exists two event \(e_{j}, e_{j+1}\) in \(\encmp{R_{j+2}}\) such that \(h(e_{j}) = t_{j}\) and \(h(e_{j+1}) = t_{j+1}\).
	Suppose that \(e_{j}\) is concurrent to \(e_{j+1}\).
	By an induction on the structure of memories in \(\encmp{R_{j+2}}\), we have that \(m_{R_{j}/R_{j+1}} \cap m_{R_{j+1}/R_{j+2}}=\emptyset\) which implies that there exists a transition \(t_{j}' = R_{j}\fwlts{j+2}{\alpha_{j+2}} R_{j+2}\) such that \(t_{j}\) is concurrent with \(t_{j}'\).

	Recall that \(e_i\) is concurrent with \(e\), for all \(j+1\leqslant i\leqslant k\) in our trace.
	We can now use the trace equivalence~\cite[Definition 9]{Danos2004} reminded in \autoref{sec:trace}, and use the previous remark repeatedly to re-organize our original trace as follows:

	\begin{align*}
		R_0 \fwlts{1}{\alpha_1} R_1\cdots R_{j-1} \fwlts{j+1}{\alpha_{j+1}} R'_{j+1}\cdots R'_{j-1}\fwlts{k}{\alpha_k} R'_{k} \fwlts{j}{\alpha_j} R
	\end{align*}
	The core idea being that we can \enquote{postpone} the transition that will create the event \(e\) in the encoding of \(R\), by flipping it with the other transitions, and have it become the last transition that leads to the same \(R\).

	We can now use the Loop Lemma~\cite[Lemma 6]{Danos2004} to \enquote{reverse} the last transition and get that \(R \bwlts{j}{\alpha_j} R'_{k}\).
	Using \autoref{lem:transition_identified-bw} on this transition, we have that \(\encmp{R'_k} = \encmp{R}\crestr{\{e\}}\).
\end{proof}

\subsection{Proof for \autoref{sec:HHPB-CCS}}
\label{subsec:collapse-proof}

\thequivalencebf*

\begin{proof}
	Let \(R_1\), \(R_2\) be two reversible processes without auto-concurrency.
	We want to show that
	\begin{enumerate}
		\item \((R_1, R_2, f) \in \rel'\) for \(\rel'\) a \BF implies that \((R_1, R_2) \in \rel\) for \(\rel\) a \SBF. %
		      \label{prof:hhpb_ident2}
		\item \((R_1, R_2) \in \rel\) for \(\rel\) a \SBF %
		      implies that there exists \(f\) a bijection between \(\ids(R_1)\) and \(\ids(R_2)\) such that \((R_1, R_2, f) \in \rel'\) is a \BF. \label{prof:hhpb_ident1}
	\end{enumerate}

	For \autoref{prof:hhpb_ident2}, there is nothing to prove, since we are dropping a requirement, the existence of the bijection.

	For \autoref{prof:hhpb_ident1}, we first note that there exists forward-only traces from \(\emptymem \rhd \orig{R_1}\) to \(R_1\)~\cite[Lemma 10]{Danos2004}.
	We pick one, \(\theta_1\), and construct the bijection \(f\) between \(\ids(R_1)\) and \(\ids(R_2)\) using it.
	We start by letting \((\emptyset \vartriangleright \orig{R_1}, \emptyset \vartriangleright \orig{R_2}, \emptyset) \in \rel'\).
	Then, assuming \(\theta_1\) starts with \(\emptyset \vartriangleright \orig{R_1} \fwlts{i}{\alpha} R_1'\), since \((R_1, R_2) \in \rel\), we know there are \(R_2'\) and \(j\) such that \(\emptyset \vartriangleright \orig{R_2} \fwlts{j}{\alpha} R_2'\) and \((R_1', R_2') \in \rel\), and we let \((R_1', R_2', \{i \to j\}) \in \rel'\).
	We iterate this construction until we obtain a bijection \(f\) and let \((R_1, R_2, f) \in \rel'\).

	Now, we need to show that \((R_1, R_2, f) \in \rel'\) is a valid relation, and for this we must show that it can accommodate the forward and backward transitions.
	That the bijection \(f\) can be extended in case of forward transitions from \(R_1\) or \(R_2\) is obvious, using \(\rel\).
	The more difficult part of the proof concerns the backward transition, which requires to show that \(f\) is \enquote{right}.
	Indeed, if \(R_1\) or \(R_2\) does a backward transition to a term \(S\), and if \(S\) is paired with \(S'\) in \(\rel\), then we need to prove that the identifiers of the backward transitions leading to \(S\) and \(S'\) are %
	\enquote{matched}, \ie, in bijection in \(f\).

	Stated formally, we want to show that for any \((R_1, R_2) \in \rel\) with \(f\) constructed as above, then for all \(R_1\bwlts{i}{\alpha}S_1\) and \(R_2\bwlts{j}{\alpha}S_2\) such that \((S_1, S_2) \in \rel\), \(\{i\mapsto j\} \in f\).
	We show this by contradiction:
	suppose that we have two pairs of indices \(\{i\mapsto j\}\) and \(\{i'\to j'\}\) in \(f\), and that \(R_1\bwlts{i'}{\alpha}S_1\) and \(R_2\bwlts{j}{\alpha}S_2\) such that \((S_1, S_2) \in \rel\).

	First, observe that all four indices , \(i\), \(i'\), \(j\) and \(j'\), are associated to the same label \(\alpha\): due to the way \(f\) was constructed, following \(\rel\), it cannot be the case that two transitions with different labels have their identifiers paired.
	Secondly, since \(R_2\bwlts{j}{\alpha}S_2\) and \(\{i \to j\}\) is in \(f\), it must be the case that there is a transition in \(\theta_1\) with the identifier \(i\) by construction of \(f\). Let us denote \(t_i:S_1'\fwlts{i}{\alpha}R_1'\) and \(t_{i'}:S_1\fwlts{i'}{\alpha}R_1\).

	Now, let us put the remarks on trace of \autoref{sec:trace} to good use.
	Observe that \(i\) and \(i'\) cannot be concurrent in \(\theta_1\).
	If they were, then in the encoding of \(R_1\), there would be two events--the one introduced by \(t_i'\) and the one introduced by \(t_j\)--that are concurrent, with the same label, but that is not possible since \(R_1\) is without auto-concurrency.
	Since two transitions are either concurrent or causal, it follows that either \(i<i'\) or \(i'<i\)\footnote{Remember that we write \(i < k\) for \(t_i < t_k\) .}.

	\begin{itemize}
		\item As the cause cannot backtrack before the effect it follows that the case \(i'<i\) is not possible.
		\item Then \(i<i'\).
		      We now reason by cases on the order of the traces \(t_j\) and \(t_{j'}\): either \(j<j'\) or \(j'<j\).
		      \begin{itemize}
			      \item If \(j<j'\) then \(R_1\) must have already backtracked on \(j'\), which implies that the there is no \(j'\) in \(f\) as we assumed.
			      \item The case \(j'<j\) is not possible. Note that \(i<i'\) implies an order in which the pair of indices are added to the bijection, in this case the pair \(\{i\mapsto j\}\) occurs before \(\{i'\to j'\}\) and therefore it cannot be that \(j'<j\).
		      \end{itemize}
	\end{itemize}

	From this reasoning, we get that for any \(R_1\bwlts{i}{\alpha}S_1\) and \(R_2\bwlts{j}{\alpha}S_2\) such that \((S_1, S_2) \in \rel\), \(\{i\mapsto j\} \in f\).
	We still need to show that \((S_1, S_2, f \setminus \{i \to j\}) \in \rel'\) is a \enquote{valid} relation.
	We only need to show that \(f \setminus \{i \to j\}\) matches the \enquote{right} indices in case of a backward transition, but we can simply iterate the previous reasoning until we reach \((\emptyset \vartriangleright \orig{R_1}, \emptyset \vartriangleright \orig{R_2}, \emptyset) \in \rel'\).
	This concludes this direction of the proof.
\end{proof}

\subsection{Lemmas and Proofs for %
	\autoref{sec:lifting}}
\label{subsec:proof-main}

Our goal here is to prove Theorems \ref{th:hpb_rccs} and \ref{thm:def_hpb_coincide}, which give as an immediate corollary our main result (\autoref{cor:previousresult}) and an interesting remark (\autoref{cor:fw-with-id}).
We first (\autoref{ssec:csquences}) identify isomorphisms of \(\iconf\)-structures with \lop functions (\autoref{lem:lop_preserve_conf}) and then connect our previous formalism with the encoding of memories (\autoref{lem:enc_mem_2}).
Then, using this connection and the operational correspondence detailed in \autoref{sec:op-corr}, \autoref{ssec:main_proofs} concludes by proving our two main theorems.

Below, we abuse the notation by writing \(\enc{R} = (\enc{\orig{R}}, x_R)\), for the encoding of a reversible process into a pair made of the CCS encoding of its origin and a particular configuration in it, that we call the address of \(R\)~\cite{Aubert2016jlamp}.
The formal connection made in \autoref{lem:enc_mem_2} will justify this abuse, and the context should always make it clear which encoding we are referring to.

\subsubsection{On Connecting Formalisms}
\label{ssec:csquences}

We first establish a connection between isomorphisms (in Category theory) and \lop bijections (that are used to define \HorHPB).
Then, we construct a bridge between \(\encm{R}\) and our previous formalism~\cite{Aubert2016jlamp}. %

\begin{restatable}{lemma}{loppreserveconf}
	\label{lem:lop_preserve_conf}
	Letting \(x_1^m\) and \(x_2^m\) be the unique maximal configurations in \(\encmp{R_1}\) and \(\encmp{R_2}\),
	there is an isomorphism \(f\) between \(\encmp{R_1}\) and \(\encmp{R_2}\) with \(f_L = \id\)
	iff there exists a \lop bijection between \(x_1^m\) and \(x_2^m\).
\end{restatable}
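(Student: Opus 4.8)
The plan is to route both directions through a structural description of $\encmp{R}$ for reachable $R$. I claim that, up to isomorphism, $\encmp{R}$ is the \enquote{elementary} \(\iconf\)-structure generated by the finite partial order $(x^m,\leqslant_{x^m})$ carried by its maximal configuration $x^m$ --- which exists and is unique by \autoref{lem:prefix_enc} --- equipped with the restrictions of $\labl_R$ and $\iden_R$ to $x^m$; concretely, $E_R=x^m=\bigcup C_R$ and $C_R$ is exactly the family of $\leqslant_{x^m}$-left-closed subsets of $x^m$. I would establish this first, directly from \autoref{def:conf_str}: since $x^m$ is the greatest element of $(C_R,\subseteq)$, \ref{FiniteCompleteness} (taking $x^m$ as the common upper bound) makes $C_R$ closed under unions, and then \ref{Stability} makes it closed under intersections; hence each principal ideal $\lfloor e\rfloor\coloneqq\{d\in x^m\mid d\leqslant_{x^m}e\}=\bigcap\{z\in C_R\mid e\in z\}$ is a configuration, every left-closed $Y$ equals $\bigcup_{e\in Y}\lfloor e\rfloor\in C_R$, and conversely every configuration is left-closed by the very definition of $\leqslant_{x^m}$. (This is essentially the content later isolated as \autoref{lem:enc_mem_2}, but arguing axiomatically here avoids a forward dependency.)

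For the left-to-right implication, let $f=(f_E,f_L,f_C,f_{\iden})$ be an isomorphism with $f_L=\id$. By \autoref{rem:mor-in-catics-notation} it is determined by the bijection $f_E$, and $f_C$ is an order-isomorphism of $(C_1,\subseteq)$ onto $(C_2,\subseteq)$, so it sends the greatest element $x_1^m$ to $x_2^m$. Thus $f_E\frestr{x_1^m}$ is a bijection onto $x_2^m$; it is label-preserving since $\labl_2(f_E(e))=f_L(\labl_1(e))=\labl_1(e)$, and it is order-preserving (indeed order-reflecting) because $f_C$ maps the subconfigurations of $x_1^m$ onto exactly those of $x_2^m$ while $d\leqslant_x e$ is characterised by membership in the subconfigurations of $x$. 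Hence $f_E\frestr{x_1^m}$ is the desired \lop bijection between $x_1^m$ and $x_2^m$.

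For the converse, let $g\colon x_1^m\to x_2^m$ be a \lop bijection, viewed as a label-preserving order-isomorphism between the two finite generating posets; it then restricts to a bijection between their families of left-closed subsets. Set $f_E\coloneqq g\colon E_1=x_1^m\to x_2^m=E_2$, $f_L\coloneqq\id$ (legitimate because, by fullness, $L_i$ is the set of labels occurring in $E_i$, which $g$ identifies), $f_C(x)\coloneqq\{g(e)\mid e\in x\}$, and $f_{\iden}(\iden_1(e))\coloneqq\iden_2(g(e))$; this last assignment is well defined and a bijection $I_1\to I_2$ because $\iden_1$ and $\iden_2$ are bijections onto $I_1$ and $I_2$ (injectivity is \autoref{lem:id_unique}, surjectivity is fullness). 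By the structural description of the first paragraph, $f_C$ is then a bijection $C_1\to C_2$, so $f=(f_E,f_L,f_C,f_{\iden})$ is an isomorphism $\encmp{R_1}\iso\encmp{R_2}$ with $f_L=\id$. The one genuinely delicate step is the structural description of $\encmp{R}$ in the first paragraph; once it is in place, both directions are routine bookkeeping.
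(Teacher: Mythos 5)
Your proof is correct in substance but takes a genuinely different route from the paper's. The paper proves the forward direction much as you do (restricting the isomorphism to the maximal configurations), but for the converse it verifies one by one that the image of each configuration satisfies \ref{Finiteness}, \ref{CoincidenceFreeness}, \ref{FiniteCompleteness} and \ref{Stability}, and for the \ref{Stability} case it invokes the second item of \autoref{lem:transition_identified} to peel off maximal events via backward transitions --- which is exactly the dependency that forces the paper to prove the two halves of \autoref{lem:lop_preserve_conf} and \autoref{lem:transition_identified} in a carefully interleaved order (see the footnote in the proof of the latter). Your structural characterisation of \(\encmp{R}\) as the left-closed subsets of \((x^m,\leqslant_{x^m})\), derived purely from \ref{FiniteCompleteness} and \ref{Stability} plus finiteness of memories, buys you a membership argument (\(f_C(x)\in C_2\) because it is left-closed) rather than an axiom-satisfaction argument, and it removes the reliance on the operational correspondence entirely; it also makes explicit the fact, used but never isolated by the paper, that \(C_R\) is closed under unions and intersections. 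One caveat you share with the paper rather than diverge on: in the converse direction you ``view'' the \lop bijection as an order-\emph{isomorphism}, and the paper likewise asserts that ``\(f^{-1}\) is \lop as well''; neither step follows from \autoref{def:label-order-functions} as literally stated, since order-preservation there is only one-directional (an order-preserving bijection from an antichain onto a chain is \lop but not order-reflecting, and maps no family of left-closed sets onto the other). Both proofs implicitly read \lop bijections as isomorphisms of labelled posets, as is standard in the \HPB literature, so this is a remark on the definition rather than a defect of your argument relative to the paper's.
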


\begin{proof}
	Let \(f:\encmp{R_1}\to\encmp{R_2}\) be an isomorphism: since, by hypothesis, \(f_L = \id\), we can use \autoref{rem:mor-in-catics-notation} to take \(f\) to be fully determined by its function between events \(f_E\), that we also write \(f\).
	Then \(f:x_1^m\to x_2^m\) is by definition a label-preserving bijection, as instantiating \autoref{def:cat_lcs} with \(f_L = \id\) precisely gives the condition of \autoref{def:label-order-functions}.
	For two events \(e,e'\) in \(\encmp{R_1}\), if \(e<_{x_1^m} e'\) it follows that for all \(x_1\) in \(\encmp{R_1}\) such that \(e'\in x_1\) then \(e\in x_1\). From \(f\) an isomorphism we have that \(f^{-1}:\encmp{R_2}\to\encmp{R_1}\) is a well defined morphism. As \(f^{-1}\) preserves configurations it follows that for all \(x_2\) in \(\encmp{R_2}\), there exists \(x_1\) in \(\encmp{R_1}\) such that \(f(x_1) = x_2\). Then for all \(x_2\), such that \(f(e')\in x_2\) it implies that \(e'\in f^{-1}(x_2)\) and \(e\in f^{-1}(x_2)\) and finally, \(f(e)\in x_2\). Then \(f(e)<_{x_2^m} f(e')\), as for all \(x_2\), \(x_2\subseteq x_2^m\), since \(\encmp{R_2}\) is a poset (\autoref{lem:prefix_enc}).
	Similarly we proceed for the concurrent events, and obtain that \(f\) is an order-preserving bijection on top of being label-preserving, hence that it is \lop.

	For the reverse direction, let \(f:x_1^m\to x_2^m\) be a \lop bijection. All events in \(\encmp{R_1}\) are present in \(x_1^m\), since it is the maximal element in \(\encmp{R_1}\), and therefore \(f:\encmp{R_1}\to\encmp{R_2}\) is a bijection on events.
	As \(f\) is label-preserving, it is the identity on the labels, and we can study only this mapping on events thanks to \autoref{rem:mor-in-catics-notation}.
	Remains to show that \(f\) preserves configurations, that is, for all \(x_1\subseteq x_1^m\), \(f(x_1) =\{f(e) \setst e \in x_1\}\) is a configuration in \(\encmp{R_2}\). In other words, \(f(x_1)=x_2\) must satisfies the properties of~\autoref{def:conf_str}.
	\begin{itemize}
		\item \ref{Finiteness} follows from the fact that all configurations in the encoding of an RCCS memory are finite.
		\item For \ref{CoincidenceFreeness} let us first note that there is a unique and finite maximal configuration \(x_2^m\) in which all distinct events are either causal or concurrent.
		      Let us consider \(e_2\neq e_2'\) two events in \(x_2\). Since \(e_2\neq e_2'\) then either
		      \begin{enumerate*}
			      \item \(e_2\leqslant_{x_2^m} e_2'\) (or \(e_2\leqslant_{x_2^m} e_2'\) but this is similar), or \label{lem:lop_preserve_conf-item1}
			      \item \(e_2\co_{x_2^m} e_2'\). \label{lem:lop_preserve_conf-item2}
		      \end{enumerate*}
		      As \(x_2^m\) is the unique maximal configuration, the relations above hold for \(x_2\) as well.
		      In the case \autoref{lem:lop_preserve_conf-item1}, as \(e_2\neq e_2'\) then \(e_2'\not\leqslant_{x_2} e_2\). We apply the definition of causality (\autoref{def:causality}) to obtain the configuration \(z\subseteq x_2\) where \(e_2\in z\) and \(e_2'\notin z\) as required by \ref{CoincidenceFreeness}.
		      In the case \autoref{lem:lop_preserve_conf-item2}, we use the definition of concurrency, which implies that \(e_2\not\leqslant_{x_2} e_2'\) and \(e_2'\not\leqslant_{x_2} e_2\). Then there exists \(z\subseteq x_2\) such that \(e_2\in z\) and \(e_2'\notin z\) and moreover, there exists \(z'\subseteq x_2\) such that \(e_2'\in z\) and \(e_2\notin z\).
		\item To show \ref{FiniteCompleteness}, it suffices to note that there exists a unique maximal configuration \(x_2^m\) which can be an upper bound for the union of any subset of configurations in \(\encmp{R_2}\).
		\item To show \ref{Stability} we have to show that \(\forall x_2,x_2'\in C_2\) such that \(\exists x_1,x_1'\in C_1\), with \(x_1 = f^{-1}(x_2)\) and \(x_1' = f^{-1}(x_2')\), \(x_2\cup x_2'\in C_2 \implies x_2\cap x_2' \in C_2\).
		      We will show that \(x_2\cap x_2' = f(x_1 \cap x_1')\) is a configuration in \(\encmp{R_2}\) by exploiting the causality and concurrency relations in \(x_1 \cap x_1'\).
		      As \(\encmp{R_1}\) and \(\encmp{R_2}\) are both posets by \autoref{lem:prefix_enc}, we can w.l.o.g consider \(x_1'\subseteq x_1\) and \(x_1 = x_1^{m}\).
		      We can assume, for \(i \in \{1,2\}\), that \(\encmp{R_i} \neq \confzero\) (otherwise the property would be trivial).
		      Hence, there is a backward transition \(R_i \bwlts{j}{\alpha} S_i\).
		      Now, by the second item of \autoref{lem:transition_identified} we have that there exists a maximal event \(e_1^1\) in \(x_1\) and from \(\encmp{R_1}\crestr{\{e_1^1\}}\subset \encmp{S_1}\), \(x_1{\setminus} e_1\) is a configuration in \(\encmp{R_1}\). We can re-use the same reasoning for \(x_1{\setminus} \{e_1\}\) in \(\encmp{S_1}\crestr{\{e_1^2\}}\) with \(e_1^2\) maximal and \(\encmp{R_1}\crestr{\{e_1^1, e_1^2\}}\subset\encmp{S_1}\crestr{\{e_1^1\}}\subset \encmp{S_1}\). We have then a sequence of events \(e_1^1, \cdots e_1^n\) such that \(x_1{\setminus} \{e_1,\cdots, e_n\} = x_1 \cap x_1'\).

		      Let us consider \(e_2\neq e_2'\) two events in \(x_2\) and as \(f\) is a bijection, let \(e_1\neq e_1'\) be two events in \(f^{-1}(x_2) = x_1\) such that \(f^{-1}(e_2)=e_1\) and \(f^{-1}(e_2')=e_1'\).
		      If \(e_2\neq e_2'\) then either \(e_2\leqslant_{x_2^m} e_2'\) or \(e_2\co_{x_2^m} e_2'\). It follows that either \(e_1\leqslant_{x_1^m} e_1'\) or \(e_1\co_{x_1^m} e_1'\), respectively, since \(f^{-1}\) is \lop as well. As \(x_1^m\), \(x_2^m\) are the unique maximal configurations the relations above hold for \(x_2\) and \(x_1\) as well.

		      We therefore have that \(f(e_1^1)\) is the maximal event in \(x_2\). Again we use the second item of \autoref{lem:transition_identified} and as above we obtain the sequence of events \(f(e_1^1), \cdots f(e_1^n)\) such that \(f(x_1){\setminus} f(e_1),\cdots, f(e_n) = f(x_1 \cap x_1')\) is a configuration in \(\encmp{R_2}\). \qedhere
	\end{itemize}
\end{proof}

For the reader familiar with event structures, a configuration \(x\) defines an event structure \((x, \leqslant_x, \labl)\).
The construction below mirrors the transformation from an event structure to a configuration structure~\cite{Winskel1995}.

\begin{definition}[Generation of a \(\iconf\)-structure from a configuration]
	\label{def:xdownarrow}
	Given \(\iconf = (E, C, L, \labl, I, \iden)\), for \(x \in C\), the \emph{\(\iconf\)-structure generated by \(x\)} is \(x{\downarrow} = (x, \{ y \setst y\in C, y\subseteq x\}, \{ a \setst \exists e \in x_{R}, \labl(e) = a\}, \labl\frestr{x}, \{ i \setst \exists e \in x_{R}, \iden(e) = i\}, \iden\frestr{x})\).
\end{definition}

\begin{restatable}{lemma}{encmem}
	\label{lem:enc_mem_2}
	The \(\iconf\)-structure \(\encmp{R}\) is isomorphic to \(x_{R}{\downarrow}\), where \(\enc{R} = (\enc{\orig{R}}, x_R)\) is the encoding previously defined~\cite{Aubert2016jlamp}.
\end{restatable}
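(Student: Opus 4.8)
I would prove the isomorphism by tracking the two encodings along a forward-only trace from $\orig R$ to $R$ and matching the events they create transition by transition. Since $R$ is reachable, \autoref{def:coh} together with \cite[Lemma~10]{Danos2004} gives a forward-only trace $\emptymem\rhd\orig R = R_0\fwlts{1}{\alpha_1}R_1\cdots\fwlts{k}{\alpha_k}R_k = R$, which \withoutlog uses the identifiers $1,\dots,k$. The two ingredients are: (i) the operational correspondence \autoref{lem:transition_identified}, by which transition $j$ adds to $\encmp{R_j}$ exactly one event $e_j$, which is maximal, with $\iden(e_j)=j$, $\labl(e_j)=\alpha_j$, and $\encmp{R_j}\crestr{\{e_j\}}\iso\encmp{R_{j-1}}$; and (ii) the analogous correspondence for the CCS-style encoding \cite{Aubert2016jlamp} recalled in the footnote of \autoref{sec:op-corr}, by which the same trace determines a strictly increasing chain $\emptyset = x_{R_0}\subsetneq\cdots\subsetneq x_{R_k}=x_R$ of configurations of $\enc{\orig R}$ with $x_{R_j}=x_{R_{j-1}}\cup\{e'_j\}$ and $\labl(e'_j)=\alpha_j$. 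Giving $e'_j$ the identifier $j$ is precisely the labelling that turns $x_R{\downarrow}$ (\autoref{def:xdownarrow}) into the $\iconf$-structure intended in the statement, as illustrated in \autoref{ex:proc2}.

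\textbf{Reducing to a comparison of maximal configurations.} By \autoref{lem:prefix_enc} the $\iconf$-structure $\encmp{R}$ is a poset with a unique maximal configuration $x^m$, and by construction $x_R{\downarrow}$ has $x_R$ itself as its unique maximal configuration. I would first show that each of these structures is \emph{freely generated by its maximal configuration}: its set of configurations is exactly the set of subsets of the maximal one that are down-closed for the induced causality order $\leqslant$. For $\encmp{R}$ this follows by iterating the last clause of \autoref{lem:transition_identified} — every maximal event can be removed, and, invoking the reordering of concurrent backward transitions recalled in \autoref{sec:trace}, they can be removed in any order that respects causality. For $x_R{\downarrow}$ it follows from the CCS operational correspondence together with the reordering of concurrent \emph{forward} transitions of \autoref{sec:trace}: every $\leqslant_{x_R}$-down-closed subset of $x_R$ is reached by some sub-trace and is therefore a configuration of $\enc{\orig R}$. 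Consequently it suffices to exhibit a bijection $\beta\colon x^m\to x_R$ on events that preserves labels, preserves identifiers, and preserves \emph{and reflects} the causality orders; such a $\beta$ extends canonically to the required isomorphism $\encmp{R}\iso x_R{\downarrow}$, and the candidate is $\beta(e_j)=e'_j$, which preserves labels ($\labl(e_j)=\alpha_j=\labl(e'_j)$) and identifiers ($\iden(e_j)=j=\iden(e'_j)$) by construction; that it is a bijection uses that, on the $\iconf$-side, a synchronization contributes a single $\tau$-labelled event (the solo copies being re-identified with $\bot_k$ and discarded by \autoref{lem:equivalence_encoding}'s style of argument in the parallel composition), matching the single event realized on the CCS side.

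\textbf{Matching causality, and the main obstacle.} What remains is the equivalence $e_j\leqslant_{x^m}e_\ell \iff e'_j\leqslant_{x_R}e'_\ell$, which I would route through causality on transitions. On the RCCS side, $e_j<_{x^m}e_\ell$ holds exactly when transition $j$ is a (not necessarily direct) cause of transition $\ell$ in the trace: this is essentially the analysis already carried out inside the proof of the last clause of \autoref{lem:transition_identified}, relating maximality of an event to concurrency of its transition with all later ones. On the CCS side, $e'_j<_{x_R}e'_\ell$ holds exactly when transition $j$ causes transition $\ell$ in the underlying CCS trace, by the standard correspondence between causality in configuration structures and in CCS traces \cite{Boudol1987,Boudol1988}. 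Since RCCS is a conservative extension of CCS and the two notions of trace causality coincide (\autoref{sec:trace}), the equivalence follows, and with it the lemma. I expect the bookkeeping in this last step to be the main obstacle: making precise the dictionary between $\leqslant$ in a memory encoding, causality of RCCS transitions (defined by prefix ordering on memory stacks), and causality of CCS transitions, and in particular checking uniformly the cases of forks and synchronizations (one $\tau$-event on the $\iconf$-side, a coupled pair of memory events on the RCCS-side, a single event on the CCS-side). Throughout, invariance of both encodings under structural equivalence — \autoref{lem:equivalence_encoding} for $\encmp{\cdot}$, and the corresponding fact for $\enc{\cdot}$ — is what lets us pick convenient representatives of $R$ and of the trace.
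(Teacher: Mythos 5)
Your proposal is correct and follows essentially the same route as the paper's proof: a forward-only trace from \(\orig{R}\) to \(R\), the two operational correspondences (\cite[Lemma 6]{Aubert2016jlamp} on the CCS side and \autoref{lem:transition_identified} on the memory side) to pair off one event per transition, reduction of the isomorphism to a \lop bijection between the unique maximal configurations (the paper routes this through \autoref{lem:lop_preserve_conf} where you argue directly that both structures are determined by their maximal configuration), and order-preservation via the correspondence between concurrency of transitions in the trace and concurrency of events in each encoding. The only cosmetic difference is that the paper packages the construction as an induction on the trace, extending the bijection one event at a time, rather than defining \(\beta\) globally and verifying its properties afterwards.
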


\begin{proof}
	As \(R\) is reachable there exists a forward-only trace~\cite[Lemma 10]{Danos2004} \(\theta:\orig{R}\fwlts{i_1}{\alpha_1}\cdots\fwlts{i_n}{\alpha_n} R\). %
	We reason then by induction on the trace \(\theta\). As the base case is similar, we only treat the inductive case.

	By induction, we have that the result holds for \(R\), that is let \(h:\encm{R} \to x_R {\downarrow}\) be an isomorphism. As both \(\encm{R}\) and \(x_R {\downarrow}\) result from encoding the same process \(R\), we can consider w.l.o.g. that \(h_L = \id\) and \(h_{\iden}=\id\). We use an argument similar to \autoref{lem:lop_preserve_conf} to show that instead of an isomorphism, we can reason on \(h:x^m \to x_R\) as a \lop bijection, where \(x^m\) is the maximal configuration of \(\encm{R}\), and \(x_R\) is the maximal configuration in \(x_R {\downarrow}\).

	Let \(R\fwlts{i}{\alpha} S\) be an RCCS transition. We have to show that \(\encmp{S}\iso x_S{\downarrow}\), where \(\enc{S} = (\enc{\orig{R}}, x_S)\).
	By the operational correspondence between \(R\) and \((\enc{\orig{R}}, x_R)\)~\cite[Lemma 6]{Aubert2016jlamp} there exists an event \(e\) in \(\enc{\orig{R}}\) such that
	\[
		R\fwlts{i}{\alpha} S \implies (\enc{\orig{R}},x_R)\redl{e}(\enc{\orig{R}},x_R\cup\{e\}),\text{ with }\labl(e)=\alpha\text{ and }\enc{S} = (\enc{\orig{R}},x_R\cup\{e\}).
	\]
	By the operational correspondence between \(R\) and \(\encmp{R}\) (\autoref{lem:transition_identified}) there exists a memory event \(e_m\) such that \(\encmp{R} = \encmp{S}\crestr{\{e_m\}}\) with \(e_m\) maximal in \(\encmp{S}\), \(\labl_S(e_m)= \alpha\) and \(\iden_S(e_m) = i\). As there is only one maximal configuration in \(\encmp{S}\) and as \(\encmp{R} = \encmp{S}\crestr{\{e_m\}}\), it follows that \(x^m\cup\{e_m\}\) is the maximal configuration in \(\encmp{S}\).

	We have then to show that if \(h:x^m \to x_R\) is \lop bijection then so is \(h' = h \cup \{e_m \mapsto e\}:x^m \cup\{e_m\}\to x_R\cup \{e\}\). The label preserving part follows from \(h_L = \id\) and \(\labl_S(e_m)= \labl(e) =\alpha\). To show that \(h'\) is order preserving we exploit again the operational correspondences between the encoding of RCCS terms and RCCS, and between the encoding of memories and RCCS.
	As a corollary from \cite[Lemma 6]{Aubert2016jlamp} we get that transitions are concurrent in the forward-only trace \(\theta:\orig{R}\fwlts{i_1}{\alpha_1}\cdots\fwlts{i_n}{\alpha_n} R\fwlts{i}{\alpha} S\) iff the corresponding events are concurrent in \(\enc{S}\), that is in \(x_S\). Similarly, from \autoref{lem:transition_identified}, we get that transitions are concurrent in the forward-only trace \(\theta\) iff their corresponding events are concurrent in \(\encmp{S}\), and consequently in \(x^m \cup\{e_m\}\). Finally we can conclude that \(h'\) is order preserving.

	Lastly we can lift \(h'\) from a \lop bijection on configurations to an isomorphism on \(\iconf\)-structures by endowing the configurations with the identifier set \(\ids(S)\) and \(h'_{\iden}=\id\).
\end{proof}

\subsubsection{Proofs of Theorems \ref{th:hpb_rccs} and \ref{thm:def_hpb_coincide}}
\label{ssec:main_proofs}

\hpbrccs*
\begin{proof}Let us prove the \HHPB case, the other case being similar, and actually simpler.
	\begin{description}
		\item[\(\Rightarrow\)]
		      Let \(\rel_{\text{RCCS}}\) be a \HHPB between \(P_1\) and \(P_2\) (\autoref{def:hpb_rccs}).
		      We show that the following relation
		      \[
			      {\rel} =
			      \begin{multlined}[t]
				      \{(x_1,x_2,f) \setst x_1\in\enc{P_1}, x_2\in\enc{P_2}, \exists R_1, R_2\text{ \st \ }\orig{R_1}=P_1,\orig{R_2} = P_2,\\
				      (R_1,R_2,F)\in \rel_{\text{RCCS}} \text{ and }\enc{R_1}=(\enc{P_1},x_1), \enc{R_2}=(\enc{P_2},x_2), f = \functoric(F)\}
			      \end{multlined}
		      \]
		      is a \HHPB between \(\enc{P_1}\) and \(\enc{P_2}\).

		      We first show that for any tuple \((x_1,x_2,f)\in\rel\), \(f:x_1\to x_2\) is a \lop bijection.
		      For a tuple \((R_1,R_2,F)\in\rel_{\text{RCCS}}\) with \(F:\encmp{R_1}\to\encmp{R_2}\) an isomorphism, we get, by~\autoref{lem:lop_preserve_conf}, that we can instead consider \(F:x_1^m\to x_2^m\) to be a \lop bijection between the maximal configurations \(x_1^m,x_2^m\). The functor \(\functoric\) maps isomorphisms on identified structures to isomorphisms on configuration structures, by~\autoref{lem:functors}, and therefore \(f = \functoric(F):\functoric(x_1^m)\to\functoric(x_2^m)\).

		      Moreover, \(x_1\) and \(x_2\) are the maximal configurations in \(x_1{\downarrow}\) and \(x_2{\downarrow}\), by~\autoref{def:xdownarrow}. By~\autoref{lem:enc_mem_2}, \(\encmp{R_i} \iso x_i{\downarrow}\) which implies that \(\functoric(x_i^m) = x_i\), for \(i\in\{1,2\}\). Thus \(f = \functoric(F)\) is well defined and indeed, a \lop bijection.

		      The rest of the proof follows the same structure as in~\cite[Proposition 6]{Aubert2016jlamp}.
		      Note that \((\emptyset,\emptyset,\emptyset)\in\rel\): indeed \((\emptymem\rhd P_1,\emptymem\rhd P_2,\emptyset)\in \rel_{\text{RCCS}}\) and \(\enc{\emptymem\rhd P_i} = (\enc{P_i}, \emptyset)\), for \(i\in\{1,2\}\).

		      Let us suppose that \((x_1,x_2,f)\in\rel\) for \(\enc{R_i}=(\enc{P_i},x_i)\), for \(i\in\{1,2\}\) and \(f:x_1\to x_2\) a \lop bijection.

		      To show that \(\rel\) is a \HHPB we have to show that if \(x_1 \redl{e_1} y_1\) (or \(x_1\revredl{e_1} y_1\)) then there exists \(y_2\) such that \(x_2 \redl{e_1} y_2\) (or \(x_2\revredl{e_2} y_2\) respectively) and such that \((y_1, y_2, f')\in\rel\) for some \(f' = f\cup\{e_1\mapsto e_2\}\).

		      Let \(x_1\redl{e_1}y_1\), hence by definition, \(y_1 = x_1\cup\{e_1\}\). From the correspondence between RCCS and their encodings~\cite[Lemma 6]{Aubert2016jlamp}, it follows that \(R_1\fwlts{i}{\alpha} S_1\) such that \(\enc{S_1} = (\enc{P_1},y_1)\).

		      As \((R_1,R_2,F)\in \rel_{\text{RCCS}}\) and as \(R_1\fwlts{i}{\alpha} S_1\), it follows that there exists a transition \(R_2\fwlts{j}{\alpha} S_2\) with \(F = F'\frestr{\encmp{R_1}}\) and \((S_1,S_2,F')\in\rel_{\text{RCCS}}\).

		      Again from the correspondence between \(R_2\) and \(\enc{R_2}\) we have that \(x_2\redl{e_2}y_2\) such that \(y_2=x_2\cup\{e_2\}\) and \(\enc{S_2} = (\enc{P_2}, y_2)\). Then we show that \((y_1,y_2,\functoric(F'))\in \rel\). The only missing argument is that \(\functoric(F') = f \cup\{e_1\mapsto e_2\}\). Note that, again using~\autoref{lem:lop_preserve_conf}, we can consider \(F:x_1^m\to x_2^m\) and \(F':y_1^m\to y_2^m\) to be \lop bijections on the maximal configurations of \(\encmp{R_1}, \encmp{R_2}\) and \(\encmp{S_1}, \encmp{S_2}\), respectively. From the definition of postfixing and parallel composition, in Definitions \ref{icat-op-def} and \ref{def:par-ics}, we have that \(y_1^m = x_1^m\cup\{e_1\}\) and \(y_2^m = x_2^m\cup\{e_2\}\).
		      Therefore \(F' = F\cup\{e_1\mapsto e_2\}\) by which we conclude.

		      We treat similarly the cases where \(x_2\) does a transition, or when the transitions are backwards.

		\item[\(\Leftarrow\)]
		      Let \(\rel_{\text{CONF}}\) be a \HHPB between \(\enc{P_1}\) and \(\enc{P_2}\).
		      We show that the following relation
		      \begin{align*}
			      {\rel} =
			      \begin{multlined}[t]
				      \{(R_1,R_2,F) \setst \orig{R_1}=P_1, \orig{R_2} = P_2, \enc{R_1}=(\enc{P_1},x_1),\enc{R_2}=(\enc{P_2},x_2),\\\text{ with }
				      (x_1,x_2,f)\in\rel_{\text{CONF}}, F=(f,f_{\iden}), \\
				      f_{\iden}(i) = j, \iden_1(e) = i, \iden_2(f(e)) = j, \text{for all }e\in x_1\}
			      \end{multlined}
		      \end{align*}
		      is a \HHPB between \(\emptyset\rhd P_1\) and \(\emptyset\rhd P_2\).

		      For \((R_1,R_2,F)\in\rel\), let us show that \(F\) is an isomorphism between \(\encmp{R_1}\) and \(\encmp{R_2}\). For \(f:x_1\to x_2\) a \lop bijection, and for two functions \(\iden_1:x_1\to I_1\) (from \(\encmp{R_1}\)) and \(\iden_2:x_2\to I_2\) (from \(\encmp{R_2}\)) then there exists a unique function \(f_{\iden}: I_1\to I_2\) such that \(f_{\iden}(\iden_1(e)) = \iden_2(f(e))\), for all \(e\in x_1\). This follows from \ref{eq:collision} in the definition of identified structures and from \(f\) being a bijection. We write then \(F = (f,f_{\iden}):(x_1\oplus\iden_1) \to (x_2\oplus\iden_2)\). Moreover, as in the first case, we derive that \(x_1, x_2\) are maximal in \(\encmp{R_1},\encmp{R_2}\), respectively. We use \autoref{lem:lop_preserve_conf} to conclude that \(F:\encmp{R_1}\to\encmp{R_2}\) is an isomorphism.

		      We have that \((\emptymem\rhd P_1,\emptymem\rhd P_2,\emptyset)\in \rel\) as \((\emptyset,\emptyset,\emptyset)\in\rel_{\text{CONF}}\) and \(\enc{\emptymem\rhd P_i} = (\enc{P_i}, \emptyset)\), for \(i\in \{1,2\}\).

		      We suppose now that \((R_1,R_2,F)\in\rel\), with \(F:\encmp{R_1}\to \encmp{R_2}\). It implies that \((x_1,x_2,f)\in\rel_{\text{CONF}}\) for \(\enc{R_i}=(\enc{P_i},x_i)\), \(i\in \{1,2\}\) and that \(F=(f,f_{\iden})\) with \(f:x_1\to x_2\) \lop.

		      To show that \(\rel\) is a \HHPB we have to show that if \(R_1\fwlts{i}{\alpha} S_1\) (or \(R_1\bwlts{i}{\alpha} S_1\)) then there exists \(S_2\) such that \(R_2\fwlts{j}{\alpha} S_2\) (or \(R_2\bwlts{j}{\alpha} S_2\) respectively) and such that \((S_1,S_2,F')\in\rel\) for some \(F'\).

		      Let \(R_1\fwlts{i}{\alpha} S_1\). We use again the correspondence between RCCS and their encodings~\cite[Lemma 6]{Aubert2016jlamp} from which we have that there exists \(e_1\) and \(y_1=x_1\cup\{e_1\}\) such that \(x_1\redl{e_1}y_1\) and \(\enc{S_1} = (\enc{P_1},y_1)\). As \((x_1,x_2,f)\in\rel_{\text{CONF}}\) it implies that there exists \(e_2\), \(y_2\) and \(f'=f\cup\{e_1\mapsto e_2\}\) such that \(x_2\redl{e_2}y_2\) and \((y_1,y_2,f')\in \rel_{\text{CONF}}\).

		      Again, from the correspondence between RCCS and configuration structures we have that, from \(x_2\redl{e_2}y_2\), there exists \(S_2\) such that \(R_2\fwlts{j}{\alpha} S_2\) with \(\enc{S_2} = (\enc{P_2}, y_2)\). From \(f'=f\cup\{e_1\mapsto e_2\}\) it easily follows that \(F'=(f', f'_{\iden}) = (f\cup\{e_1\mapsto e_2\}, f_{\iden}\cup\{i\mapsto j\})\).
		      We conclude therefore that \((S_1,S_2,F')\in \rel\).

		      Similarly we show the cases where \(R_1\) does a backward transition, or if \(R_2\) does a forward or backward transition. \qedhere
	\end{description}
\end{proof}

\bfhpbcoincide*

\begin{proof}

	\begin{description}
		\item[\(\Rightarrow\)]
		      \(P_1, P_2\) be two processes and let \(\rel\) be a \BF relation between them as in \autoref{def:hhpb_ident}. Then for any \((R_1,R_2,f) \in \mathcal{R}\), \(R_1,R_2\) are two RCCS processes and \(f:\ids(R_1)\to\ids(R_2)\) is a bijection on their identifiers.
		      Letting \(\encmp{R_i} = (E_i,C_i,L_i,\labl_i, I_i,\iden_i)\), for \(i\in\{1,2\}\), we show that the relation
		      \[
			      \mathcal{S} =\big\{(R_1,R_2,F) \setst (R_1,R_2,f)\in\rel, F(e_1) = e_2 \iff f(\iden_1(e_1)) = \iden_2(e_2) \big\}
		      \]
		      is a \HHPB relation as in~\autoref{def:hpb_rccs}.

		      Let us start by observing that \(F = (F_E, F_L, F_C, F_{\iden})\) is correctly defined as an isomorphism between \(\encmp{R_1}\) and \(\encmp{R_2}\) by our condition.
		      We start by letting \(F_{\iden} = f\), and observe that by \autoref{lem:id_unique}, no two events can have the same identifier in \(\encmp{R_1}\), and similarly for \(\encmp{R_2}\): hence having the \(F_{\iden}\) component is enough to define the \(F_E\) component, and they are both isomorphisms.
		      By \autoref{def:hhpb_ident}, no two identifiers can be paired by \(f\) unless they have the same label: hence, \(\labl_1(e) = \labl_2(F_E(e))\) iff \(F_L(\labl_1(e)) = \labl_2(F_E(e))\), which forces \(F_L = \id\).
		      We see that \(F\) respects the conditions of \autoref{def:cat_lcs}, and that by \autoref{rem:mor-in-catics-notation}, we can simply write \(F\) for \(F_E\) in the following.

		      We prove that \(\mathcal{S}\) is a \HHPB relation as in~\autoref{def:hpb_rccs} by induction on the processes \(R_1\) and \(R_2\), as any process reachable from \(P_1\) has to be in \(\rel\) (and in \(\mathcal{S}\)) and vice versa.
		      The base case \((\emptymem\rhd P_1,\emptymem\rhd P_2, \emptyset)\in \mathcal{S}\) is trivial.

		      Suppose that for \((R_1,R_2,f) \in\rel\), we have that \((R_1,R_2,F) \in\mathcal{S}\).
		      \begin{description}
			      \item[For forward transitions]
			            Now let \(R_1\fwlts{i}{\alpha}S_1\) for which we have that \(R_2\fwlts{j}{\alpha}S_2\) and \(g:\ids(S_1)\to\ids(S_2)\) is a bijection defined as \(f\) on \(\ids(R_1)\subset\ids(S_1)\) and \(g(i) = j\). Let \(e_1\in\encmp{S_1}\) such that \(\iden_1(e_1)=i\) and \(e_2\in\encmp{S_2}\) such that \(\iden_2(e_2)=j\). Then \(G = F\cup\{e_1\mapsto e_2\}\) is defined by \(g(\iden_1(e_1))=\iden_2(e_2)\), and thus \((S_1,S_2,G)\in\mathcal{S}\).

			            We show now that \(G\) is an isomorphism between \(\encmp{S_1}\) and \(\encmp{S_2}\). \(G\) preserves identifiers by definition; it preserves labels as \(\labl_1(e_1) = \labl_2(e_2) = \alpha\). Lastly we have to show that it preserves configurations. We can use~\autoref{lem:lop_preserve_conf} to show instead that \(G:x_1^m\to x_2^m\) is a \lop function, where \(x_1^m,x_2^m\) are the maximal configurations in \(\encmp{S_1}\) and \(\encmp{S_2}\), respectively. Also by~\autoref{lem:lop_preserve_conf} we have that \(F:y_1^m\to y_2^m\) is a \lop function on \(y_1^m, y_2^m\) maximal configurations in \(\encmp{R_1}\) and \(\encmp{R_2}\), respectively. From~\autoref{lem:transition_identified}, \(x_1^m{\setminus}\{e_1\}\in\encmp{R_1}\) and using~\autoref{lem:prefix_enc} we have that \(y_1^m = x_1^m\setminus\{e_1\}\). Therefore we have that \(G:x_1^m\to x_2^m\) is a \lop function on all events \(e\neq e_1\).

			            Let us suppose, by contradiction, that there exists \(e\) such that \(e_1\co_{x_1^m} e\) but that \(e_2\co_{x_2^m} F(e)\) does not hold. Take the maximal event \(e\) with such a property. Moreover, let \(F(e)=e' <_{x_2^m} e_2\) \withoutlog.

			            There exists at least one sequence of events \(e_1'\leqslant\cdots\leqslant e_n'\) such that \(e \leqslant_{x_1^m} e_1'\) and such that \(e_n'\) is maximal. For simplicity we suppose that there is only one such sequence (the general case uses the same reasoning). From~\autoref{lem:transition_identified}, we have that \(S_1\bwlts{\iden(e_n')}{\labl(e_n')}S_1'\). We have \(F(e)< F(e_1')<\cdots< F(e_n')\) and \(F(e_n')\) maximal, since \(F\) is \lop on all these events. As \((S_1,S_2,g)\in\rel\), we have also \((S_1',S_2',g')\in\rel\), where \(g'\) is defined as \(g\). We apply this reasoning until we reach the process \(T_1\) where \(e\) is maximal. Then there exists \(T_2\) such that \((T_1,T_2,g'')\) and \(g''\) is defined as \(g\) on the identifiers \(i\), \(j\) and \(\iden(e)\).

			            Then \(e\) is maximal however \(e'\) is not and we reach a contradiction: from~\autoref{lem:transition_identified} \(S_1\bwlts{\iden(e)}{\labl(e)}S_1'\) but \(S_2\) cannot backtrack on \(e'\). It implies then that \(G:x_1^m\to x_2^m\) is a \lop function on all events in \(x_1^m\).

			      \item[For backward transitions]
			            Now let \(R_1\bwlts{i}{\alpha}S_1\) for which we have that \(R_2\bwlts{j}{\alpha}S_2\) and \(g:\ids(S_1)\to\ids(S_2)\) is a bijection defined as \(f\) on \(\ids(S_1)\subset\ids(R_1)\), and \((S_1, S_2, g) \in \rel\). %
			            Then we have \((S_1,S_2,G)\in\mathcal{S}\) with \(G\) is defined by \(g(\iden_1(e_1))=\iden_2(e_2)\).
			            We only have to show that \(G\) is the restriction of \(F\) to \(\encmp{S_1}\) and a bijection.
			            The latter follows from the former, as \(F\) is itself a bijection, and that \(G = F \frestr{\encmp{S_1}}\) follows from \(g\) being a restriction of \(f\) and the way \(G\) and \(F\) are defined.

		      \end{description}
		\item[\(\Leftarrow\)]
		      Let \(P_1, P_2\) be two processes and let \(\rel\) be a \HHPB relation between them as defined in \autoref{def:hpb_rccs}. Then for any \(R_1,R_2\) two RCCS processes and \(f:\encmp{R_1} \to \encmp{R_2}\) an isomorphism such that \((R_1,R_2,f) \in\rel\), \(f\) is also a bijection on the event identifiers of the memories of \(R_1,R_2\). The relation
		      \[
			      \mathcal{S} =
			      \begin{multlined}[t]
				      \big\{(R_1,R_2,F) \setst (R_1,R_2,f)\in\rel\text{ and } \\
				      F(i) = j \iff f(e_1)=e_2, \iden_1(e_1) =i, \iden_2(e_2) = j \big\}
			      \end{multlined}
		      \]
		      is a \BF relation as defined in \autoref{def:hhpb_ident}. As above, we are using the operational correspondence of~\autoref{lem:transition_identified}. \qedhere
	\end{description}
\end{proof}

\begin{corollary}
	\label{cor:fw-with-id}
	The relation obtained by considering only (\ref{bf1}--\ref{bf2}) in the definition of \BF (\autoref{def:hhpb_ident}) is equal to \HPB on CCS terms (\autoref{def:hpb_rccs}).
\end{corollary}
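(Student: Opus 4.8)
The plan is to prove the corollary in the form ``$P_1$ and $P_2$ are related by the forward fragment of \BF (conditions (\ref{bf1})--(\ref{bf2}) only) iff they are \HPB in the sense of \autoref{def:hpb_rccs}'', and then note that the \HPB case of \autoref{th:hpb_rccs} turns the right-hand side into ``$\enc{P_1}$ and $\enc{P_2}$ are \HPB as configuration structures'', which is the characterization announced in the text. Both directions are obtained by restricting the relation translations of the proof of \autoref{thm:def_hpb_coincide} to the forward conditions.

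For the direction \HPB $\Rightarrow$ forward-\BF I would start from an \HPB relation $\rel$ on RCCS, whose triples carry isomorphisms $F$ of $\iconf$-structures. By \autoref{lem:id_unique} such an $F$ is determined by, and determines, the bijection $F_\iden$ it induces on identifiers, so $\{(R_1,R_2,F_\iden) \setst (R_1,R_2,F)\in\rel\}$ is a natural candidate. Its closure under (\ref{bf1})--(\ref{bf2}) follows from closure of $\rel$ under (\ref{hpb_rccs1})--(\ref{hpb_rccs2}) together with \autoref{lem:transition_identified}: a transition $R_1\fwlts{i}{\alpha}S_1$ adds to $\encmp{S_1}$ a maximal event of identifier $i$, its answer $R_2\fwlts{j}{\alpha}S_2$ adds one of identifier $j$, and since the extended isomorphism $g$ agrees with $F$ off that event it must map the former to the latter, so $g_\iden = F_\iden\cup\{i\mapsto j\}$; the base triple matches since $\encmp{\emptymem\rhd P}$ has no events.

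For the converse I would take a forward-\BF relation $\rel$ --- \withoutlog the least one, so that every triple $(R_1,R_2,f)$ arises along a pair of matched forward traces $\emptymem\rhd O_{R_1}\tfbwlts R_1$ and $\emptymem\rhd O_{R_2}\tfbwlts R_2$ --- and define $F$ from $f$ by $F(e)=e'$ iff $f(\iden_1(e))=\iden_2(e')$, using \autoref{lem:id_unique} and \autoref{lem:enc_mem_2} to read the events of $\encmp{R_i}$ off $x_{R_i}$. Label-preservation of $F$ holds because along matched traces only equally-labelled transitions have their identifiers paired, and identifier-preservation holds by construction; what remains is to show $F$ is an isomorphism of $\iconf$-structures, equivalently by \autoref{lem:lop_preserve_conf} that the induced bijection on maximal configurations is order-preserving. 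Granting this, checking that $\{(R_1,R_2,F)\}$ is an \HPB relation on RCCS is routine (again via \autoref{lem:transition_identified}), and a final appeal to \autoref{th:hpb_rccs} closes the argument.

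The main obstacle is precisely the order-preservation of $F$. In the proof of \autoref{thm:def_hpb_coincide} this step was discharged by \emph{backtracking} inside $\rel$, a move unavailable for a purely forward relation; it must be replaced by an argument through the correspondence between causality in forward traces and causality in $\iconf$-structures set up in the proofs of \autoref{lem:transition_identified} and \autoref{lem:enc_mem_2} (resting on the square lemma and trace equivalence of \cite{Danos2004}): if some transition $t$ of the $R_1$-trace causes a later one $t'$ while their matches $s,s'$ in the $R_2$-trace are concurrent, the square lemma reorders the $R_2$-trace so that $s'$ fires from the state reached before $s$, and (\ref{bf2}) then forces $R_1$ to answer this $\labl(t')$-move from a state that, since $t<t'$, cannot yet enable it --- pushing the identifier bijection through the ensuing forward game is meant to yield a contradiction. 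This is the only genuinely new piece of work; everything else reuses the template and lemmas of \autoref{sec:lifting}.
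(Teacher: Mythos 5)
You have read the situation more carefully than the paper itself does: the paper's entire proof of this corollary is the one-line claim that it is immediate from \autoref{thm:def_hpb_coincide} because that proof \enquote{does not mix forward and backward moves}. Your diagnosis that this is not actually true of the written proof---the order-preservation of \(F\) in the forward case of the \(\Rightarrow\) direction is discharged by backtracking inside \(\rel\), \ie by an appeal to the closure conditions (\ref{bf3}--\ref{bf4})---is exactly right, and it is the step the paper silently skips. Your first direction (restricting the \(\Leftarrow\) translation to get \HPB \(\Rightarrow\) forward-\BF) is fine. The problem is that your replacement argument for the converse does not close the hole, and no argument can.

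The reordering idea fails for the usual reason that forward games are blind to causality: after the square lemma lets \(s'\) fire before \(s\) on the \(P_2\) side, condition (\ref{bf2}) only obliges \(P_1\) to answer with \emph{some} transition carrying the same label and a fresh identifier---nothing forces it to answer with \(t'\) itself, and the requirement \(g = f \cup \{i \mapsto j\}\) never constrains forward play, since the LTS side conditions already guarantee that both new identifiers are fresh, so \(g\) is automatically a bijection. Concretely, take \(P_1 = a \mid b\) and \(P_2 = a.b + b.a\): every forward challenge on either side is matched label-for-label, so the least relation containing \((\emptymem\rhd P_1, \emptymem\rhd P_2, \emptymem)\) and closed under (\ref{bf1}--\ref{bf2}) relates them; yet after both actions have fired, \(\encmp{(\mem{1,a,0}.\fork\rhd 0)\mid(\mem{2,b,0}.\fork\rhd 0)}\) has four configurations while \(\encmp{\mem{2,b,0}.\mem{1,a,b.a}\rhd 0}\) has three, so no isomorphism exists and the processes are not \HPB in the sense of \autoref{def:hpb_rccs} (nor, via \autoref{th:hpb_rccs}, are \(\enc{P_1}\) and \(\enc{P_2}\)). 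In other words the forward fragment of \BF collapses to ordinary strong bisimulation, and the inclusion \enquote{forward-\BF implies \HPB} is false. What you flagged as \enquote{the only genuinely new piece of work} is not a missing lemma but a genuine counterexample, and it affects the paper's own one-line proof just as much as your proposal.
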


\begin{proof}
	This is an immediate consequence of \autoref{thm:def_hpb_coincide}, as the proof does not mix forward and backward moves, it suffices to consider only the forward moves to obtain this result.
\end{proof}
\clearpage{}%

\end{document}